\definecolor{mygrey}{gray}{0.35}
\definecolor{myblue}{rgb}{0.2,0.2,0.8}
\definecolor{myzard}{cmyk}{0,0,0.05,0}
\definecolor{mywhite}{rgb}{1,1,1}
\definecolor{myred}{rgb}{0.9,0.1,0.}
\definecolor{goldenyellow}{rgb}{1.0, 0.87, 0.0}
\definecolor{cornellred}{rgb}{0.7, 0.11, 0.11}
\definecolor{textgreen}{RGB}{25,160,34}
\definecolor{dartmouthgreen}{rgb}{0.05, 0.5, 0.06}
\newtheoremstyle{customStyle1}  
{0pt}       
{0pt}       
{\normalfont}   
{\parindent}        
{\em}  
{. --}   	 
{.5em}       
{\thmname{#1}\thmnumber{ #2}\thmnote{ (#3)}}  
\theoremstyle{customStyle1}
 \titleformat{\section}[runin]{\itshape}{\thesection}{1em}{}[.--]
\titlespacing*{\section }{\parindent}{1ex}{1ex}[0pt]
\newcounter{theorems}
\newtheorem{thm}[theorems]{Theorem}
\newtheorem{prop}[theorems]{Proposition}
\newtheorem{corollary}[theorems]{Corollary}
\newtheorem{defin}[theorems]{Definition}
\newtheorem{lem}[theorems]{Lemma}
\newtheorem{lemma}[theorems]{Lemma}
\newtheorem{ex.}{Example}[theorems]
\newtheorem*{cor*}{Corollary}
\newtheorem*{thm*}{Theorem}
\newtheorem*{prop*}{Proposition}
\newtheorem*{lem*}{Lemma}
\newtheorem*{rem*}{Remark}
\newcommand{\id}{{\mathbb{1}}}
\newcommand{\ketbra}[2]{|#1\rangle\!\langle#2|}
\DeclareMathOperator{\idChannel}{id}
\DeclareMathOperator{\MIO}{MIO}
\newcommand*\diff{\mathop{}\!\mathrm{d}}
\newcommand{\dPhi}{\frac{\diff \phi}{2\pi}}
\DeclareMathOperator{\comb}{Comb}
\newcommand{\Comb}[1]{\comb(#1)}
\DeclareMathOperator{\trace}{Tr}
\newcommand{\Tr}[1]{\trace\left[#1\right]}
\newcommand{\partTr}[2]{\trace_{#1}\left[#2\right]}
\newcommand{\Choi}{Choi-Jamiołkowski}
\DeclareMathOperator{\I}{I}
\DeclareMathOperator{\II}{II}
\DeclareMathOperator*{\essinf}{ess\,inf}
\DeclareMathOperator{\In}{in}
\DeclareMathOperator{\Out}{out}
\DeclareMathOperator{\all}{all}
\DeclareMathOperator{\odd}{odd}
\DeclareMathOperator{\even}{even}
\DeclareMathOperator{\suchthat}{s.t.}
\newcommand\newsubcap[1]{\phantomcaption%
	\caption*{\figurename~\thefigure\thesubfigure: #1}} 
\begin{document}

    \title{Coherence as a resource for phase estimation}
	\author{Felix Ahnefeld}
	\email{felix.ahnefeld@uni-ulm.de}
	\affiliation{Institute of Theoretical Physics, 			Universit{\"a}t Ulm, Albert-Einstein-Allee 11, D-89069 Ulm, Germany}
	\author{Thomas Theurer}
	\email{tth@math.ku.dk}
	\affiliation{Department of Mathematical Sciences, University of Copenhagen, Universitetsparken 5, 2100, Denmark}
	\author{Martin B. Plenio}
	\email{martin.plenio@uni-ulm.de}
	\affiliation{Institute of Theoretical Physics, Universit{\"a}t Ulm, Albert-Einstein-Allee 11, D-89069 Ulm, Germany}

	\begin{abstract}
Quantum phase estimation is a core task 
in quantum technologies ranging from metrology to quantum computing, where it appears as a key subroutine in various algorithms. Here, we quantitatively connect the performance of phase estimation protocols with quantum coherence. To achieve this, we construct and characterize resource theories of quantum networks that cannot generate coherence. Given multiple copies of a unitary encoding an unknown phase and access to a fixed coherent state, we estimate the phase using such networks. For a unified and general approach, we assess the quality of the estimate using a generic cost function that penalizes deviations from the true value. We determine the minimal average cost that can be achieved in this manner and explicitly derive optimal protocols. From this, we construct a family of coherence measures that directly connect a state's coherence with its value for phase estimation, demonstrating that every bit of coherence helps. This establishes coherence as a resource that quantifies the performance of phase estimation, and, thus, of any quantum technology relying on it as a subroutine.
	\end{abstract}
	\date{\today}
	\maketitle

\section{Introduction}
In the past decades, various quantum algorithms promising superpolynomial speedups compared to their classical counterparts have been discovered, most prominently Shor's factorization algorithm~\cite{Shor1997}. Many of these algorithms~\cite{Shor1997,Kitaev1995,Harrow2009} contain a crucial subroutine, namely \textit{quantum phase estimation}, as detailed in Refs.~\cite{Cleve1998, Kitaev1997,vanDam2007,Nielsen2010}. The goal of phase estimation is to determine an unknown phase $\phi$ encoded by a unitary as accurately as possible.  This also comprises a fundamental problem in metrology itself and has been studied in Ref.~\cite{Giovannetti2006}. Other applications are found in quantum chemistry and quantum simulation~\cite{Aspuru-Guzik2005}. This makes phase estimation a core problem for quantum technologies.

It is clear that access to coherence, i.e., quantum superposition, is necessary for phase estimation~\cite{MunozLahoz2022,Lecamwasam2024}. But how much coherence is required for accurate phase estimation or, equivalently, what are the ultimate limitations on the accuracy of any phase estimation protocol that emerge from a lack of coherence? Answering these questions quantitatively is the central goal of this work. For a rigorous approach, we turn to the framework of quantum resource theories~\cite{Chitambar2019}, and in particular to the resource theory of coherence~\cite{Baumgratz2014, Streltsov2017}.

The quality of a phase estimate depends on the context: For example, to solve integer factorization via Shor's algorithm, one requires estimates that are sufficiently close to the true value because this ensures that the classical post-processing succeeds with sufficiently high probability. In contrast, in metrology, one tries to optimize the mean squared error. For a unified treatment of the diverse applications of phase estimation, we adopt the approach of Ref.~\cite{vanDam2007} and consider generic cost functions that penalize deviations of our estimates $\hat{\phi}$ from the true value $\phi$. 

We evaluate the minimal average cost that an optimal phase estimation protocol consisting of a fixed coherent state and quantum networks that cannot generate coherence can achieve and show that this cost is fully determined by a family of coherence monotones. This establishes a quantitative relationship between coherence and the quality of phase estimation. Furthermore, we show that any amount of coherence provides a direct operational advantage in phase estimation, and give an explicit construction of optimal protocols achieving this advantage. 
For other examples that relate resource-theoretic quantities with the efficiency of specific quantum algorithms, see, e.g., Refs.~\cite{Bruss2011,Pashayan2015,Bravyi2016,Shi2017, Ahnefeld2022, Naseri2022, Anand2025}. Moreover, Refs.~\cite{Jozsa2003,Vidal2003} derived \textit{necessary} entanglement-based conditions for super-polynomial speed-ups in quantum algorithms and Ref.~\cite{Howard2014} linked contextuality and non-stabilizerness to quantum computational advantages. The relevance of such results lies in the fact that they provide a deeper understanding of the properties that underpin the advantages of both specific and general quantum algorithms.

Moreover, our work includes an alternative proof for the results of Ref.~\cite{vanDam2007} concerning optimal phase estimation without resource constraints and characterizes the structure of coherence non-generating networks and supermaps, which we will introduce in the next sections. 

\begin{figure*}
    \centering
    \includegraphics[width=1\linewidth]{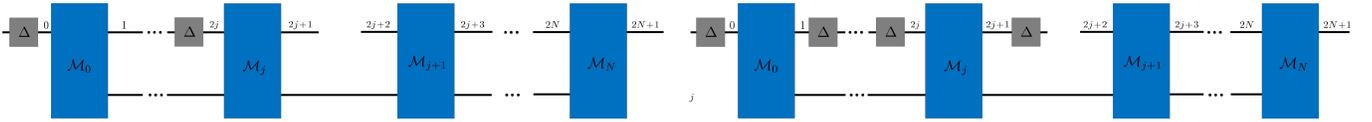}
    \caption{A MIO-compatible network satisfies the condition that incoherent inputs at the first $j$ inputs, i.e., systems $0,2,\ldots, 2j$, imply incoherent outputs at the output systems $1,3,\ldots,2j+1$ for all $0\leq j\leq N$.}
    \label{fig:DeltaConditions}
\end{figure*}

\section{Coherence as a resource} 
Resource theories of coherence~\cite{Baumgratz2014,Streltsov2017} quantify the value of superposition with respect to a fixed orthonormal basis $\{\ket{i}\}_i$ referred to as the incoherent basis. A quantum state $\sigma$ is called incoherent (or free) iff it is diagonal in the incoherent basis, i.e., iff $\Delta(\sigma)=\sigma$, where $\Delta(\rho)=\sum_i \ketbra{i}{i} \rho \ketbra{i}{i}$ denotes total dephasing in the incoherent basis $\{\ket{i}\}_i$. We denote the set of incoherent states by $\I$. States not contained in $\I$ are called coherent and considered resourceful. Since eigenstates of the unitary that encodes the phase we intend to estimate are invariant under its action, they cannot be used to infer any information about the phase. In the context of phase estimation, it is thus natural to choose this eigenbasis as the incoherent one, which we do. To use coherence optimally for phase estimation (or any other application), we need to be able to manipulate it, which is done with the help of free channels. The choice of free channels is not unique, and there are several relevant classes of free channels considered in the literature~\cite{Streltsov2017,Aberg2006,Baumgratz2014,Winter2016,Yadin2016,Chitambar2016,Chitambar2016b,Chitambar2017,Marvian2016} emerging, for example, from practical or conceptual considerations. A minimal requirement for a meaningful resource theory is that free channels cannot create coherence~\cite{Chitambar2019, Gour2024}. The largest set of channels that satisfies this requirement is called the set of \textit{maximally incoherent} operations, denoted by $\MIO$~\cite{Aberg2006, Liu2017, Diaz2018, Theurer2019}, which consists of all channels $\mathcal{M}$ that satisfy $\mathcal{M}\Delta=\Delta\mathcal{M}\Delta$. To determine the ultimate limitations that restricted access to coherence imposes on phase estimation, we choose MIO as the set of free channels, since this allows for the most advantageous manipulation and thus usage of coherence.

\section{ Free quantum networks}
The most general way to use a set of $N$ channels $\mathcal{N}_1,\ldots, \mathcal{N}_N$ in a laboratory is to insert them into a quantum network~\cite{Chiribella2008,Chiribella2009}, i.e., to interlace them with a sequence of channels $\mathcal{M}_0,\ldots ,\mathcal{M}_N$ transforming them to $\mathcal{S}_N[\mathcal{N}_1,\ldots,\mathcal{N}_N]\!=\!\mathcal{M}_{N}(\idChannel\! \otimes \mathcal{N}_N\!) \mathcal{M}_{N\!-\!1}(\idChannel\! \otimes \mathcal{N}_{N\!-\!1}) \mathcal{M}_{N\!-\!2}\ldots \mathcal{M}_1 (\idChannel\! \otimes \mathcal{N}_1)\mathcal{M}_0$, where $\idChannel$
refers to identity channels on auxiliary systems. Different networks can lead to the same effective transformation $\mathcal{S}_N$, which is called a supermap. Using a generalization of the Choi-Jamiołkowski isomorphism~\cite{Jamiolkowski1972, Choi1975}, such supermaps can be represented by quantum combs~\cite{Chiribella2008, Chiribella2009} (likewise, a quantum channel can be represented by a Choi state and implemented with different Stinespring dilations). Quantum supermaps will be our main tool for the study of optimal phase estimation protocols with limited coherence, which is only meaningful if the supermaps under consideration have well-defined restrictions on their ability to create coherence. To this end, we divide the set of quantum supermaps into free and non-free ones (as we did in the previous section for states and channels). Since quantum supermaps are implemented by quantum networks, it is intuitive and operational to consider the supermaps that can be implemented via a network of free channels as free. 
\begin{defin}\label{def:MIONetwork}     
    A quantum supermap $\mathcal{S}_N$ corresponds to a $\MIO$ network and is considered free if its action can be decomposed into a sequence of $\MIO$ channels $\mathcal{M}_0,\ldots, \mathcal{M}_N \in \MIO$, i.e., $\mathcal{S}_N[\mathcal{N}_1,\ldots, \mathcal{N}_N]=\mathcal{M}_{N}(\idChannel \otimes \mathcal{N}_N)\mathcal{M}_{N-1}\ldots \mathcal{M}_1 (\idChannel \otimes \mathcal{N}_1)\mathcal{M}_0$ for all channels $\mathcal{N}_1,\ldots, \mathcal{N}_N$.  
\end{defin}

It is clear (see also the Supplemental Material (SM)~\cite{SM}) that the above definition is also solid from a structural perspective in the sense that the composition of MIO networks leads to another MIO network. In particular, this implies that a MIO network cannot transform free states and channels into resourceful ones, and thus cannot generate resources for free. For further details on how to build resource theories of quantum supermaps in general, see the SM and also Refs.~\cite{Pollock2018,Liu2019,Gour2020,Gour2021b,Berk2021,Berk2023, Taranto2025}. However, since it seems that the set of MIO networks is hard to characterize, as a technical tool, we will utilize the following relaxation.

\begin{defin}\label{def:DeltaConditionsNetwork} 
A quantum supermap $\mathcal{S}_N$ is called $\MIO$-compatible if it satisfies the conditions depicted in Fig.~\ref{fig:DeltaConditions} for all $0\le j\le N$, i.e., if the first $j$ inputs are incoherent, then the first $j$ outputs are incoherent too.
\end{defin}

The intuition behind this definition is that a free network should not be able to output coherence before coherence is fed into it. In the SM, we show that the set of MIO-compatible supermaps is the largest set of superchannels that is consistent with MIO channels from a resource-theoretical perspective. This implies that a superchannel (i.e., $\mathcal{S}_N=\mathcal{S}_1$) is completely MIO-preserving (see also e.g. Refs.~\cite{Gour2020,Gour2021b}), i.e., it maps every MIO channel to a MIO channel in a complete sense, iff it is MIO-compatible. Importantly, a supermap $\mathcal{S}_N$ is MIO-compatible iff its comb $J_{\mathcal{S}_N}$ satisfies
\begin{align}
    \Delta_{0,2\ldots 2j} J_{\mathcal{S}_N}\! =\!\Delta_{0,2\ldots 2j} \Delta_{1,3\ldots 2j+1} J_{\mathcal{S}_N} \, \forall j\!:0\leq j\leq N,
\end{align}
where $\Delta_{x_1,...,x_n}$ denotes the application of a total dephasing to subsystems $x_1,...,x_n$. This is a semidefinite constraint which allows the use of the powerful methods of the comb representation~\cite{Chiribella2008,Chiribella2009} and semidefinite programming~\cite{Vandenberghe1996,Boyd2004}. In the SM, we show that MIO networks are a subset of MIO-compatible supermaps. However, we will see later that the two sets are equally powerful for phase estimation.

\begin{figure*}
     \centering
     \begin{subfigure}[b]{0.6\linewidth}
         \centering
         \includegraphics[width=1\linewidth]{PhaseEstimationProtocols2.png}
         \newsubcap{Phase estimation protocol composed of a network of MIO operations probing $N$ copies of $V_\phi^{(d)}$ and a (resourceful) input state $\rho$.}
         \label{fig:PhaseEstimationProtocols}
     \end{subfigure}%
     \begin{subfigure}[b]{0.4\textwidth}
         \centering
         \includegraphics[width=0.8\linewidth]{SimpleCircuit.png}
		 \newsubcap{Protocol that achieves the minimal average cost in Theorem~\ref{thm:OptAvgCost}, where $\mathcal{F}$ denotes the quantum Fourier transform, $\mathcal{M} \in \MIO$, $M=(d-1)N+1$, and the measurement is a projective measurement in the computational basis. }
		 \label{fig:SimpleCircuit}
     \end{subfigure}
\end{figure*}

\section{Quantum phase estimation}
Phase estimation is a central subroutine in many quantum algorithms.  Given $N$ copies of the black-box unitary $V_\phi^{(d)}=\sum_{n=0}^{d-1} e^{i\phi n} \ketbra{n}{n}$, the goal is to estimate the unknown phase $\phi$ as accurately as possible. As shown in Ref.~\cite{Cleve1998}, many quantum algorithms can be phrased in this manner. For example, Shor's integer factorization algorithm, in which we try to estimate a phase containing information that allows us to find a prime factor, is one of them (see Ref.~\cite{Cleve1998} and the SM for further information). For a unified description of different applications of phase estimation, we therefore evaluate the performance of a phase estimation protocol using a generic cost function $C(\phi-\hat{\phi})$ that only depends on the difference between $\phi$ and our estimate $\hat{\phi}$ and satisfies the following mild assumptions: (i) $C(\phi)\geq 0$, (ii) $C(\phi+2\pi) = C(\phi)$, (iii) $\int_0^{2\pi} \diff\phi\ C(\phi)<\infty$.

To investigate limitations that restricted access to coherence imposes on phase estimation, we supply a fixed state $\rho$ and $N$ copies of the unitary $V_\phi^{(d)}$, which we insert into a MIO network as in Definition~\ref{def:MIONetwork} followed by a measurement; see Fig.~\ref{fig:PhaseEstimationProtocols}. Based on the measurement outcomes $x$ with probability distribution $p^{(d,N)}(x|\phi,\rho)$ we assign phase estimates $\hat{\phi}_x$. 
Assuming a uniform prior distribution of the phase, our goal is to determine the minimal average cost  
\begin{align}\label{eq:AvgCostMainText}
    C_{\min}^{(d,N)}(\rho)=  \inf \sum_{x} \int_{0}^{2\pi}\dPhi  C(\phi -\hat{\phi}_x) p^{(d,N)}(x|\phi,\rho).
\end{align}
The infimum is taken over all MIO networks, all measurements (with a potentially unbounded number of outcomes) leading to the probability distribution $p^{(d,N)}(x|\phi,\rho)$, and all possible assignment rules for the phase estimate $x\mapsto \hat{\phi}_x$. A MIO network is considered optimal if it achieves $C_{\min}^{(d,N)}(\rho)$.

Since MIO is the largest set of free channels, no other set of free channels allows for better utilization of the coherence contained in $\rho$. If we choose any subset of MIO as free, the correspondingly defined minimal average cost cannot be lower than the one defined via MIO. Moreover, the minimal average cost in Eq.~\eqref{eq:AvgCostMainText} is not limited to a specific protocol because we optimize over all MIO networks, measurements, and assignment rules - and it is in this sense that Eq.~\eqref{eq:AvgCostMainText} determines the ultimate limitations that restricted access to coherence imposes on phase estimation. Due to the generality of our approach, we are modeling neither a specific phase estimation protocol nor hardware-dependent experimental restrictions. Nonetheless, any cost that can be obtained with a protocol that corresponds to a MIO network is lower bounded by Eq.~\eqref{eq:AvgCostMainText}. By choosing MIO as free, we consider an idealization in the sense that, in a specific experiment, not all MIO channels are necessarily easy to implement. On the other hand, in such a specific experiment, it might be reasonably easy to implement operations outside of MIO. By allowing for such operations, we do, however, no longer consider coherence as a restricted resource but are modeling other constraints instead (which are experiment-dependent). Since we are interested in a general approach, we do not pursue this direction, although it can be highly relevant for the specific experiment under consideration. This idealization allows us to single out and rigorously investigate the role of coherence in phase estimation. A similar approach has been pursued in the investigation of the role of entanglement in quantum information theory~\cite{Bennett1996a,PlenioV07,Horodecki2009}, where local operations and classical communication are often considered free, although many local operations are difficult to implement in practice but entangling two qubits close to each other is often feasible.

\section{Main Results}
We now show that the optimal achievable average cost using an arbitrary protocol as depicted in Fig.~\ref{fig:PhaseEstimationProtocols} is achieved by the surprisingly simple phase estimation protocol depicted in Fig.~\ref{fig:SimpleCircuit}.

\begin{thm}\label{thm:OptAvgCost}
    Let $Y^{(M)}\in \mathbb{C}^{M\times M}$ be the Toeplitz matrix defined by \begin{align}\label{eq:CostMatrixDefinition}
       Y^{(M)}= \sum_{n,m=0}^{M-1}\int \dPhi C(\phi) e^{i\phi(n-m)}\ketbra{n}{m}.
   \end{align}
  Given $N$ copies of the unitary $V_\phi^{(d)}$, the minimal average cost in Eq.~\eqref{eq:AvgCostMainText} is 
   \begin{align}\label{eq:MinAvgCostSDPMainText}
       C_{\min}^{(d,N)}(\rho)&=  C_{\min}^{(M,1)}(\rho)= \min_{\mathcal{M} \in \MIO} \Tr{Y^{(M)} \mathcal{M}(\rho)},
   \end{align}
   where $M=(d-1)N+1$ and the input and output dimensions of the channel $\mathcal{M}$ are fixed by $\rho$ and $Y^{(M)}$, respectively. The minimal average cost is achieved by the protocol shown in Fig.~\ref{fig:SimpleCircuit}: the $N$ copies of the black-box unitary $V_\phi^{(d)}$ are combined with $\MIO$ operations to implement the unitary $V_\phi^{(M)}$. This unitary is probed with a state $\mathcal{M}(\rho)$, where $\mathcal{M}$ is a $\MIO$ channel that optimizes Eq.~\eqref{eq:MinAvgCostSDPMainText}. The output is subsequently measured in the Fourier basis, and upon outcome $x$, a phase estimate $\hat{\phi}_x=\frac{2\pi x}{M}$ is assigned.
\end{thm}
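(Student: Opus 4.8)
\emph{Strategy.} The plan is to squeeze $C_{\min}^{(d,N)}(\rho)$ between matching bounds whose common value is $\min_{\mathcal{M}\in\MIO}\Tr{Y^{(M)}\mathcal{M}(\rho)}$, and then to run the identical argument with $(d,N)$ replaced by $(M,1)$ so that $C_{\min}^{(M,1)}(\rho)$ comes out to be the same number. The observation that makes everything finite-dimensional is a bandwidth bound: since $V_\phi^{(d)}=\sum_n e^{i\phi n}\ketbra{n}{n}$, the Choi operator of each of the $N$ copies contributes matrix elements $e^{i\phi(n-m)}$ with $n,m\in\{0,\dots,d-1\}$, so for any fixed network and POVM the outcome distribution $p^{(d,N)}(x|\phi,\rho)$ is a trigonometric polynomial in $\phi$ with frequencies in $\{-(M-1),\dots,M-1\}$. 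Hence the average cost in Eq.~\eqref{eq:AvgCostMainText} depends only on the $2M-1$ lowest Fourier coefficients of each $p^{(d,N)}(\cdot\,|\cdot,\rho)$, and---combining this with the comb representation of the network and the semidefinite MIO-compatibility constraints following Definition~\ref{def:DeltaConditionsNetwork}---the optimisation becomes a semidefinite program over a finite-dimensional matrix.

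\emph{Achievability.} First I would construct the protocol of Fig.~\ref{fig:SimpleCircuit} explicitly and check that it lives inside a MIO network. The operator $(V_\phi^{(d)})^{\otimes N}$ is diagonal in the computational basis with eigenvalues $e^{i\phi k}$, $k=0,\dots,M-1$, each occurring with nonzero multiplicity, so choosing for every $k$ one string $(n_1^{(k)},\dots,n_N^{(k)})$ with $\sum_j n_j^{(k)}=k$ defines a basis-to-basis isometry $W:\mathbb{C}^M\hookrightarrow(\mathbb{C}^d)^{\otimes N}$ with $W^\dagger(V_\phi^{(d)})^{\otimes N}W=V_\phi^{(M)}$. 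Because $W$ maps incoherent states to incoherent states, both the embedding $\tau\mapsto W\tau W^\dagger$ and its left inverse are MIO, so the protocol ``apply $\mathcal{M}\in\MIO$ to $\rho$, embed, apply the $N$ copies, disembed, measure in the Fourier basis $\{\ket{e_x}\}_{x=0}^{M-1}$, assign $\hat\phi_x=2\pi x/M$'' is a legitimate MIO network followed by a measurement. A direct trigonometric calculation---substituting $\phi\mapsto\phi+\hat\phi_x$ in each summand so that the Fourier phases cancel---shows that its average cost equals $\Tr{Y^{(M)}\mathcal{M}(\rho)}$ (one must track the conventions defining $Y^{(M)}$; up to the standard symmetrisation this is routine). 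Minimising over $\mathcal{M}$ yields $C_{\min}^{(d,N)}(\rho)\le\min_{\mathcal{M}\in\MIO}\Tr{Y^{(M)}\mathcal{M}(\rho)}$.

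\emph{Converse.} For the matching lower bound I would take an arbitrary MIO network and measurement and reduce it in three steps, without increasing the cost. (i) Covariance: appending incoherent phases $V_\theta^{(d)}$ costs nothing (they are MIO), so by averaging the protocol over the shift $\phi\mapsto\phi+\theta$ against the uniform prior one may assume the protocol is covariant, which turns the objective into $\int_0^{2\pi}\tfrac{\diff\psi}{2\pi}C(\psi)\,q(\psi)$ for a single response function $q$. (ii) Discretisation: since $q$ is bandlimited to degree $M-1$ by the bandwidth bound above, the covariant continuous measurement can be replaced by the $M$ equally spaced outcomes $\hat\phi_x=2\pi x/M$, i.e.\ by the Fourier POVM, with no change in cost. (iii) State reduction: the semidefinite MIO-compatibility constraints on the comb of the full network, together with the covariance just imposed, force the effective $M$-level state presented to the Fourier measurement to be of the form $\mathcal{M}(\rho)$ for some $\mathcal{M}\in\MIO$; feeding this into the computation from the achievability part gives cost $\ge\min_{\mathcal{M}\in\MIO}\Tr{Y^{(M)}\mathcal{M}(\rho)}$. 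Equivalently, this amounts to exhibiting the dual-optimal witness of the SDP built around the Toeplitz matrix $Y^{(M)}$. Combined with achievability, all inequalities become equalities; repeating the argument with $d\to M$, $N\to1$ gives $C_{\min}^{(M,1)}(\rho)=\min_{\mathcal{M}\in\MIO}\Tr{Y^{(M)}\mathcal{M}(\rho)}$, completing the chain.

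\emph{Main obstacle.} I expect step (iii) of the converse to be the crux: showing that the comb-level MIO-compatibility constraints, plus $U(1)$-covariance, allow no adaptive multi-copy strategy to beat one-shot MIO preprocessing of $\rho$---i.e., that coherence hidden in the interleaving channels $\mathcal{M}_0,\dots,\mathcal{M}_N$ or in multi-time correlations is useless here. This is where the structural characterisation of MIO-compatible supermaps from the SM does the work, and one must additionally check that passing from genuine MIO networks (Definition~\ref{def:MIONetwork}) to the larger semidefinite class of MIO-compatible supermaps (Definition~\ref{def:DeltaConditionsNetwork})---the class for which the SDP is naturally written---does not change the optimal value for phase estimation, as asserted in the text.
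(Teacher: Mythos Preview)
Your strategy is correct and closely parallels the paper's: relax to MIO-compatible supermaps, exploit covariance to land on an SDP with $M$ equispaced estimates, and close the sandwich with an explicit protocol. Your achievability construction (parallel embedding via a basis-to-basis isometry $W$) is essentially the content of the paper's Lemma implementing $V_\phi^{(M)}$ from $N$ copies of $V_\phi^{(d)}$, just phrased more compactly; the paper's version is more elaborate only because it works out a concrete sequential circuit.

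The one place where your outline and the paper genuinely diverge is step~(iii). You frame it as a \emph{primal} state reduction: after covariance and discretisation, the effective $M$-level state fed to the Fourier measurement is $\mathcal{M}(\rho)$ for some $\mathcal{M}\in\MIO$. That is exactly what the paper proves---but only for $N=1$, where the comb has a single slot and one can write down an explicit contraction $T$ so that $\mathcal{M}=d(\id\otimes T^\dagger)\mathcal{K}(\id\otimes T)$ is a bona fide MIO Choi state. For $N>1$ the comb carries a tower of normalisation conditions (one per slot) and the digit-sum degeneracy in $X^{(d,N)}$, and the paper does \emph{not} attempt a direct state reduction. Instead it goes through the \emph{dual}: from an optimal dual pair $(A^\star,B^\star)$ for the single-copy $M$-dimensional SDP it manufactures, by hand, a feasible dual point $(B_1,\dots,B_{N+1};\tilde A,\dots,\tilde A)$ for the $N$-slot SDP with the same objective value $\Tr{B^\star}$. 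The construction hinges on a telescoping identity $\sum_j\mathcal{D}_j(\tilde A)=\tilde A$ and on the intertwining $U_\phi^{(d,N)}W=WU_\phi^{(M,1)}$, neither of which is visible from the primal side. So your parenthetical ``equivalently, this amounts to exhibiting the dual-optimal witness'' is the right instinct, but it is not equivalent to the state-reduction statement you wrote; for $N>1$ the dual route is the actual argument, and the primal reduction you describe is not established (and it is unclear it could be, since the feasible set of $N$-slot MIO-compatible combs is strictly larger than the image of a single MIO channel under the embedding $W$).

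In short: your plan is sound, and you have correctly located the hard step. To finish, replace the ``state reduction'' claim for general $N$ by: (a) prove the $N=1$ case via the $T$-contraction, then (b) lift the dual optimum of that SDP to a feasible dual point of the $N$-slot SDP.
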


We emphasize that the optimal phase estimation protocol depicted in Fig.~\ref{fig:SimpleCircuit} depends on the cost function $C$ only via the channel $\mathcal{M}$. The implementation of $V_\phi^{(M)}$, the measurement in the Fourier basis, and the phase estimates are optimal for any cost function. Moreover, using a memory system in the quantum network is never necessary. We also want to mention that the quantum Fourier transform in Fig.~\ref{fig:SimpleCircuit} is not a MIO operation. This does not contradict our premise to consider only MIO networks since we can always absorb it into the subsequent measurement, and the resulting measurement can never create coherence. We choose to depict the protocol including the Fourier transform to highlight the similarity to the usual depiction of phase estimation protocols. The right-hand side of Eq.~\eqref{eq:MinAvgCostSDPMainText} is solvable via a semidefinite program.

Our results generalize Ref.~\cite{vanDam2007}, where optimal phase estimation protocols without coherence constraints have been studied. The authors of Ref.~\cite{vanDam2007} showed that the protocol in Fig.~\ref{fig:SimpleCircuit} where $\mathcal{M}$ is replaced by an optimal unconstrained channel is optimal (for $d=2$). Our results include their findings, as providing enough coherence allows us to implement any channel~\cite{Baumgratz2014}. However, their proof methods relying on dilations of the channels in the network cannot be extended to the scenario that we consider because MIO channels do not possess a free dilation~\cite{Chitambar2016,Chitambar2016b,Chitambar2017}. Our method of showing Theorem~\ref{thm:OptAvgCost} is to relax the optimization over the set of all MIO networks in Eq.~\eqref{eq:AvgCostMainText} to the set of MIO-compatible supermaps introduced in Definition~\ref{def:DeltaConditionsNetwork}. Using symmetry arguments similar to Ref.~\cite[Chap.~4.4]{Holevo2011} to reexpress the optimization problem, we prove that it is sufficient to consider $M$ measurement outcomes and phase estimates $\hat{\phi}_x=\frac{2\pi x}{M}$. 
This allows us to show that the resulting optimization problem can be solved using an SDP. 
Via the dual problem, we then prove that the right-hand side of Eq.~\eqref{eq:MinAvgCostSDPMainText} is a lower bound on the minimal average cost. To conclude the proof, we show that the protocol depicted in Fig.~\ref{fig:SimpleCircuit} achieves this lower bound. This also implies that Theorem~\ref{thm:OptAvgCost} holds if we define $C_{\min}^{(d,N)}(\rho)$ via an optimization over MIO-compatible supermaps.

To understand the relevance of coherence in phase estimation, we first consider two extremal cases: Supplying an incoherent or a maximally coherent state of unbounded dimension. By providing an incoherent input state, according to Theorem~\ref{thm:OptAvgCost}, the optimal average cost reduces to the average of the cost function, i.e., for any $\sigma  \in \I$, we have $C_{\min}^{(M,1)}(\sigma)=\int \dPhi C(\phi)=:C_0$, where $C_0$ is the cost associated with a guess solely based on the uniform prior distribution. Intuitively, this is expected as without coherence, no information about the phase is encoded, which could improve our guesses. If, on the other hand, we supply a maximally coherent state of sufficient dimension, we can prepare any possible quantum state, and in particular, the optimal input states of Refs.~\cite{vanDam2007,vanDam2007b}, such that the minimal average cost is given by the minimal eigenvalue $\lambda_{\min}$ of $Y^{(M)}$. For any coherent state $\rho \notin \I$ between these two extremal cases, we expect to be able to infer some information about the phase and lower the average cost below $C_0$. To quantify this, for an arbitrary but fixed cost function, we introduce the functional 
\begin{align}\label{eq:Advantage}
    \mathcal{A}^{(M)}(\rho):= \!\max_{\mathcal{M}\in \MIO}\! \Tr{\left( \Delta \left(Y^{(M)}\right)\!-\!Y^{(M)}\!\right)\!\mathcal{M}(\rho)},
\end{align}
where $Y^{(M)}$ is defined as in Eq.~\eqref{eq:CostMatrixDefinition}. This directly determines the decrease in average cost compared to random guesses with a cost of $C_0$ since
\begin{align}\label{eq:CostAdvantage}
    C_{\min}^{(M,1)}(\rho)=C_0-\mathcal{A}^{(M)}(\rho).
\end{align}
We now show that the functionals in Eq.~\eqref{eq:Advantage} define a family of coherence monotones~\cite{Streltsov2017}.

\begin{thm}\label{thm:AdvantageCoherence}
    The functionals $\mathcal{A}^{(M)}$ are convex coherence monotones, i.e., $\mathcal{A}^{(M)}(\mathcal{N}(\rho))\leq \mathcal{A}^{(M)}(\rho)$  for all $\mathcal{N} \in \MIO$ and all states $\rho$. For any cost matrix $Y^{(M)}$ with $Y^{(M)}\neq \Delta Y^{(M)}$, the monotones are faithful, i.e.,  $\mathcal{A}^{(M)}(\rho)\geq 0$ with equality iff $\rho \in \I$.
\end{thm}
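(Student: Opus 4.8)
The plan is to obtain monotonicity and convexity directly from the variational form of $\mathcal{A}^{(M)}$, to dispose of the ``only if'' half of faithfulness by a one-line computation, and to concentrate the real work on the ``if'' half, which I would prove by explicitly constructing a MIO channel that extracts a positive signal from an arbitrary coherent $\rho$. For monotonicity, note that $\mathcal{A}^{(M)}(\rho)$ is the maximum over MIO channels $\mathcal{M}$ (with input dimension $\dim\rho$ and output dimension $M$) of the linear map $\mathcal{M}\mapsto\Tr{(\Delta Y^{(M)}-Y^{(M)})\,\mathcal{M}(\rho)}$. Since the composition of two MIO channels is again MIO --- which follows at once from the defining identity $\mathcal{M}\Delta=\Delta\mathcal{M}\Delta$ --- every channel $\mathcal{M}'\circ\mathcal{N}$ admissible in the maximization defining $\mathcal{A}^{(M)}(\mathcal{N}(\rho))$ for $\mathcal{N}\in\MIO$ is also admissible in the one defining $\mathcal{A}^{(M)}(\rho)$, so $\mathcal{A}^{(M)}(\mathcal{N}(\rho))\le\mathcal{A}^{(M)}(\rho)$. (Equivalently, this is the monotonicity of $C_{\min}^{(M,1)}$ combined with $C_{\min}^{(M,1)}=C_0-\mathcal{A}^{(M)}$.) Convexity is immediate: for each fixed $\mathcal{M}$ the objective is affine in $\rho$, so $\mathcal{A}^{(M)}$ is a pointwise supremum of affine functions; the supremum is attained because the set of MIO channels with fixed input/output dimensions is compact.

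For faithfulness, first observe that $\mathcal{A}^{(M)}(\rho)\ge 0$ always, since the trace-and-replace channel $\rho\mapsto\ketbra{0}{0}$ on $\mathbb{C}^M$ is MIO and its output, being incoherent, has zero overlap with the vanishing-diagonal Hermitian matrix $\Delta Y^{(M)}-Y^{(M)}$. The same observation settles the ``only if'' direction: if $\rho\in\I$ then $\mathcal{M}(\rho)$ is incoherent for every $\mathcal{M}\in\MIO$ (apply $\mathcal{M}\Delta=\Delta\mathcal{M}\Delta$ to $\rho=\Delta\rho$), hence $\Tr{(\Delta Y^{(M)}-Y^{(M)})\,\mathcal{M}(\rho)}=0$ and so $\mathcal{A}^{(M)}(\rho)=0$.

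The crux is the converse. Given $\rho\notin\I$, pick $j\ne k$ with $\rho_{jk}\ne0$; by the Toeplitz structure of $Y^{(M)}$ and the hypothesis $Y^{(M)}\ne\Delta Y^{(M)}$, pick $a\in\{1,\dots,M-1\}$ with $Y^{(M)}_{a0}\ne0$ (in particular $M\ge2$ is forced). I would then compose the following MIO channels: (i) the channel that retains the block on $\{\ket j,\ket k\}$ and maps its orthogonal complement onto $\ketbra{0}{0}$, followed by an incoherent permutation $\ket j\mapsto\ket0$, $\ket k\mapsto\ket1$, producing on $\mathbb{C}^M$ a state $\tau$ with $\tau_{01}=\rho_{jk}$; (ii) the incoherent transposition $\ket1\leftrightarrow\ket a$, moving the coherence to the $(0,a)$ position; (iii) the incoherent diagonal phase gate $\ket a\mapsto e^{i\alpha}\ket a$. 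A direct computation using $Y^{(M)}=(Y^{(M)})^{\dagger}$ then yields $\Tr{(\Delta Y^{(M)}-Y^{(M)})\,\mathcal{M}(\rho)}=-2\,\real\!\big(Y^{(M)}_{a0}\,e^{-i\alpha}\,\rho_{jk}\big)$, which for the appropriate choice of $\alpha$ equals $2\,|Y^{(M)}_{a0}|\,|\rho_{jk}|>0$; hence $\mathcal{A}^{(M)}(\rho)>0$.

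I expect the main obstacle to be bookkeeping rather than anything conceptual: one must carefully verify that each elementary map above genuinely satisfies $\mathcal{M}\Delta=\Delta\mathcal{M}\Delta$ (incoherent unitaries and the block-retaining/trace-and-replace channel do, but this must be spelled out) and carries the correct input and output dimensions, and one must confirm that the degenerate cases ($M=1$, or $\dim\rho<2$) are vacuous because then either $Y^{(M)}$ has no off-diagonal support or $\rho$ cannot be coherent.
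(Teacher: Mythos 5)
Your proposal is correct and follows essentially the same route as the paper's proof: monotonicity from closure of MIO under composition, convexity as a supremum of affine functions, the trivial direction from MIO mapping incoherent states to incoherent states (so the overlap with the diagonal-free matrix $\Delta Y^{(M)}-Y^{(M)}$ vanishes), and faithfulness by the same key construction of restricting to the coherent $2\times2$ block of $\rho$, relocating it by incoherent permutations to a position where the Toeplitz matrix $Y^{(M)}$ has a nonzero off-diagonal entry in its first column, and tuning an incoherent phase to obtain the strictly positive value $2|\rho_{jk}||Y^{(M)}_{a0}|$. The only differences are cosmetic (replacement channel vs.\ $\Delta$ for nonnegativity; dumping the complement of the $\{j,k\}$ block on $\ketbra{0}{0}$ vs.\ fully dephasing it), and the paper additionally records the quantitative bound with maximal matrix elements for later use.
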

This, together with Eq.~\eqref{eq:CostAdvantage}, shows that the optimal performance of phase estimation is intrinsically linked to the available coherence. Unless the cost function is constant and thus trivial, there always exists a sufficiently large $M$ such that the corresponding cost matrix $Y^{(M)}$ is not diagonal. This implies that every amount of coherence is useful for phase estimation problems described by a non-trivial cost function, and the operational advantage it provides is directly quantified by the task-tailored coherence measures in Eq.~\eqref{eq:Advantage}. Therefore, coherence is a crucial resource for phase estimation protocols and, thus, for algorithms that use it as a subroutine.

Linking operational advantages directly to employed resources in real applications is a primary motivation for resource theories in general and coherence theory in particular. This has been studied in recent years for various forms of discrimination, exclusion, and detection games~\cite{Napoli2016,Piani2016,Takagi2019,Takagi2019b,Skrzypczyk2019a,Skrzypczyk2019b,Uola2019,Mori2020,Ducuara2020a,Ducuara2020b,Uola2020,Masini2021,Wagner2024}. For example, trying to discriminate between unitaries applying different phases from a discrete set is such a discrimination game, in which coherence emerges as a crucial resource \cite{Napoli2016,Piani2016}. Our work generalizes this to the continuous setting. Moreover, in various applications in computation, see, e.g., Refs.~\cite{Hillery2016,Matera2016,Biswas2017,Ahnefeld2022,Naseri2022,Zhou2024}, the pivotal role of coherence has been studied. Showing that the available coherence directly quantifies the advantage in optimal phase estimation underscores its fundamental role in quantum computation, making the intuition that coherence is a necessary resource quantitative.

Naturally, the measures in Eq.~\eqref{eq:Advantage} are highly tailored to a specific cost function, respectively, the task at hand, and it would be interesting to separate the advantage attributed to coherence into a problem-specific and a coherence-dependent part. If dim$(\rho)=2$ and $M=2$, this is indeed possible and $\mathcal{A}^{(2)}(\rho)= \left(C_0-\lambda_{\min}\left(Y^{(2)}\right)\right)  C_R(\rho)$, where $C_R(\rho)$ is the (generalized) robustness of coherence~\cite{Piani2016, Napoli2016}. For the general case, we provide the following upper bound on $\mathcal{A}^{(M)}$, which holds for any cost function.

\begin{prop}
    The advantage $\mathcal{A}^{(M)}$ is bounded by
        \begin{align}\label{eq:advantageBound}
        \mathcal{A}^{(M)}(\rho)\leq \left(C_0-\lambda_{\min}\left(Y^{(M)}\right)\right)  W(\rho),
    \end{align}    
    where the weight of coherence $W$~\cite{Bu2017,Bu2018} is given by 
    \begin{align}
        W(\rho)= \min_{w\geq 0} \left\{\rho=(1-w)\sigma+w\tau, \sigma \in \I,\tau \text{ a state} \right\}.
    \end{align} 
\end{prop}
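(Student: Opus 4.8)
\emph{Proof proposal.} The plan is to reduce the bound to two elementary observations about the operator $A := \Delta\!\left(Y^{(M)}\right) - Y^{(M)}$ together with a single optimal weight decomposition of $\rho$. First I would record the structure of $A$: since $Y^{(M)}$ is the Toeplitz matrix with entries $Y^{(M)}_{nm} = \int \dPhi\, C(\phi)\, e^{i\phi(n-m)}$, all of its diagonal entries equal $\int \dPhi\, C(\phi) = C_0$, so $\Delta\!\left(Y^{(M)}\right) = C_0\,\id$ and hence $A = C_0\,\id - Y^{(M)}$. This operator is Hermitian, has vanishing diagonal, i.e.\ $\Delta(A) = 0$, and its largest eigenvalue is $\lambda_{\max}(A) = C_0 - \lambda_{\min}\!\left(Y^{(M)}\right)$, which is nonnegative because $\Tr{Y^{(M)}} = M\,C_0$.

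Next I would establish two facts. (a) For every incoherent state $\sigma \in \I$ one has $\Tr{A\sigma} = \Tr{A\,\Delta(\sigma)} = \Tr{\Delta(A)\,\sigma} = 0$, using $\sigma = \Delta(\sigma)$ and that $\Delta$ is self-adjoint. (b) For every $\mathcal{M} \in \MIO$ and every $\sigma \in \I$, the output $\mathcal{M}(\sigma)$ is again incoherent: $\mathcal{M}(\sigma) = \mathcal{M}(\Delta\sigma) = \Delta\mathcal{M}(\Delta\sigma) = \Delta\mathcal{M}(\sigma) \in \I$, directly from the defining relation $\mathcal{M}\Delta = \Delta\mathcal{M}\Delta$. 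I would also use that for any state $\omega$ one has $\Tr{A\omega} \le \lambda_{\max}(A)$, since $A \le \lambda_{\max}(A)\,\id$.

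To conclude, fix an arbitrary $\mathcal{M} \in \MIO$ and an optimal decomposition $\rho = (1-w)\sigma + w\tau$ with $\sigma \in \I$, $\tau$ a state and $w = W(\rho)$ (the minimum is attained by compactness of the set of incoherent states). By linearity of $\mathcal{M}$ and of the trace, $\Tr{A\,\mathcal{M}(\rho)} = (1-w)\Tr{A\,\mathcal{M}(\sigma)} + w\,\Tr{A\,\mathcal{M}(\tau)}$. The first term vanishes by combining (b) with (a), and the second is at most $w\,\lambda_{\max}(A) = W(\rho)\big(C_0 - \lambda_{\min}\!\left(Y^{(M)}\right)\big)$. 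Taking the supremum over $\mathcal{M} \in \MIO$ then gives $\mathcal{A}^{(M)}(\rho) \le \big(C_0 - \lambda_{\min}\!\left(Y^{(M)}\right)\big)\,W(\rho)$, as claimed.

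I do not anticipate a genuine obstacle here: the only point that needs care is fact (b), that $\MIO$ channels map incoherent states to incoherent states, which is immediate from the definition of $\MIO$; everything else is the identity $\Delta\!\left(Y^{(M)}\right) = C_0\,\id$, linearity, and the operator inequality $A \le \lambda_{\max}(A)\,\id$. If one wanted an even cleaner write-up, one could phrase (a) and (b) as the single statement that $A$ annihilates the affine span of $\{\mathcal{M}(\sigma): \sigma \in \I,\ \mathcal{M}\in\MIO\} \subseteq \I$, but this is not needed.
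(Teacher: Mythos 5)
Your proof is correct, and it takes a genuinely different route from the paper. The paper works on the dual side: it writes $C_{\min}^{(M,1)}(\rho)=\min_{\mathcal{M}\in\MIO}\Tr{Y^{(M)}\mathcal{M}(\rho)}$, pulls the cost matrix back through the adjoint channel, relaxes to all $Z\geq 0$ with $\Delta Z=\left(C_0-\lambda_{\min}\left(Y^{(M)}\right)\right)\id$, and then recognizes (after rescaling) the dual SDP characterization of the weight of coherence from Ref.~\cite{Bu2018}, obtaining the equivalent lower bound $C_{\min}^{(M,1)}(\rho)\geq \lambda_{\min}\left(Y^{(M)}\right)+\left(C_0-\lambda_{\min}\left(Y^{(M)}\right)\right)\left(1-W(\rho)\right)$. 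You instead argue on the primal side: take an optimal weight decomposition $\rho=(1-w)\sigma+w\tau$, note that $A=\Delta\left(Y^{(M)}\right)-Y^{(M)}=C_0\id-Y^{(M)}$ has zero diagonal so incoherent states (and hence $\mathcal{M}(\sigma)$ for $\mathcal{M}\in\MIO$, since MIO preserves $\I$) contribute nothing, and bound the remaining term by $\lambda_{\max}(A)=C_0-\lambda_{\min}\left(Y^{(M)}\right)$. All the individual steps check out: $\Delta\left(Y^{(M)}\right)=C_0\id$ because the diagonal Toeplitz entries are $\int\dPhi\,C(\phi)$, the self-adjointness of $\Delta$ gives $\Tr{A\sigma}=0$ on $\I$, the attainment of the minimum in $W$ follows from compactness in finite dimensions, and the final supremum over $\mathcal{M}$ respects the pointwise bound. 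What your approach buys is a more elementary argument that needs only the primal definition of $W$ and the MIO defining relation, with no SDP duality; what the paper's approach buys is that the relaxation to the dual weight SDP sits naturally next to the rest of their SDP machinery and directly delivers the statement as a lower bound on the minimal average cost.
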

Recall that for the optimal phase estimation protocol with unrestricted access to coherence, the best achievable advantage over classical strategies is given by $C_0-\lambda_{\min}\left(Y^{(M)} \right)$~\cite{vanDam2007}. Since $0\le W(\rho)\le 1$~\cite{Bu2018}, the bound in Eq.~\eqref{eq:advantageBound} interpolates between the optimal advantage without coherence and with arbitrarily much coherence. As we demonstrate in the SM, this implies that for commonly used cost functions $\lim_{M\to\infty}C_{\min}^{(M,1)}(\rho)\ge C_0(1-W(\rho))$ which can be strictly larger than zero, while in the unrestricted case~\cite{vanDam2007b}, the limit converges to zero. As such, restricted access to coherence imposes a fundamental bound to the performance of phase estimation even in the limit of unboundedly many copies of the unitary.

\section{Discussion}
In this work, we established a direct quantitative connection between the optimal performance of phase estimation protocols and the coherence employed in the form of a resourceful input state. This provides valuable insight into phase estimation that is twofold: On one hand, lack of coherence poses quantifiable limitations to the accuracy of phase estimation. On the other hand, every state containing coherence also provides a quantifiable advantage. Thus, coherence is a key resource for phase estimation. Together with Refs.~\cite{Hillery2016,Matera2016,Biswas2017,Ahnefeld2022,Naseri2022,Zhou2024}, this establishes a quantitative connection between coherence and advantages in quantum computation. Moreover, our work is an example for the application of resource theories of supermaps, see also Refs.~\cite{Pollock2018,Liu2019,Gour2020,Gour2021b,Berk2021,Berk2023, Taranto2025}, to problems of interest.

Coherence in the context of phase estimation has previously been studied~\cite{MunozLahoz2022,Lecamwasam2024,Ahnefeld2022}.  Ref.~\cite{MunozLahoz2022} investigated the optimal average cost with respect to a fixed cost function, namely, the periodized variance, that can be achieved given an arbitrary state with a fixed amount of coherence measured by the robustness of coherence to which a single copy of the phase gate is applied. This allowed Ref.~\cite{MunozLahoz2022} to obtain asymptotic expressions. In contrast, in our work, we take an operational approach: Optimizing the physical usage of the employed coherence via MIO-compatible supermaps gives rise to tailored families of coherence measures that exactly quantify the performance of phase estimation protocols. This is not only operationally well-defined but also allows us to show that every amount of coherence serves as a resource for phase estimation. 
Moreover, in Ref.~\cite{Lecamwasam2024}, quantum parameter estimation in Bayesian metrology and its connection to coherence was studied. The coherence of state ensembles, defined as the ensemble relative entropy of coherence~\cite{Lecamwasam2024}, was identified to directly quantify the difference between Holevo quantity~\cite{Holevo1973,Wilde2013} and mutual information. 
On the one hand, the results of Ref.~\cite{Lecamwasam2024} are broader in scope than ours in the sense that phase estimation is a special case of parameter estimation. On the other hand, mutual information provides a lower bound on the average mean-squared error~\cite{Hall2012} and thus a specific cost function, while we consider general cost functions. 
In Ref.~\cite{Ahnefeld2022}, the role of coherence in a specific implementation of Shor's algorithm was investigated. This specificity allowed bounding its success probability by dynamical coherence measures~\cite{Theurer2019,Ahnefeld2022} of the employed channels (which include static coherence as studied here as a special case). In the SM, we discuss why here, in contrast, we decided to present our results in terms of static coherence and to which extent this encompasses the dynamical case.
This allows us to determine the role of coherence in general phase estimation problems that are relevant not only for other implementations of Shor's algorithm but also for tasks beyond factoring. In summary, our work unifies and expands key aspects from Refs.~\cite{MunozLahoz2022,Lecamwasam2024,Ahnefeld2022}: We consider arbitrary tasks for which phase estimation is relevant in the form of generic cost functions. Our operational approach of considering the optimal achievable protocol (with respect to the average cost) unveils the fundamental limitations that access to restricted coherence imposes, and is thus not limited to hardware- or implementation specific protocols.

We want to emphasize that, while coherence is a resource that quantifies the performance of phase estimation protocols, this does not imply that it is the only relevant resource.  Importantly, MIO operations can convert coherence to entanglement~\cite{Streltsov2015,Theurer2020}, which is a valuable resource not only for quantum metrology~\cite{Wineland1996,Braunstein1994,Huelga1997,Augusiak2016}, but also for quantum computation~\cite{Linden2001,Jozsa2003,VandenNest2013}, and cryptography~\cite{Bennett2014}. From the proof of Theorem~\ref{thm:OptAvgCost} in the SM, it is apparent that the optimal protocol in Fig.~\ref{fig:SimpleCircuit} uses the coherence provided to entangle the input states of the different copies of the unitary encoding the phase. 
Generalizing our results to non-uniform prior distributions, dynamical resources (including measurements~\cite{Theurer2019}), or resourceful networks, as well as parameter estimation beyond phase estimation, could provide further insights into the role of (dynamical) coherence for quantum advantages and is left for future work.

\textit{Note added:} Shortly after this work appeared on the arXiv, the independent work Ref.~\cite{Chen2025} investigated the limitations that energy constraints impose on parameter estimation (and specifically phase estimation)  using methods similar to ours.

\section{Acknowledgment} 
We thank Alexander Taveira Blomenhofer, Tore Friis, and Koenraad Audenaert for discussions. T. T. acknowledges support from a Postdoc Scholarship on Quantum Algorithms or Quantum Software from the Danish e-infrastructure Consortium (DeiC) and support from the Pacific Institute for the Mathematical Sciences (PIMS). The research and findings may not reflect those of the Institute. For numerics we used Refs.~\cite{cvx,Lofberg2004,Sturm1999}.

\bibliography{bibliography.bib}

@article{wineland1996,
  title={Spin squeezing and reduced quantum noise in spectroscopy},
  author={Wineland, DJ and Bollinger, JJ and Itano, WM and Moore, FL and Heinzen, DJ},
  journal={Physical Review A},
  volume={46},
  number={11},
  pages = {R6797(R)},
  year={1992},
  url = {https://doi.org/10.1103/PhysRevA.46.R6797}
}

@article{Pashayan2015,
  title = {Estimating Outcome Probabilities of Quantum Circuits Using Quasiprobabilities},
  author = {Pashayan, Hakop and Wallman, Joel J. and Bartlett, Stephen D.},
  journal = {Phys. Rev. Lett.},
  volume = {115},
  issue = {7},
  pages = {070501},
  numpages = {5},
  year = {2015},
  month = {Aug},
  publisher = {American Physical Society},
  doi = {10.1103/PhysRevLett.115.070501},
  url = {https://link.aps.org/doi/10.1103/PhysRevLett.115.070501}
}

@article{Anand2025,
  title   = {Entanglement accelerates quantum simulation},
  author  = {Anand, Nikhil and Kim, Isaac H. and Preskill, John},
  journal = {Nature Physics},
  volume  = {21},
  number  = {8},
  pages   = {1338--1345},
  year    = {2025},
  doi     = {10.1038/s41567-025-03048-8}
}

@article{Bruss2011,
  title = {Multipartite entanglement in quantum algorithms},
  author = {Bru\ss{}, D. and Macchiavello, C.},
  journal = {Phys. Rev. A},
  volume = {83},
  issue = {5},
  pages = {052313},
  numpages = {4},
  year = {2011},
  month = {May},
  publisher = {American Physical Society},
  doi = {10.1103/PhysRevA.83.052313},
  url = {https://link.aps.org/doi/10.1103/PhysRevA.83.052313}
}

@article{Bravyi2016,
  title = {Trading Classical and Quantum Computational Resources},
  author = {Bravyi, Sergey and Smith, Graeme and Smolin, John A.},
  journal = {Phys. Rev. X},
  volume = {6},
  issue = {2},
  pages = {021043},
  numpages = {14},
  year = {2016},
  month = {Jun},
  publisher = {American Physical Society},
  doi = {10.1103/PhysRevX.6.021043},
  url = {https://link.aps.org/doi/10.1103/PhysRevX.6.021043}
}

@article{Howard2014,
  title   = {Contextuality supplies the `magic' for quantum computation},
  author  = {Howard, Mark and Wallman, Joel and Veitch, Victor and Emerson, Joseph},
  journal = {Nature},
  volume  = {510},
  number  = {7505},
  pages   = {351--355},
  year    = {2014},
  doi     = {10.1038/nature13460}
}

@article{Vidal2003,
  title = {Efficient Classical Simulation of Slightly Entangled Quantum Computations},
  author = {Vidal, Guifr\'e},
  journal = {Phys. Rev. Lett.},
  volume = {91},
  issue = {14},
  pages = {147902},
  numpages = {4},
  year = {2003},
  month = {Oct},
  publisher = {American Physical Society},
  doi = {10.1103/PhysRevLett.91.147902},
  url = {https://link.aps.org/doi/10.1103/PhysRevLett.91.147902}
}

@article{Braunstein1994,
  title = {Statistical distance and the geometry of quantum states},
  author = {Braunstein, Samuel L. and Caves, Carlton M.},
  journal = {Phys. Rev. Lett.},
  volume = {72},
  issue = {22},
  pages = {3439--3443},
  numpages = {0},
  year = {1994},
  month = {May},
  publisher = {American Physical Society},
  doi = {10.1103/PhysRevLett.72.3439},
  url = {https://link.aps.org/doi/10.1103/PhysRevLett.72.3439}
}

@article{Huelga1997,
  title = {Improvement of Frequency Standards with Quantum Entanglement},
  author = {Huelga, S. F. and Macchiavello, C. and Pellizzari, T. and Ekert, A. K. and Plenio, M. B. and Cirac, J. I.},
  journal = {Phys. Rev. Lett.},
  volume = {79},
  issue = {20},
  pages = {3865--3868},
  numpages = {0},
  year = {1997},
  month = {Nov},
  publisher = {American Physical Society},
  doi = {10.1103/PhysRevLett.79.3865},
  url = {https://link.aps.org/doi/10.1103/PhysRevLett.79.3865}
}

@article{Winter2016,
  title = {Operational Resource Theory of Coherence},
  author = {Winter, Andreas and Yang, Dong},
  journal = {Phys. Rev. Lett.},
  volume = {116},
  issue = {12},
  pages = {120404},
  numpages = {6},
  year = {2016},
  month = {Mar},
  publisher = {American Physical Society},
  doi = {10.1103/PhysRevLett.116.120404},
  url = {https://link.aps.org/doi/10.1103/PhysRevLett.116.120404}
}

@article{Cleve1998,
author = {Cleve, R.  and Ekert, A.  and Macchiavello, C.  and Mosca, M. },
title = {Quantum algorithms revisited},
journal = {Proceedings of the Royal Society of London. Series A: Mathematical, Physical and Engineering Sciences},
volume = {454},
number = {1969},
pages = {339-354},
year = {1998},
doi = {10.1098/rspa.1998.0164},
}

@article{Giovannetti2006,
  title = {Quantum Metrology},
  author = {Giovannetti, Vittorio and Lloyd, Seth and Maccone, Lorenzo},
  journal = {Phys. Rev. Lett.},
  volume = {96},
  issue = {1},
  pages = {010401},
  numpages = {4},
  year = {2006},
  month = {Jan},
  publisher = {American Physical Society},
  doi = {10.1103/PhysRevLett.96.010401},
  url = {https://link.aps.org/doi/10.1103/PhysRevLett.96.010401}
}

@article{Kitaev1997,
       author = {{Kitaev}, A. Yu},
        title = "{Quantum computations: algorithms and error correction}",
      journal = {Russian Mathematical Surveys},
         year = 1997,
        month = dec,
       volume = {52},
       number = {6},
        pages = {1191-1249},
          doi = {10.1070/RM1997v052n06ABEH002155},
       adsurl = {https://ui.adsabs.harvard.edu/abs/1997RuMaS..52.1191K},
      adsnote = {Provided by the SAO/NASA Astrophysics Data System}
}

@book{Nielsen2010,
    place={Cambridge}, 
    title={Quantum Computation and Quantum Information},
    publisher={Cambridge University Press},
    author={Nielsen, Michael A. and Chuang, Isaac L.},
    year={2010}
}

@article{Aspuru-Guzik2005,
author = {Alán Aspuru-Guzik  and Anthony D. Dutoi  and Peter J. Love  and Martin Head-Gordon },
title = {Simulated Quantum Computation of Molecular Energies},
journal = {Science},
volume = {309},
number = {5741},
pages = {1704-1707},
year = {2005},
doi = {10.1126/science.1113479},
}

@article{Streltsov2015,
  title = {Measuring Quantum Coherence with Entanglement},
  author = {Streltsov, Alexander and Singh, Uttam and Dhar, Himadri Shekhar and Bera, Manabendra Nath and Adesso, Gerardo},
  journal = {Phys. Rev. Lett.},
  volume = {115},
  issue = {2},
  pages = {020403},
  numpages = {6},
  year = {2015},
  month = {Jul},
  publisher = {American Physical Society},
  doi = {10.1103/PhysRevLett.115.020403},
  url = {https://link.aps.org/doi/10.1103/PhysRevLett.115.020403}
}

@article{Theurer2020,
  title = {Quantifying Dynamical Coherence with Dynamical Entanglement},
  author = {Theurer, Thomas and Satyajit, Saipriya and Plenio, Martin B.},
  journal = {Phys. Rev. Lett.},
  volume = {125},
  issue = {13},
  pages = {130401},
  numpages = {7},
  year = {2020},
  month = {Sep},
  publisher = {American Physical Society},
  doi = {10.1103/PhysRevLett.125.130401},
  url = {https://link.aps.org/doi/10.1103/PhysRevLett.125.130401}
}

@article{Augusiak2016,
  title = {Asymptotic role of entanglement in quantum metrology},
  author = {Augusiak, R. and Ko\l{}ody\ifmmode \acute{n}\else \'{n}\fi{}ski, J. and Streltsov, A. and Bera, M. N. and Ac\'{\i}n, A. and Lewenstein, M.},
  journal = {Phys. Rev. A},
  volume = {94},
  issue = {1},
  pages = {012339},
  numpages = {11},
  year = {2016},
  month = {Jul},
  publisher = {American Physical Society},
  doi = {10.1103/PhysRevA.94.012339},
  url = {https://link.aps.org/doi/10.1103/PhysRevA.94.012339}
}

@article{Linden2001,
  title = {Good Dynamics versus Bad Kinematics: Is Entanglement Needed for Quantum Computation?},
  author = {Linden, Noah and Popescu, Sandu},
  journal = {Phys. Rev. Lett.},
  volume = {87},
  issue = {4},
  pages = {047901},
  numpages = {4},
  year = {2001},
  month = {Jul},
  publisher = {American Physical Society},
  doi = {10.1103/PhysRevLett.87.047901},
  url = {https://link.aps.org/doi/10.1103/PhysRevLett.87.047901}
}

@article{Jozsa2003,
author = {Jozsa, Richard  and Linden, Noah },
title = {On the role of entanglement in quantum-computational speed-up},
journal = {Proceedings of the Royal Society of London. Series A: Mathematical, Physical and Engineering Sciences},
volume = {459},
number = {2036},
pages = {2011-2032},
year = {2003},
doi = {10.1098/rspa.2002.1097},
}

@article{VandenNest2013,
  title = {Universal Quantum Computation with Little Entanglement},
  author = {Van den Nest, Maarten},
  journal = {Phys. Rev. Lett.},
  volume = {110},
  issue = {6},
  pages = {060504},
  numpages = {4},
  year = {2013},
  month = {Feb},
  publisher = {American Physical Society},
  doi = {10.1103/PhysRevLett.110.060504},
  url = {https://link.aps.org/doi/10.1103/PhysRevLett.110.060504}
}

@article{Bennett2014,
title = {Quantum cryptography: Public key distribution and coin tossing},
journal = {Theoretical Computer Science},
volume = {560},
pages = {7-11},
year = {2014},
issn = {0304-3975},
doi = {https://doi.org/10.1016/j.tcs.2014.05.025},
author = {Charles H. Bennett and Gilles Brassard}
}

@article{Jamiolkowski1972,
title = {Linear transformations which preserve trace and positive semidefiniteness of operators},
journal = {Reports on Mathematical Physics},
volume = {3},
number = {4},
pages = {275-278},
year = {1972},
issn = {0034-4877},
doi = {https://doi.org/10.1016/0034-4877(72)90011-0},
url = {https://www.sciencedirect.com/science/article/pii/0034487772900110},
author = {A. Jamiołkowski},
abstract = {This work may be considered a completion of the paper by J. de Pillis: Linear transformations which preserve Hermitian and positive semidefinite operators, published in 1967 [2]: necessary conditions have been formulated. Let A1 be the full algebra of linear operators on the n-dimensional Hilbert space H1, and let A2 be the full algebra of linear operators on the m-dimensional Hilbert space H2. Let L(A1,A2) denote the complex spaceof linear maps from A1 to A2 and S denotes the cone of all T ϵ L(A1,A2 which send positive semidefinite operators from A1 to positive semidefinite operators from A2. The aim of this paper is to present a necessary and sufficient condition for a transformation in L(A1, A2) to be in the cone S, and to preserve trace of the operators.}
}

@article{Choi1975,
title = {Completely positive linear maps on complex matrices},
journal = { Lin. Alg. Appl.},
volume = {10},
number = {3},
pages = {285-290},
year = {1975},
issn = {0024-3795},
doi = {https://doi.org/10.1016/0024-3795(75)90075-0},
url = {https://www.sciencedirect.com/science/article/pii/0024379575900750},
author = {Man-Duen Choi},
abstract = {A linear map Φ from Mn to Mm is completely positive iff it admits an expression Φ(A)=ΣiV∗iAVi where Vi are n×m matrices.}
}

@article{Berk2021,
  doi = {10.22331/q-2021-04-20-435},
  url = {https://doi.org/10.22331/q-2021-04-20-435},
  title = {Resource theories of multi-time processes: {A} window into quantum non-Markovianity},
  author = {Berk, Graeme D. and Garner, Andrew J. P. and Yadin, Benjamin and Modi, Kavan and Pollock, Felix A.},
  journal = {{Quantum}},
  issn = {2521-327X},
  publisher = {{Verein zur F{\"{o}}rderung des Open Access Publizierens in den Quantenwissenschaften}},
  volume = {5},
  pages = {435},
  month = apr,
  year = {2021}
}

@misc{SM,
    note ={See Supplemental Material, which includes an introduction to quantum supermaps, a characterization of incoherent supermaps, the proofs of the results in the main text, and Refs.~\cite{Chiribella_2008,Serra1996}.
          } 
}

@Article{Berk2023,
  author   = {Berk, Graeme D. and Milz, Simon and Pollock, Felix A. and Modi, Kavan},
  journal  = {npj Quantum Information},
  title    = {Extracting quantum dynamical resources: consumption of non-Markovianity for noise reduction},
  year     = {2023},
  issn     = {2056-6387},
  number   = {1},
  pages    = {104},
  volume   = {9},
  abstract = {A great many efforts are dedicated to developing noise reduction and mitigation methods. One remarkable achievement in this direction is dynamical decoupling (DD), although its applicability remains limited because fast control is required. Using resource theoretic tools, we show that non-Markovianity is a resource for noise reduction, raising the possibility that it can be leveraged for noise reduction where traditional DD methods fail. We propose a non-Markovian optimisation technique for finding DD pulses. Using a prototypical noise model, we numerically demonstrate that our optimisation-based methods are capable of drastically improving the exploitation of temporal correlations, extending the timescales at which noise suppression is viable by at least two orders of magnitude, compared to traditional DD which does not use any knowledge of the non-Markovian environment. Importantly, the corresponding tools are built on operational grounds and can be easily implemented to reduce noise in the current generation of quantum devices.},
  doi      = {10.1038/s41534-023-00774-w},
  refid    = {Berk2023},
  url      = {https://doi.org/10.1038/s41534-023-00774-w},
}

@article{Pollock2018,
  title = {Non-Markovian quantum processes: Complete framework and efficient characterization},
  author = {Pollock, Felix A. and Rodr\'{\i}guez-Rosario, C\'esar and Frauenheim, Thomas and Paternostro, Mauro and Modi, Kavan},
  journal = {Phys. Rev. A},
  volume = {97},
  issue = {1},
  pages = {012127},
  numpages = {13},
  year = {2018},
  month = {Jan},
  publisher = {American Physical Society},
  doi = {10.1103/PhysRevA.97.012127},
  url = {https://link.aps.org/doi/10.1103/PhysRevA.97.012127}
}

@article{Holevo1973,
  title={Bounds for the quantity of information transmitted by a quantum communication channel},
  author={Holevo, Alexander Semenovich},
  journal={Problemy Peredachi Informatsii},
  volume={9},
  number={3},
  pages={3--11},
  year={1973},
  publisher={Russian Academy of Sciences, Branch of Informatics, Computer Equipment and~…},
  url = {https://www.mathnet.ru/eng/ppi903}
}

@book{Wilde2013, 
    place={Cambridge}, 
    title={Quantum Information Theory}, 
    publisher={Cambridge University Press}, 
    author={Wilde, Mark M.}, 
    year={2013}
}

@article{Hillery2016,
  title = {Coherence as a resource in decision problems: The {D}eutsch-{J}ozsa algorithm and a variation},
  author = {Hillery, Mark},
  journal = {Phys. Rev. A},
  volume = {93},
  issue = {1},
  pages = {012111},
  numpages = {6},
  year = {2016},
  month = {Jan},
  publisher = {American Physical Society},
  doi = {10.1103/PhysRevA.93.012111},
  url = {https://link.aps.org/doi/10.1103/PhysRevA.93.012111}
}

@article{Matera2016,
doi = {10.1088/2058-9565/1/1/01LT01},
url = {https://dx.doi.org/10.1088/2058-9565/1/1/01LT01},
year = {2016},
month = {aug},
publisher = {IOP Publishing},
volume = {1},
number = {1},
pages = {01LT01},
author = {Matera, J M and Egloff, D and Killoran, N and Plenio, M B},
title = {Coherent control of quantum systems as a resource theory},
journal = {Quantum Science and Technology},
abstract = {Control at the interface between the classical and the quantum world is fundamental in quantum physics. In particular, how classical control is enhanced by coherence effects is an important question both from a theoretical as well as from a technological point of view. In this work, we establish a resource theory describing this setting and explore relations to the theory of coherence, entanglement and information processing. Specifically, for the coherent control of quantum systems, the relevant resources of entanglement and coherence are found to be equivalent and closely related to a measure of discord. The results are then applied to the DQC1 protocol and the precision of the final measurement is expressed in terms of the available resources.}
}

@article{Biswas2017,
author = {Biswas, Tanmoy  and García Díaz, María  and Winter, Andreas },
title = {Interferometric visibility and coherence},
journal = {Proceedings of the Royal Society A: Mathematical, Physical and Engineering Sciences},
volume = {473},
number = {2203},
pages = {20170170},
year = {2017},
doi = {10.1098/rspa.2017.0170},
}

@article{Masini2021,
  title = {Coherence of operations and interferometry},
  author = {Masini, Michele and Theurer, Thomas and Plenio, Martin B.},
  journal = {Phys. Rev. A},
  volume = {103},
  issue = {4},
  pages = {042426},
  numpages = {19},
  year = {2021},
  month = {Apr},
  publisher = {American Physical Society},
  doi = {10.1103/PhysRevA.103.042426},
  url = {https://link.aps.org/doi/10.1103/PhysRevA.103.042426}
}

@article{Naseri2022,
  title = {Entanglement and coherence in the {B}ernstein-{V}azirani algorithm},
  author = {Naseri, Moein and Kondra, Tulja Varun and Goswami, Suchetana and Fellous-Asiani, Marco and Streltsov, Alexander},
  journal = {Phys. Rev. A},
  volume = {106},
  issue = {6},
  pages = {062429},
  numpages = {13},
  year = {2022},
  month = {Dec},
  publisher = {American Physical Society},
  doi = {10.1103/PhysRevA.106.062429},
  url = {https://link.aps.org/doi/10.1103/PhysRevA.106.062429}
}

@article{Ahnefeld2022,
  title = {Coherence as a Resource for {S}hor's Algorithm},
  author = {Ahnefeld, Felix and Theurer, Thomas and Egloff, Dario and Matera, Juan Mauricio and Plenio, Martin B.},
  journal = {Phys. Rev. Lett.},
  volume = {129},
  issue = {12},
  pages = {120501},
  numpages = {7},
  year = {2022},
  month = {Sep},
  publisher = {American Physical Society},
  doi = {10.1103/PhysRevLett.129.120501},
  url = {https://link.aps.org/doi/10.1103/PhysRevLett.129.120501}
}

@misc{kitaev1995,
      title={Quantum measurements and the Abelian Stabilizer Problem}, 
      author={A. Yu. Kitaev},
      year={1995},
      eprint={quant-ph/9511026},
      archivePrefix={arXiv},
      primaryClass={quant-ph},
      url={https://arxiv.org/abs/quant-ph/9511026}, 
}

@article{Chitambar2019,
  title = {Quantum resource theories},
  author = {Chitambar, Eric and Gour, Gilad},
  journal = {Rev. Mod. Phys.},
  volume = {91},
  issue = {2},
  pages = {025001},
  numpages = {48},
  year = {2019},
  month = {Apr},
  publisher = {American Physical Society},
  doi = {10.1103/RevModPhys.91.025001},
  url = {https://link.aps.org/doi/10.1103/RevModPhys.91.025001}
}

@article{Chitambar2016,
  title = {Critical Examination of Incoherent Operations and a Physically Consistent Resource Theory of Quantum Coherence},
  author = {Chitambar, Eric and Gour, Gilad},
  journal = {Phys. Rev. Lett.},
  volume = {117},
  issue = {3},
  pages = {030401},
  numpages = {5},
  year = {2016},
  month = {Jul},
  publisher = {American Physical Society},
  doi = {10.1103/PhysRevLett.117.030401},
  url = {https://link.aps.org/doi/10.1103/PhysRevLett.117.030401}
}

@article{Chitambar2016b,
  title = {Comparison of incoherent operations and measures of coherence},
  author = {Chitambar, Eric and Gour, Gilad},
  journal = {Phys. Rev. A},
  volume = {94},
  issue = {5},
  pages = {052336},
  numpages = {19},
  year = {2016},
  month = {Nov},
  publisher = {American Physical Society},
  doi = {10.1103/PhysRevA.94.052336},
  url = {https://link.aps.org/doi/10.1103/PhysRevA.94.052336}
}

@article{Chitambar2017,
  title = {Erratum: Comparison of incoherent operations and measures of coherence [Phys. Rev. A 94, 052336 (2016)]},
  author = {Chitambar, Eric and Gour, Gilad},
  journal = {Phys. Rev. A},
  volume = {95},
  issue = {1},
  pages = {019902},
  numpages = {2},
  year = {2017},
  month = {Jan},
  publisher = {American Physical Society},
  doi = {10.1103/PhysRevA.95.019902},
  url = {https://link.aps.org/doi/10.1103/PhysRevA.95.019902}
}

@article{Marvian2016,
  title = {How to quantify coherence: Distinguishing speakable and unspeakable notions},
  author = {Marvian, Iman and Spekkens, Robert W.},
  journal = {Phys. Rev. A},
  volume = {94},
  issue = {5},
  pages = {052324},
  numpages = {23},
  year = {2016},
  month = {Nov},
  publisher = {American Physical Society},
  doi = {10.1103/PhysRevA.94.052324},
  url = {https://link.aps.org/doi/10.1103/PhysRevA.94.052324}
}

@article{Bennett1996a,
  title = {Concentrating partial entanglement by local operations},
  author = {Bennett, Charles H. and Bernstein, Herbert J. and Popescu, Sandu and Schumacher, Benjamin},
  journal = {Phys. Rev. A},
  volume = {53},
  issue = {4},
  pages = {2046--2052},
  numpages = {0},
  year = {1996},
  month = {Apr},
  publisher = {American Physical Society},
  doi = {10.1103/PhysRevA.53.2046},
  url = {https://link.aps.org/doi/10.1103/PhysRevA.53.2046}
}

@misc{Chen2025,
      title={Optimal quantum metrology under energy constraints}, 
      author={Longyun Chen and Yuxiang Yang},
      year={2025},
      eprint={2506.09436},
      archivePrefix={arXiv},
      primaryClass={quant-ph},
      url={https://arxiv.org/abs/2506.09436}, 
}

@article{Horodecki2009,
  title = {Quantum entanglement},
  author = {Horodecki, Ryszard and Horodecki, Pawe\l{} and Horodecki, Micha\l{} and Horodecki, Karol},
  journal = {Rev. Mod. Phys.},
  volume = {81},
  issue = {2},
  pages = {865--942},
  numpages = {0},
  year = {2009},
  month = {Jun},
  publisher = {American Physical Society},
  doi = {10.1103/RevModPhys.81.865},
  url = {https://link.aps.org/doi/10.1103/RevModPhys.81.865}
}

@article{Harrow2009,
  title = {Quantum Algorithm for Linear Systems of Equations},
  author = {Harrow, Aram W. and Hassidim, Avinatan and Lloyd, Seth},
  journal = {Phys. Rev. Lett.},
  volume = {103},
  issue = {15},
  pages = {150502},
  numpages = {4},
  year = {2009},
  month = {Oct},
  publisher = {American Physical Society},
  doi = {10.1103/PhysRevLett.103.150502},
  url = {https://link.aps.org/doi/10.1103/PhysRevLett.103.150502}
}

@article{Lecamwasam2024,
  title = {Relative Entropy of Coherence Quantifies Performance in {B}ayesian Metrology},
  author = {Lecamwasam, Ruvi and Assad, Syed and Hope, Joseph J. and Lam, Ping Koy and Thompson, Jayne and Gu, Mile},
  journal = {PRX Quantum},
  volume = {5},
  issue = {3},
  pages = {030303},
  numpages = {12},
  year = {2024},
  month = {Jul},
  publisher = {American Physical Society},
  doi = {10.1103/PRXQuantum.5.030303},
  url = {https://link.aps.org/doi/10.1103/PRXQuantum.5.030303}
}

@misc{MunozLahoz2022,
      title={Phase estimation with limited coherence}, 
      author={D. Munoz-Lahoz and J. Calsamiglia and J. A. Bergou and E. Bagan},
      year={2022},
      eprint={2207.05656},
      archivePrefix={arXiv},
      primaryClass={quant-ph},
      url={https://arxiv.org/abs/2207.05656}, 
}

@article{Uola2020,
  title = {All Quantum Resources Provide an Advantage in Exclusion Tasks},
  author = {Uola, Roope and Bullock, Tom and Kraft, Tristan and Pellonp\"a\"a, Juha-Pekka and Brunner, Nicolas},
  journal = {Phys. Rev. Lett.},
  volume = {125},
  issue = {11},
  pages = {110402},
  numpages = {6},
  year = {2020},
  month = {Sep},
  publisher = {American Physical Society},
  doi = {10.1103/PhysRevLett.125.110402},
  url = {https://link.aps.org/doi/10.1103/PhysRevLett.125.110402}
}

@article{Gour2021b,
  title = {Entanglement of a bipartite channel},
  author = {Gour, Gilad and Scandolo, Carlo Maria},
  journal = {Phys. Rev. A},
  volume = {103},
  issue = {6},
  pages = {062422},
  numpages = {21},
  year = {2021},
  month = {Jun},
  publisher = {American Physical Society},
  doi = {10.1103/PhysRevA.103.062422},
  url = {https://link.aps.org/doi/10.1103/PhysRevA.103.062422}
}

@article{Baumgratz2014,
  title = {Quantifying Coherence},
  author = {Baumgratz, T. and Cramer, M. and Plenio, M. B.},
  journal = {Phys. Rev. Lett.},
  volume = {113},
  issue = {14},
  pages = {140401},
  numpages = {5},
  year = {2014},
  month = {Sep},
  publisher = {American Physical Society},
  doi = {10.1103/PhysRevLett.113.140401},
  url = {https://link.aps.org/doi/10.1103/PhysRevLett.113.140401}
}

@article{Yadin2016,
  title = {Quantum Processes Which Do Not Use Coherence},
  author = {Yadin, Benjamin and Ma, Jiajun and Girolami, Davide and Gu, Mile and Vedral, Vlatko},
  journal = {Phys. Rev. X},
  volume = {6},
  issue = {4},
  pages = {041028},
  numpages = {16},
  year = {2016},
  month = {Nov},
  publisher = {American Physical Society},
  doi = {10.1103/PhysRevX.6.041028},
  url = {https://link.aps.org/doi/10.1103/PhysRevX.6.041028}
}

@article{vanDam2007b,
doi = {10.1088/1751-8113/40/28/S07},
url = {https://dx.doi.org/10.1088/1751-8113/40/28/S07},
year = {2007},
month = {jun},
publisher = {},
volume = {40},
number = {28},
pages = {7971},
author = {Wim van Dam and G Mauro D'Ariano and Artur Ekert and Chiara Macchiavello and Michele Mosca},
title = {Optimal phase estimation in quantum networks},
journal = {	J. Phys. A},
abstract = {We address the problem of estimating the phase ϕ given N copies of the phase rotation uϕ within an array of quantum operations in finite dimensions. We first consider the special case where the array consists of an arbitrary input state followed by any arrangement of the N phase rotations, and ending with a POVM. We optimize the POVM for a given input state and fixed arrangement. Then we also optimize the input state for some specific cost functions. In all cases, the optimal POVM is equivalent to a quantum Fourier transform in an appropriate basis. Examples and applications are given.}
}

@article{Bu2018,
  title = {Asymmetry and coherence weight of quantum states},
  author = {Bu, Kaifeng and Anand, Namit and Singh, Uttam},
  journal = {Phys. Rev. A},
  volume = {97},
  issue = {3},
  pages = {032342},
  numpages = {10},
  year = {2018},
  month = {Mar},
  publisher = {American Physical Society},
  doi = {10.1103/PhysRevA.97.032342},
  url = {https://link.aps.org/doi/10.1103/PhysRevA.97.032342}
}

@Article{Serra1996,
  author   = {Serra, Stefano},
  title    = {On the extreme spectral properties of {T}oeplitz matrices generated by {$L^1$} functions with several minima/maxima},
  doi      = {10.1007/BF01740550},
  issn     = {1572-9125},
  number   = {1},
  pages    = {135--142},
  url      = {https://doi.org/10.1007/BF01740550},
  volume   = {36},
  abstract = {In this paper we are concerned with the asymptotic behavior of the smallest eigenvalue λ1(n)of symmetric (Hermitian)n ×n Toeplitz matricesTn(f) generated by an integrable functionf defined in [−π, π]. In [7, 8, 11] it is shown that λ1(n)tends to essinff =mfin the following way: λ1(n)−mf∼ 1/n2k. These authors use three assumptions:A1)f −mfhas a zero inx =x0 of order 2k.A2)f is continuous and at leastC2kin a neighborhood ofx0.A3)x =x0 is the unique global minimum off in [−π, π]. In [10] we have proved that the hypothesis of smoothnessA2 is not necessary and that the same result holds under the weaker assumption thatf εL1[−π, π]. In this paper we further extend this theory to the case of a functionf εL1[−π, π] having several global minima by suppressing the hypothesisA3 and by showing that the maximal order 2k of the zeros off −mfis the only parameter which characterizes the rate of convergence of λ1(n)tomf.},
  journal  = {BIT Numerical Mathematics},
  refid    = {Serra1996},
  year     = {1996},
}

@misc{Gour2020,
      title={Dynamical Resources}, 
      author={Gilad Gour and Carlo Maria Scandolo},
      year={2020},
      eprint={2101.01552},
      archivePrefix={arXiv},
      primaryClass={quant-ph},
      url={https://arxiv.org/abs/2101.01552}, 
}

@misc{Liu2019,
      title={Resource theories of quantum channels and the universal role of resource erasure}, 
      author={Zi-Wen Liu and Andreas Winter},
      year={2019},
      eprint={1904.04201},
      archivePrefix={arXiv},
      primaryClass={quant-ph},
      url={https://arxiv.org/abs/1904.04201}, 
}

@misc{Taranto2025,
      title={Higher-Order Quantum Operations}, 
      author={Philip Taranto and Simon Milz and Mio Murao and Marco Túlio Quintino and Kavan Modi},
      year={2025},
      eprint={2503.09693},
      archivePrefix={arXiv},
      primaryClass={quant-ph},
      url={https://arxiv.org/abs/2503.09693}, 
}

@article{Shi2017,
  title = {Coherence depletion in the Grover quantum search algorithm},
  author = {Shi, Hai-Long and Liu, Si-Yuan and Wang, Xiao-Hui and Yang, Wen-Li and Yang, Zhan-Ying and Fan, Heng},
  journal = {Phys. Rev. A},
  volume = {95},
  issue = {3},
  pages = {032307},
  numpages = {8},
  year = {2017},
  month = {Mar},
  publisher = {American Physical Society},
  doi = {10.1103/PhysRevA.95.032307},
  url = {https://link.aps.org/doi/10.1103/PhysRevA.95.032307}
}

@article{Vandenberghe1996,
author = {Vandenberghe, Lieven and Boyd, Stephen},
title = {Semidefinite Programming},
journal = {SIAM Review},
volume = {38},
number = {1},
pages = {49-95},
year = {1996},
doi = {10.1137/1038003},
}

@book{Boyd2004,
  title={Convex optimization},
  author={Boyd, Stephen and Vandenberghe, Lieven},
  year={2004},
  publisher={Cambridge university press}
}

@misc{cvx,
	author       = {Michael Grant and Stephen Boyd},
	title        = {{CVX}: Matlab Software for Disciplined Convex Programming, version 2.1},
	month        = {mar},
	year         = {2014}
}

@inproceedings{Lofberg2004,
	address = {Taipei, Taiwan},
	author = {L{\"{o}}fberg, J.},
	booktitle = {In Proceedings of the CACSD Conference},
	title = {{YALMIP} : A Toolbox for Modeling and Optimization in {MATLAB}},
	url = {https://ieeexplore.ieee.org/abstract/document/1393890/references#references},
	year = {2004}
}

@article{Sturm1999,
	author = { J. F.   Sturm },
	title = {Using {SeDuMi} 1.02, A {M}atlab toolbox for optimization over symmetric cones},
	journal = {Optim. Methods Softw.},
	volume = {11},
	number = {1-4},
	pages = {625-653},
	year  = {1999},
	publisher = {Taylor & Francis},
	doi = {10.1080/10556789908805766},
	URL = {https://doi.org/10.1080/10556789908805766}
}

@article{Hall2012,
  title = {Does Nonlinear Metrology Offer Improved Resolution? Answers from Quantum Information Theory},
  author = {Hall, Michael J. W. and Wiseman, Howard M.},
  journal = {Phys. Rev. X},
  volume = {2},
  issue = {4},
  pages = {041006},
  numpages = {10},
  year = {2012},
  month = {Oct},
  publisher = {American Physical Society},
  doi = {10.1103/PhysRevX.2.041006},
  url = {https://link.aps.org/doi/10.1103/PhysRevX.2.041006}
}

@misc{Gour2024,
      title={Resources of the Quantum World}, 
      author={Gilad Gour},
      year={2024},
      eprint={2402.05474},
      archivePrefix={arXiv},
      primaryClass={quant-ph},
      url={https://arxiv.org/abs/2402.05474}, 
}

@article{Streltsov2017,
  title = {Colloquium: Quantum coherence as a resource},
  author = {Streltsov, Alexander and Adesso, Gerardo and Plenio, Martin B.},
  journal = {Rev. Mod. Phys.},
  volume = {89},
  issue = {4},
  pages = {041003},
  numpages = {34},
  year = {2017},
  month = {Oct},
  publisher = {American Physical Society},
  doi = {10.1103/RevModPhys.89.041003},
  url = {https://link.aps.org/doi/10.1103/RevModPhys.89.041003}
}

@article{vanDam2007,
  title = {Optimal Quantum Circuits for General Phase Estimation},
  author = {van Dam, Wim and D'Ariano, G. Mauro and Ekert, Artur and Macchiavello, Chiara and Mosca, Michele},
  journal = {Phys. Rev. Lett.},
  volume = {98},
  issue = {9},
  pages = {090501},
  numpages = {4},
  year = {2007},
  month = {Mar},
  publisher = {American Physical Society},
  doi = {10.1103/PhysRevLett.98.090501},
  url = {https://link.aps.org/doi/10.1103/PhysRevLett.98.090501}
}

@article{Chiribella2008,
  title = {Quantum Circuit Architecture},
  author = {Chiribella, G. and D'Ariano, G. M. and Perinotti, P.},
  journal = {Phys. Rev. Lett.},
  volume = {101},
  issue = {6},
  pages = {060401},
  numpages = {4},
  year = {2008},
  month = {Aug},
  publisher = {American Physical Society},
  doi = {10.1103/PhysRevLett.101.060401},
  url = {https://link.aps.org/doi/10.1103/PhysRevLett.101.060401}
}

@article{Chiribella_2008,
doi = {10.1209/0295-5075/83/30004},
url = {https://dx.doi.org/10.1209/0295-5075/83/30004},
year = {2008},
month = {jul},
publisher = {},
volume = {83},
number = {3},
pages = {30004},
author = {G. Chiribella and G. M. D'Ariano and P. Perinotti},
title = {Transforming quantum operations: Quantum supermaps},
journal = {Europhys. Lett. },
abstract = {We introduce the concept of quantum supermap, describing the most general transformation that maps an input quantum operation into an output quantum operation. Since quantum operations include as special cases quantum states, effects, and measurements, quantum supermaps describe all possible transformations between elementary quantum objects (quantum systems as well as quantum devices). After giving the axiomatic definition of supermap, we prove a realization theorem, which shows that any supermap can be physically implemented as a simple quantum circuit. Applications to quantum programming, cloning, discrimination, estimation, information-disturbance trade-off, and tomography of channels are outlined.}
}

@article{Aberg2006,
	title={Quantifying superposition},
	author={Johan Aberg},
	journal={arXiv:quant-ph/0612146},
	year={2006},
	url={http://arxiv.org/abs/quant-ph/0612146}
}

@article{Liu2017,
  title = {Resource Destroying Maps},
  author = {Liu, Zi-Wen and Hu, Xueyuan and Lloyd, Seth},
  journal = {Phys. Rev. Lett.},
  volume = {118},
  issue = {6},
  pages = {060502},
  numpages = {6},
  year = {2017},
  month = {Feb},
  publisher = {American Physical Society},
  doi = {10.1103/PhysRevLett.118.060502},
  url = {https://link.aps.org/doi/10.1103/PhysRevLett.118.060502}
}

@article{Diaz2018,
	doi = {10.22331/q-2018-10-19-100},
	url = {https://doi.org/10.22331/q-2018-10-19-100},
	title = {Using and reusing coherence to realize quantum processes},
	author = {Garc{\'{i}}a D{\'{i}}az, M.  and Fang, K. and Wang, X. and Rosati, M. and Skotiniotis, M. and Calsamiglia, J. and Winter, A.},
	journal = {{Quantum}},
	issn = {2521-327X},
	publisher = {{Verein zur F{\"{o}}rderung des Open Access Publizierens in den Quantenwissenschaften}},
	volume = {2},
	pages = {100},
	month = oct,
	year = {2018}
}

@article{Theurer2019,
	title = {Quantifying Operations with an Application to Coherence},
	author = {Theurer, Thomas and Egloff, Dario and Zhang, Lijian and Plenio, Martin B.},
	journal = {Phys. Rev. Lett.},
	volume = {122},
	issue = {19},
	pages = {190405},
	numpages = {6},
	year = {2019},
	month = {May},
	publisher = {American Physical Society},
	doi = {10.1103/PhysRevLett.122.190405},
	url = {https://link.aps.org/doi/10.1103/PhysRevLett.122.190405}
}

@book{Holevo2011,
    author = {Holevo, Alexander},
    title = {Probabilistic and Statistical Aspects of Quantum Theory}  ,
    publisher = {Edizioni della Normale Pisa},
    year = {2011} 
}

@article{Shor1997,
author = {Shor, Peter W.},
title = {Polynomial-Time Algorithms for Prime Factorization and Discrete Logarithms on a Quantum Computer},
journal = {SIAM Journal on Computing},
volume = {26},
number = {5},
pages = {1484-1509},
year = {1997},
doi = {10.1137/S0097539795293172},
}

@article{Chiribella2009,
  title = {Theoretical framework for quantum networks},
  author = {Chiribella, Giulio and D'Ariano, Giacomo Mauro and Perinotti, Paolo},
  journal = {Phys. Rev. A},
  volume = {80},
  issue = {2},
  pages = {022339},
  numpages = {20},
  year = {2009},
  month = {Aug},
  publisher = {American Physical Society},
  doi = {10.1103/PhysRevA.80.022339},
  url = {https://link.aps.org/doi/10.1103/PhysRevA.80.022339}
}

@article{Piani2016,
  title = {Robustness of asymmetry and coherence of quantum states},
  author = {Piani, Marco and Cianciaruso, Marco and Bromley, Thomas R. and Napoli, Carmine and Johnston, Nathaniel and Adesso, Gerardo},
  journal = {Phys. Rev. A},
  volume = {93},
  issue = {4},
  pages = {042107},
  numpages = {13},
  year = {2016},
  month = {Apr},
  publisher = {American Physical Society},
  doi = {10.1103/PhysRevA.93.042107},
  url = {https://link.aps.org/doi/10.1103/PhysRevA.93.042107}
}

@article{Napoli2016,
  title = {Robustness of Coherence: An Operational and Observable Measure of Quantum Coherence},
  author = {Napoli, Carmine and Bromley, Thomas R. and Cianciaruso, Marco and Piani, Marco and Johnston, Nathaniel and Adesso, Gerardo},
  journal = {Phys. Rev. Lett.},
  volume = {116},
  issue = {15},
  pages = {150502},
  numpages = {6},
  year = {2016},
  month = {Apr},
  publisher = {American Physical Society},
  doi = {10.1103/PhysRevLett.116.150502},
  url = {https://link.aps.org/doi/10.1103/PhysRevLett.116.150502}
}

@article{Takagi2019,
  title = {Operational Advantage of Quantum Resources in Subchannel Discrimination},
  author = {Takagi, Ryuji and Regula, Bartosz and Bu, Kaifeng and Liu, Zi-Wen and Adesso, Gerardo},
  journal = {Phys. Rev. Lett.},
  volume = {122},
  issue = {14},
  pages = {140402},
  numpages = {7},
  year = {2019},
  month = {Apr},
  publisher = {American Physical Society},
  doi = {10.1103/PhysRevLett.122.140402},
  url = {https://link.aps.org/doi/10.1103/PhysRevLett.122.140402}
}

@article{Takagi2019b,
  title = {General Resource Theories in Quantum Mechanics and Beyond: Operational Characterization via Discrimination Tasks},
  author = {Takagi, Ryuji and Regula, Bartosz},
  journal = {Phys. Rev. X},
  volume = {9},
  issue = {3},
  pages = {031053},
  numpages = {28},
  year = {2019},
  month = {Sep},
  publisher = {American Physical Society},
  doi = {10.1103/PhysRevX.9.031053},
  url = {https://link.aps.org/doi/10.1103/PhysRevX.9.031053}
}

@article{Skrzypczyk2019a,
  title = {Robustness of Measurement, Discrimination Games, and Accessible Information},
  author = {Skrzypczyk, Paul and Linden, Noah},
  journal = {Phys. Rev. Lett.},
  volume = {122},
  issue = {14},
  pages = {140403},
  numpages = {6},
  year = {2019},
  month = {Apr},
  publisher = {American Physical Society},
  doi = {10.1103/PhysRevLett.122.140403},
  url = {https://link.aps.org/doi/10.1103/PhysRevLett.122.140403}
}

@article{Skrzypczyk2019b,
  title = {All Sets of Incompatible Measurements give an Advantage in Quantum State Discrimination},
  author = {Skrzypczyk, Paul and \ifmmode \check{S}\else \v{S}\fi{}upi\ifmmode \acute{c}\else \'{c}\fi{}, Ivan and Cavalcanti, Daniel},
  journal = {Phys. Rev. Lett.},
  volume = {122},
  issue = {13},
  pages = {130403},
  numpages = {6},
  year = {2019},
  month = {Apr},
  publisher = {American Physical Society},
  doi = {10.1103/PhysRevLett.122.130403},
  url = {https://link.aps.org/doi/10.1103/PhysRevLett.122.130403}
}

@article{Uola2019,
  title = {Quantifying Quantum Resources with Conic Programming},
  author = {Uola, Roope and Kraft, Tristan and Shang, Jiangwei and Yu, Xiao-Dong and G\"uhne, Otfried},
  journal = {Phys. Rev. Lett.},
  volume = {122},
  issue = {13},
  pages = {130404},
  numpages = {6},
  year = {2019},
  month = {Apr},
  publisher = {American Physical Society},
  doi = {10.1103/PhysRevLett.122.130404},
  url = {https://link.aps.org/doi/10.1103/PhysRevLett.122.130404}
}

@article{Mori2020,
  title = {Operational characterization of incompatibility of quantum channels with quantum state discrimination},
  author = {Mori, Junki},
  journal = {Phys. Rev. A},
  volume = {101},
  issue = {3},
  pages = {032331},
  numpages = {7},
  year = {2020},
  month = {Mar},
  publisher = {American Physical Society},
  doi = {10.1103/PhysRevA.101.032331},
  url = {https://link.aps.org/doi/10.1103/PhysRevA.101.032331}
}

@article{Ducuara2020a,
  title = {Operational Interpretation of Weight-Based Resource Quantifiers in Convex Quantum Resource Theories},
  author = {Ducuara, Andr\'es F. and Skrzypczyk, Paul},
  journal = {Phys. Rev. Lett.},
  volume = {125},
  issue = {11},
  pages = {110401},
  numpages = {6},
  year = {2020},
  month = {Sep},
  publisher = {American Physical Society},
  doi = {10.1103/PhysRevLett.125.110401},
  url = {https://link.aps.org/doi/10.1103/PhysRevLett.125.110401}
}

@article{Ducuara2020b,
  title = {Multiobject operational tasks for convex quantum resource theories of state-measurement pairs},
  author = {Ducuara, Andr\'es F. and Lipka-Bartosik, Patryk and Skrzypczyk, Paul},
  journal = {Phys. Rev. Res.},
  volume = {2},
  issue = {3},
  pages = {033374},
  numpages = {12},
  year = {2020},
  month = {Sep},
  publisher = {American Physical Society},
  doi = {10.1103/PhysRevResearch.2.033374},
  url = {https://link.aps.org/doi/10.1103/PhysRevResearch.2.033374}
}

@article{Wagner2024,
  doi = {10.22331/q-2024-02-05-1240},
  url = {https://doi.org/10.22331/q-2024-02-05-1240},
  title = {Coherence and contextuality in a {M}ach-{Z}ehnder interferometer},
  author = {Wagner, Rafael and Camillini, Anita and Galv{\~{a}}o, Ernesto F.},
  journal = {{Quantum}},
  issn = {2521-327X},
  publisher = {{Verein zur F{\"{o}}rderung des Open Access Publizierens in den Quantenwissenschaften}},
  volume = {8},
  pages = {1240},
  month = feb,
  year = {2024}
}

@article{Zhou2024,
  title = {Coherence fraction in Grover's search algorithm},
  author = {Zhou, Si-Qi and Jin, Hai and Liang, Jin-Min and Fei, Shao-Ming and Xiao, Yunlong and Ma, Zhihao},
  journal = {Phys. Rev. A},
  volume = {110},
  issue = {6},
  pages = {062429},
  numpages = {8},
  year = {2024},
  month = {Dec},
  publisher = {American Physical Society},
  doi = {10.1103/PhysRevA.110.062429},
  url = {https://link.aps.org/doi/10.1103/PhysRevA.110.062429}
}

@article{Bu2017,
  title = {Maximum Relative Entropy of Coherence: An Operational Coherence Measure},
  author = {Bu, Kaifeng and Singh, Uttam and Fei, Shao-Ming and Pati, Arun Kumar and Wu, Junde},
  journal = {Phys. Rev. Lett.},
  volume = {119},
  issue = {15},
  pages = {150405},
  numpages = {6},
  year = {2017},
  month = {Oct},
  publisher = {American Physical Society},
  doi = {10.1103/PhysRevLett.119.150405},
  url = {https://link.aps.org/doi/10.1103/PhysRevLett.119.150405}
}

@article{PlenioV07,
  author = {M. B. Plenio and S. Virmani},
  title = {An introduction to entanglement measures},
  journal = {Quantum Inf. Comput.},
  volume = {7},
  number = {1},
  pages = {1--51},
  year = {2007},
  doi = {10.26421/QIC7.1-2-1}
}

\appendix
\maketitle
\pagebreak
\widetext
\newpage


	\begin{center}
		\textbf{Coherence as a resource for optimal phase estimation\\
        Supplemental Material 
        }
	\end{center}
   
\setcounter{secnumdepth}{2} 

We provide the proofs of the results presented in the main text and further discussions on the resource-theoretical concepts we employed to derive these results.

\section{Notation and preliminaries}\label{sec:Preliminaries}
We will restrict ourselves to finite-dimensional quantum systems that we label by Arabic numbers. Quantum states will be denoted by small Greek letters, quantum channels, i.e., linear and completely positive and trace-preserving (CPTP) maps that transform quantum states by calligraphic Latin letters with the exemption of the identity channel which is denoted by $\idChannel$. If required, we explicitly denote the systems on which a quantum channel acts as $\mathcal{M}^{1\leftarrow 0}$ to emphasize that $\mathcal{M}$ is a channel from system $0$ to $1$, i.e., from the set of linear operators $\mathcal{L}(\mathcal{H}_0)$ to $\mathcal{L}(\mathcal{H}_1)$. Quantum supermaps, i.e., linear maps that act on $N$ quantum channels, are denoted by calligraphic letters too, e.g., $\mathcal{S}_N$. 

The (unnormalized) Choi state~\cite{Choi1975} of a quantum channel $\mathcal{N}^{1\leftarrow 0}$ is defined by
\begin{align}\label{eq:ChoiState}
    J_\mathcal{N}:= \left(\idChannel^0 \otimes \mathcal{N}^{1\leftarrow \Tilde{0}} \right) \sum_{n,m} \ketbra{nn}{mm}_{0,\Tilde{0}},
\end{align}
where $\tilde{0}$ is a copy of the input system $0$. The action of a channel can be expressed in the Choi representation as $\mathcal{N}^{1\leftarrow 0}(\rho_0)=\partTr{0}{(\rho_0^T\otimes \id_1)J_\mathcal{N}}$, where $T$ denotes the transpose in the incoherent basis. This provides a one-to-one correspondence between quantum channels and linear operators known as the {\Choi} isomorphism~\cite{Jamiolkowski1972, Choi1975}. If $\mathcal{N}$ is completely positive, then $J_\mathcal{N} \geq 0$, and if $\mathcal{N}$ is trace-preserving, then $\partTr{1}{J_\mathcal{N}}=\id_0$.  Conversely, any (unnormalized) bipartite quantum state that satisfies these conditions corresponds to a quantum channel between the two systems $0$ and $1$. Whenever we speak of a Choi ``state'' we refer to an unnormalized state as in Eq.~\eqref{eq:ChoiState}. We will primarily use the Choi representation of quantum channels throughout this SM. If a formal distinction of the two objects is necessary, we denote the Choi state of a channel $\mathcal{N}$ by $J_\mathcal{N}$ as in Eq.~\eqref{eq:ChoiState}. If clear from the context, we will simply denote both objects by $\mathcal{N}$.

Next, we review how the Choi representation can be extended to supermaps. We will primarily follow Ref.~\cite{Chiribella2009}, which provides a comprehensive picture of the topic. As mentioned in the main text, the most general way to use a set of $N$ channels $\mathcal{N}_1,\ldots, \mathcal{N}_N$ in a laboratory is to plug them into a quantum network~\cite{Chiribella2009}, i.e., to interlace them with a sequence of channels $\mathcal{M}_0,\ldots \mathcal{M}_N$ transforming them to
\begin{align}\label{eq:InterlacingNetworks}
    \mathcal{S}_N[\mathcal{N}_1,\ldots,\mathcal{N}_N]=\mathcal{M}_{N}(\idChannel \otimes \mathcal{N}_N)\mathcal{M}_{N-1}  (\idChannel \otimes \mathcal{N}_1)\mathcal{M}_{N-2}\ldots \mathcal{M}_1 (\idChannel \otimes \mathcal{N}_1)\mathcal{M}_0.
\end{align}
This network implements a so-called supermap $\mathcal{S}_N$, which is depicted on the right side of Fig.~\ref{fig:NetworkvsComb}. One can now assign an analogue of the Choi state, called a quantum comb, to the entire network of quantum channels. Recall that for each of these channels $\mathcal{M}_j$, $J_{\mathcal{M}_j}$ denotes its Choi state, which is an (unnormalized) state on the joint system of its (accessible) input and output $2j$ and $2j+1$ and its (inaccessible) auxiliary systems $A_j,A_{j+1}$, see Fig.~\ref{fig:NetworkvsComb}. The comb $J_{\mathcal{S}_N}$ associated with the network is defined as
\begin{align}\label{eq:NetworkLinkProduct}
    J_{\mathcal{S}_N}&=  \partTr{ A_1,\ldots, A_N}{\left( \id_{A_2,\ldots, A_N, 2,\ldots, 2N+1} \otimes J_{\mathcal{M}_0}^{T_{A_1}}\right) \ldots \Big( \id_{A_1,\ldots, A_{N-1},0,\ldots, 2N-1} \otimes J_{\mathcal{M}_N}\Big)} \nonumber \\
    &=\partTr{ A_1,\ldots, A_N}{\left( \id \otimes J_{\mathcal{M}_0}^{T_{A_1}}\right) \left( \id \otimes J_{\mathcal{M}_1}^{T_{A_1,A_2}}\right) \ldots \Big( \id \otimes J_{\mathcal{M}_N}\Big)},
\end{align}
where $T_X$ denotes the partial transpose on system $X$, and in the second line we suppressed the subscripts on identities  since they are fixed by the Choi state they are tensored with.

\begin{figure}[ht]
    \centering
    \includegraphics[width=1\linewidth]{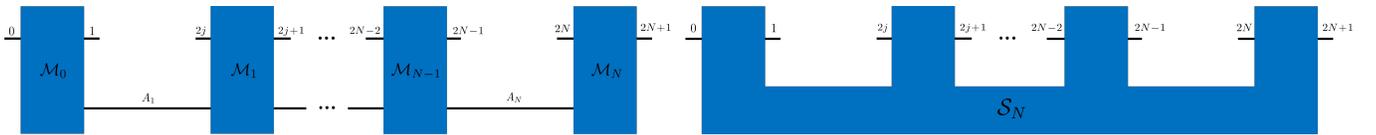}
    \caption{A quantum network comprised of channels $\mathcal{M}_0,\ldots, \mathcal{M}_N$ as depicted on the left side is a specific implementation of a supermap $\mathcal{S}_N$ shown on the right.}
    \label{fig:NetworkvsComb}
\end{figure}

Different networks can lead to the same effective dynamics on the accessible input and output systems. For example, we could insert any unitary followed by its inverse into the auxiliary systems in Fig.~\ref{fig:NetworkvsComb} between two successive channels $\mathcal{M}_j$ and $\mathcal{M}_{j+1}$, and absorb the unitaries into new channels $\tilde{\mathcal{M}}_j$ and $\tilde{\mathcal{M}}_{j+1}$, resulting in the same accessible dynamics as the original network. To depict the supermap $\mathcal{S}_N$ that describes these effective and accessible dynamics, one uses the serrated object on the right side of Fig.~\ref{fig:NetworkvsComb}, where the inaccessible auxiliary systems are hidden. 
Importantly, two networks implement the same supermap iff their combs are the same~\cite{Chiribella2008,Chiribella_2008,Chiribella2009}. This justifies writing $ J_{\mathcal{S}_N}$ in Eq.~\eqref{eq:NetworkLinkProduct}. An operator that is a comb satisfies certain constraints that guarantee that the corresponding supermap can be implemented by a network of quantum channels. As shown in Refs.~\cite{Chiribella2008,Chiribella2009}, an operator $J_{\mathcal{S}_N} \in \mathcal{L} ( \otimes_{j=0}^{2N+1} \mathcal{H}_j )$ corresponds to a deterministic supermap $\mathcal{S}_N$ with definite causal order  with input systems labeled by $\{0,2,\ldots, 2N\}$ and output systems $\{1,3,\ldots, 2N+1\}$ iff there exists a sequence of positive semidefinite operators $J_{\mathcal{S}_N}^{(j)} \in \mathcal{L} ( \otimes_{j=0}^{2j+1} \mathcal{H}_j ) $ with $0\leq j \leq N$  and $J_{\mathcal{S}_N}=J_{\mathcal{S}_N}^{(N)}$ that satisfy the normalization conditions
\begin{subequations}\label{eq:TPcondition}
    \begin{align}
    & \partTr{2j+1}{J_{\mathcal{S}_N}^{(j)}}=\id_{2j} \otimes J^{(j-1)}_\mathcal{N}  \quad \forall\, 1\leq j \leq N  \\
    & \partTr{1}{J_{\mathcal{S}_N}^{(0)}}=\id_0.
\end{align}
\end{subequations}
These conditions are semidefinite constraints and, most importantly, describe the action of a supermap via a single object rather than a product of (Choi representations of) channels. The set of all such quantum combs is denoted as $ \Comb{\mathcal{H}_0,\ldots,\mathcal{H}_{2N} \to \mathcal{H}_1,\ldots,\mathcal{H}_{2N+1}}$. Throughout this work, we will only consider such supermaps with a definite causal order. The system labels appearing in our combs will thus always denote a specific physical system at a fixed point in time.

If we want to compose two supermaps $\mathcal{N}_N, \mathcal{M}_M$ we interlace them, for example as shown in Fig.~\ref{fig:CompositionExample}. The comb of the resulting supermap can be obtained via the so-called link product~\cite{Chiribella2009}. Let $N_{\In}$ and $N_{\Out}$ denote the set of all input systems of $\mathcal{N}_N$  respectively. Moreover, let $N_{\all}=N_{\In} \cup N_{\Out}$ and define the analogous quantities for $\mathcal{M}_M$.
The resulting comb is given by the link product defined as
\begin{align}\label{eq:defLinkProduct}
    J_{\mathcal{N}_N}*J_{\mathcal{M}_M}:=\partTr{ M_{\all} \cap N_{\all}}{\Big( \id_{ N_{\all} \backslash  M_{\all} }\otimes J_{\mathcal{M}_M}^{T_{ M_{\all}\cap N_{\all}}}\Big)\Big(J_{\mathcal{N}_N}\otimes \id_{ M_{\all} \backslash  N_{\all}}\Big)}.
\end{align}
Whenever we compose supermaps, we assume that this leads to a consistent causal order, i.e., outputs of one supermap cannot be used as inputs for another supermap at a previous (time) step.
\begin{figure}[ht]
    \centering
    \scalebox{0.7}{\includegraphics[width=1\textwidth]{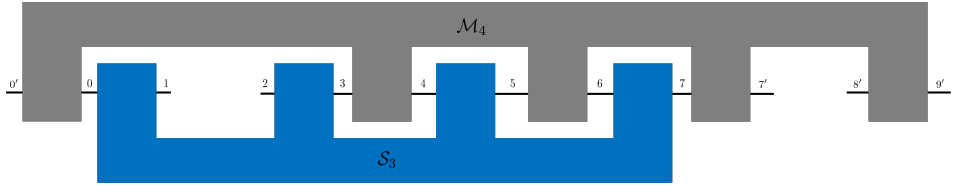}}
    \caption{Schematic depiction of a composition of two supermaps $\mathcal{S}_3$ with input and output systems given by $N_{\In}=\{0,2,4,6\}$ and $N_{\Out}=\{1,3,5,7\}$ and $\mathcal{M}_4$ with $M_{\In}=\{0^\prime, 3, 5,7,8^\prime \}$ and  $M_{\Out}=\{0,4,6,7^\prime,9^\prime\}$. The resulting supermap $\mathcal{W}_O$ has input and output systems $O_{\In}=\{0^\prime,2,8^\prime\}$ and $O_{\Out}=\{1, 7^\prime, 9^\prime\}$.}
    \label{fig:CompositionExample}
\end{figure}

\section{Incoherent quantum supermaps}\label{sec:IncohCombs}
Given a set of free states, a consistent set of free channels cannot create non-free states from the free ones. This is known as the Golden Rule of resource theories~\cite{Chitambar2019,Gour2024}. For the resource theory of coherence, this means that an incoherent channel $\mathcal{M}$ cannot transform an incoherent state into a coherent one, i.e., for any $\sigma \in \I$, we have that $\mathcal{M}(\sigma) \in \I.$ The set of free channels is not unique and different choices are well-motivated in different physical settings, see Ref.~\cite{Streltsov2017} for an overview. Importantly, any set of free operations must be closed under arbitrary compositions in a complete sense including sequential and parallel composition, i.e., $\mathcal{N} \circ \mathcal{M}$ and $\mathcal{N} \otimes \mathcal{M}$ must be free whenever $\mathcal{N}$ and $\mathcal{M}$ are free and have suitable input/output systems, but, e.g., also $\left(\mathcal{S}^{C\leftarrow B_1}\otimes\idChannel^{B_2}\right)\circ \mathcal{T}^{B_1,B_2\leftarrow A}$ must be free whenever $\mathcal{S}$ and $\mathcal{T}$ are. This requirement stems again from the idea that combining free objects should not lead to something valuable. The largest set of channels that satisfies these consistency conditions for coherence theory and which encompasses all other sets of incoherent channels is known as the maximally incoherent operations (MIO), see Refs.~\cite{Aberg2006,Liu2017,Diaz2018}. This set consists of all channels that satisfy the simple condition $\mathcal{M}\circ\Delta=\Delta\circ\mathcal{M}\circ\Delta$. In the following, we will consider MIO as free channels because this will allow us to use the provided coherence in an optimal manner.

As mentioned in the previous section, an optimal strategy for phase estimation will, in general, require the usage of a quantum supermap. To investigate optimal phase estimation with restricted coherence, we must thus also divide the set of supermaps into free and non-free ones. Like for quantum channels, a Golden Rule for free supermaps is that they cannot create resources for free and that the set of supermaps has to be consistent in the sense that it is closed under arbitrary compositions in a complete sense, i.e., even if we concatenate free supermaps only partially. In the context of coherence theory, consider the following Definition from the main text, which we repeat for readability.

\begin{defin}\label{def:MIONetworkSM}     
    A quantum supermap $\mathcal{S}_N$ corresponds to a MIO network and is considered free if its action can be decomposed into a sequence of MIO channels $\mathcal{M}_0,\ldots, \mathcal{M}_N \in \MIO$, i.e., $\mathcal{S}_N[\mathcal{N}_1,\ldots, \mathcal{N}_N]=\mathcal{M}_{N}(\idChannel \otimes \mathcal{N}_N)\mathcal{M}_{N-1}\ldots \mathcal{M}_1 (\idChannel \otimes \mathcal{N}_1)\mathcal{M}_0$ for all channels $\mathcal{N}_1,\ldots, \mathcal{N}_N$.  
\end{defin}

According to the following Proposition, this is a consistent choice of the free supermaps.
\begin{prop}
    The set of MIO networks is closed under arbitrary compositions.
\end{prop}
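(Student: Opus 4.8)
The plan is to reduce the statement to two ingredients: the stability of the channel class $\MIO$ under the elementary constructions out of which quantum networks are built, and the fact that composing two supermaps introduces no genuinely new channel into a network realization---it merely wires two finite circuits (with open slots) together, padding with identity channels those wires that pass through a given time step. First I would record the needed closure properties of $\MIO$, all of which follow in one line from the defining identity $\mathcal{M}\Delta=\Delta\mathcal{M}\Delta$ together with the fact that total dephasing factorizes over tensor products, $\Delta_{AB}=\Delta_A\otimes\Delta_B$: (a) $\idChannel\in\MIO$; (b) if $\mathcal{M},\mathcal{N}\in\MIO$ then $\mathcal{N}\circ\mathcal{M}\in\MIO$, since $\mathcal{N}\mathcal{M}\Delta=\mathcal{N}\Delta\mathcal{M}\Delta=\Delta\mathcal{N}\Delta\mathcal{M}\Delta=\Delta\mathcal{N}\mathcal{M}\Delta$; and (c) if $\mathcal{M},\mathcal{N}\in\MIO$ then $\mathcal{M}\otimes\mathcal{N}\in\MIO$, since $(\mathcal{M}\otimes\mathcal{N})(\Delta\otimes\Delta)=(\Delta\mathcal{M}\Delta)\otimes(\Delta\mathcal{N}\Delta)=(\Delta\otimes\Delta)(\mathcal{M}\otimes\mathcal{N})(\Delta\otimes\Delta)$. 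Taking $\mathcal{N}=\idChannel$ in (c) and combining with (b) yields closure of $\MIO$ under the ``partial'' compositions $(\mathcal{S}\otimes\idChannel)\circ\mathcal{T}$ discussed above, and hence, inductively, under any finite alternation of sequential compositions and tensorings with identity channels on bystander systems.

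Next I would make the composition of networks explicit. Fix network realizations $\mathcal{S}_N[\mathcal{N}_1,\ldots,\mathcal{N}_N]=\mathcal{M}_N(\idChannel\otimes\mathcal{N}_N)\cdots\mathcal{M}_1(\idChannel\otimes\mathcal{N}_1)\mathcal{M}_0$ with all teeth $\mathcal{M}_i\in\MIO$, and likewise $\mathcal{T}_M$ with teeth $\mathcal{L}_0,\ldots,\mathcal{L}_M\in\MIO$, and consider any legal wiring of $\mathcal{S}_N$ and $\mathcal{T}_M$ into a supermap $\mathcal{W}$ with consistent causal order (as in Fig.~\ref{fig:CompositionExample}). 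Since such a wiring only identifies accessible output legs of one supermap with accessible input legs of the other while leaving the unused slots untouched, $\mathcal{W}$ is realized by the circuit obtained by interleaving the two tooth-sequences into the common causal order forced by the identifications and padding with identity channels on the wires that merely pass through a given layer; the comb of this circuit equals the link product of the two original combs by the network/link-product formalism of Ref.~\cite{Chiribella2009}. Every channel in this circuit is then one of the original $\mathcal{M}_i$ or $\mathcal{L}_j$, at worst tensored with identity channels and (if one insists on the strict alternating form of Definition~\ref{def:MIONetworkSM}) composed with an adjacent tooth; by the closure properties above all these channels lie in $\MIO$, so $\mathcal{W}$ is again a MIO network. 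Parallel composition is the special case in which the two circuits run side by side (pad the shorter one with identity teeth), and the case of finitely many networks follows by induction.

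The resource-theoretic content of the argument is thus essentially trivial; the real work lies in the second step, namely in verifying carefully that the merged circuit is a bona fide quantum network whose comb is the link-product composition, and that all of its gates are precisely the claimed $\MIO$-derived channels. This is bookkeeping within the comb formalism of Ref.~\cite{Chiribella2009}, and it uses in an essential way the paper's standing assumption that all compositions have consistent causal order, which is exactly what allows the two tooth-sequences to be linearized into a single one. I expect this linearization-and-routing step to be the main obstacle, though it is combinatorial rather than conceptually deep.
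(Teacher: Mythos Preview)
Your proposal is correct and follows essentially the same approach as the paper's proof, which simply observes that composing two MIO networks amounts to interlacing their circuits and then invokes closure of $\MIO$ under sequential and parallel composition (together with $\idChannel\in\MIO$). Your version is more explicit in verifying those closure properties and in spelling out the interleaving/padding step, but the underlying argument is identical.
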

\begin{proof}
    Composing two supermaps associated with a MIO network is equivalent to interlacing two MIO networks as in Eq.~\eqref{eq:defLinkProduct}. Since the set MIO is closed under arbitrary sequential and parallel compositions (and the fact that the identity channel is in MIO), it immediately follows that the resulting network is again a MIO network.
\end{proof}
This implies in particular that a MIO network can neither create coherence nor transform channels in MIO into channels that are not in MIO. Moreover, this choice of free operations is intuitive in an operational sense: The networks that we can build out of MIO channels and thus for free are exactly the MIO networks. Note that free supermaps can also be constructed in this manner for other sets of free operations, not necessarily related to coherence. However, the set of MIO networks seems to be difficult to characterize. We thus consider the potentially larger set of MIO-compatible superchannels defined in the main text, which we repeat for readability in the Choi representation here.

\begin{defin}\label{def:DeltaConditionsNetworkSM} 
A quantum supermap $\mathcal{S}_N$ is called MIO-compatible if its Choi state $J_{\mathcal{S}_N}$ satisfies
\begin{align}
    \Delta_{0,2,\ldots, 2j} J_{\mathcal{S}_N}= \Delta_{0,2,\ldots, 2j} \Delta_{1,3,\ldots, 2j+1}J_{\mathcal{S}_N} \quad  \forall  j: 0\leq j \leq N,
\end{align}
\end{defin}

As we show in the following Proposition, the set of MIO-compatible supermaps indeed contains the set of MIO networks.


\begin{prop}\label{prop:MIOnetworkSubsetMIOCompatible}
    The set of MIO networks in Definition~\ref{def:MIONetwork} is a subset of the set of MIO-compatible supermaps in Definition~\ref{def:DeltaConditionsNetworkSM}.
\end{prop}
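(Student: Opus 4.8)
The plan is to prove the two halves of the statement separately: (a) every MIO network is MIO-compatible (the inclusion $\subseteq$), and (b) the inclusion is strict, i.e.\ some MIO-compatible supermap admits no decomposition into MIO channels.

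For (a) I would use the characterization recalled in the main text (see Definition~\ref{def:DeltaConditionsNetwork} and the surrounding discussion): $\mathcal{S}_N$ is MIO-compatible iff, whenever incoherent states are fed into the first $j$ input teeth (systems $0,2,\dots,2j$), the first $j$ output teeth (systems $1,3,\dots,2j+1$) carry incoherent states, for every $0\le j\le N$. Given an MIO network with decomposition $\mathcal{M}_0,\dots,\mathcal{M}_N\in\MIO$, this follows by a short induction. Suppose incoherent states enter systems $0,2,\dots,2j$. Since $\mathcal{M}_0\in\MIO$, its output on system $1$ together with the memory is incoherent; the (incoherent) memory tensored with the incoherent input at system $2$ is again incoherent, so $\mathcal{M}_1\in\MIO$ produces an incoherent state on system $3$ and on the next memory; iterating up to $\mathcal{M}_j$ shows that systems $1,3,\dots,2j+1$ all carry incoherent states, while whatever happens at the later teeth $2j+3,\dots$ is irrelevant to this condition. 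This uses only that $\MIO$ is closed under composition and under tensoring with $\idChannel$ (both immediate from $\mathcal{M}\Delta=\Delta\mathcal{M}\Delta$, and $\idChannel\in\MIO$), and that tensor products and partial traces of incoherent states are incoherent. As $j$ was arbitrary, $\mathcal{S}_N$ is MIO-compatible, which is the inclusion.

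For (b) I would exhibit an explicit MIO-compatible supermap that is not an MIO network. The structural gap to exploit is that an MIO channel $\mathcal{M}_k$ with an additional memory output, when fed an incoherent input, must produce an incoherent state on the \emph{joint} system of its accessible output and its memory, whereas the $\Delta$-conditions of Definition~\ref{def:DeltaConditionsNetworkSM} constrain only the accessible comb systems and only for incoherent inputs to the teeth. Hence there is room for a supermap realized by a pre-processing channel which, on incoherent input, outputs an incoherent state on the accessible wire but a \emph{coherent} memory state perfectly correlated with the incoherent accessible value, followed by a post-processing channel that erases this memory coherence whenever the slot wire carries an incoherent state but genuinely consumes it once a coherence-creating channel is plugged into the slot. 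Concretely, I would take $N=1$ with trivial accessible input, so $\mathcal{M}_0$ is a state preparation: let it prepare $\sum_k p_k\ketbra{k}{k}_1\otimes\tau_A^{(k)}$ with the $\tau_A^{(k)}$ coherent, choose $\mathcal{M}_1$ so that $\mathcal{M}_1(\tau_A^{(k)}\otimes\sigma)$ is incoherent for every incoherent $\sigma$ on the slot wire (which makes both $\Delta$-conditions hold; they are finitely many linear constraints on the comb and can be checked directly), and then show that no MIO realization reproduces the supermap: in any MIO realization the pre-processing channel prepares a \emph{fully} incoherent state on wire-plus-memory, so the downstream channel only ever sees an incoherent memory, and comparing the resulting dependence $\mathcal{N}\mapsto\mathcal{S}_1[\mathcal{N}]$ on a suitably chosen coherence-creating slot channel $\mathcal{N}$ with that of our supermap yields a contradiction. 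The obstruction that makes this contradiction unavoidable is exactly the absence of a free (incoherent) dilation for MIO channels~\cite{Chitambar2016,Chitambar2016b,Chitambar2017}, which the argument invokes at this last step.

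Part (a) is routine. The main obstacle is part (b): one needs a comb that can be \emph{proven} not to be an MIO network, which requires controlling all of its possible MIO realizations — subtle, because a comb does not fix its dilation — while simultaneously verifying that every $\Delta$-condition holds. The cleanest way to package this is to engineer the example so that non-realizability reduces to the known non-dilatability of a specific MIO channel, rather than to an ad hoc computation; producing and verifying such an example is the part of the proof that requires real care.
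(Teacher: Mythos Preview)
Your part~(a) is correct and mirrors the paper's argument; the paper runs the same induction in the Choi picture rather than operationally, but the content is identical.

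Part~(b) has a genuine gap: a counterexample with \emph{trivial} input system $0$ cannot exist. When $\dim\mathcal{H}_0=1$, the condition $\Delta_0 J_{\mathcal{S}_1}=\Delta_0\Delta_1 J_{\mathcal{S}_1}$ becomes $J_{\mathcal{S}_1}=\Delta_1 J_{\mathcal{S}_1}$, forcing $J_{\mathcal{S}_1}=\sum_k p_k\ketbra{k}{k}_1\otimes J_k$; the comb normalization together with the second $\Delta$-condition then makes each $J_k$ the Choi state of some channel $\mathcal{N}_k\in\MIO$. But every such comb \emph{is} an MIO network: take the MIO preparation $\mathcal{E}_0$ that outputs the incoherent state $\sum_k p_k\ketbra{k}{k}_1\otimes\ketbra{k}{k}_{A}$, and the post-processing $\mathcal{E}_1(\rho_{2,A})=\sum_k\mathcal{N}_k\bigl(\bra{k}_A\rho\ket{k}_A\bigr)$, which is MIO because each $\mathcal{N}_k$ is. A direct link-product check returns exactly $J_{\mathcal{S}_1}$. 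So the coherent memory states $\tau_A^{(k)}$ you hoped to exploit are classically simulable at the level of the comb: diagonality on system~$1$ decouples the $k$-sectors, and within each sector $\sigma\mapsto\mathcal{M}_1(\tau_A^{(k)}\otimes\sigma)$ is just a fixed MIO channel. Your appeal to the non-existence of free dilations for MIO does not rescue this, because that obstruction concerns dilating a single MIO channel incoherently, whereas what is needed here is only a joint MIO realization of a family of MIO channels indexed by a \emph{classical} label---and that always exists.

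The paper's counterexample accordingly uses a non-trivial input system~$0$. It builds an explicit qubit superchannel (with an extra output $\tilde{3}$) whose $\tilde{3}$-output distinguishes inputs of the form $\alpha\ket{00}_{0,2}+\beta\ket{11}_{0,2}$ from $\gamma\ket{10}_{0,2}+\delta\ket{01}_{0,2}$, verifies the $\Delta$-conditions directly, and then rules out any MIO decomposition by writing down Stinespring isometries for the hypothetical $\mathcal{E}_1,\mathcal{E}_2\in\MIO$, constraining them via purification against the known action on product and entangled inputs, and reaching a contradiction. The essential feature your template misses is that the coherence obstructing an MIO realization must enter through the accessible input~$0$ and be correlated with what happens at the slot; it cannot be manufactured internally when the first tooth has no input.
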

\begin{proof}
    That every MIO network is also MIO-compatible is essentially clear by inspection. Nevertheless, we will provide a technical proof in the following: Consider a network comprised of a sequence of channels $\mathcal{M}_0,\ldots, \mathcal{M}_N \in \MIO$, where we employ our usual system labeling, i.e.,  $\mathcal{M}_j^{2j+1,A_{j+1} \leftarrow 2j, A_j}$ for $0 \leq j\leq N$ and $A_0$ and $A_{N+1}$ are trivial systems. Recall that the comb of the supermap $\mathcal{S}_N$ defined by this network is given by 
    \begin{align}
        J_{\mathcal{S}_N}&=\partTr{ A_1,\ldots, A_N}{\left( \id \otimes J_{\mathcal{M}_0}^{T_{A_1}}\right) \left( \id \otimes J_{\mathcal{M}_1}^{T_{A_1,A_2}}\right) \ldots \Big( \id \otimes J_{\mathcal{M}_N}\Big)}.
    \end{align}
    Recall also that $\mathcal{M}_0,\ldots, \mathcal{M}_N \in \MIO$ is equivalent to $\Delta_0 J_{\mathcal{M}_0}= \Delta_0 \Delta_{1, A_1}J_{\mathcal{M}_0},\ \Delta_{2N,A_N} J_{\mathcal{M}_N}= \Delta_{2N,A_N} \Delta_{2N+1}J_{\mathcal{M}_0}$ and $\Delta_{2j, A_j} J_{\mathcal{M}_j}= \Delta_{2j, A_j} \Delta_{2j+1, A_{j+1}}J_{\mathcal{M}_j}$ for all other $j$. The same holds for the partially transposed Choi states. This allows us to check the conditions in Definition~\ref{def:DeltaConditionsNetworkSM} starting from the first one 
    \begin{align}
        \Delta_0 J_{\mathcal{S}_N} &= \partTr{ A_1,\ldots, A_N}{\left( \id \otimes \Delta_0 J_{\mathcal{M}_0}^{T_{A_1}}\right) \left( \id \otimes J_{\mathcal{M}_1}^{T_{A_1,A_2}}\right) \ldots \Big( \id \otimes J_{\mathcal{M}_N}\Big)} \nonumber \\
        &= \partTr{ A_1,\ldots, A_N}{\left( \id \otimes \Delta_0 \Delta_{1,A_1} J_{\mathcal{M}_0}^{T_{A_1}}\right) \left( \id \otimes J_{\mathcal{M}_1}^{T_{A_1,A_2}}\right) \ldots \Big( \id \otimes J_{\mathcal{M}_N}\Big)} \nonumber \\
        &=\Delta_0 \Delta_1 J_{\mathcal{S}_N}.
    \end{align}
    We now proceed by induction. Assume that $\Delta_{0,\ldots, 2j} J_{\mathcal{S}_N}=\Delta_{0,\ldots, 2j} \Delta_{1,\ldots, 2j+1} J_{\mathcal{S}_N}$ for some $j$. Then, 
    \begin{align}
        \Delta_{0,\ldots, 2j+2} J_{\mathcal{S}_N}&= \Delta_{0,\ldots, 2j+2}\partTr{ A_1,\ldots, A_N}{\left( \id \otimes  J_{\mathcal{M}_0}^{T_{A_1}}\right) \!\ldots\! \left( \id \otimes J_{\mathcal{M}_j}^{T_{A_j,A_{j+1}}}\right) \left( \id \otimes J_{\mathcal{M}_{j+1}}^{T_{A_{j+1},A_{j+2}}}\right)\!\ldots\! \Big( \id \otimes J_{\mathcal{M}_N}\Big)} \nonumber \\
        &= \partTr{ A_1,\ldots, A_N}{\left( \id \otimes  \Delta_0 J_{\mathcal{M}_0}^{T_{A_1}}\right) \!\ldots\! \left( \id \otimes \Delta_{2j} J_{\mathcal{M}_j}^{T_{A_j,A_{j+1}}}\right) \left( \id \otimes \Delta_{2j+2}J_{\mathcal{M}_{j+1}}^{T_{A_{j+1},A_{j+2}}}\right)\!\ldots\! \Big( \id \otimes J_{\mathcal{M}_N}\Big)} \nonumber\\
        &=  \Delta_{0,\ldots, 2j} \Delta_{1,\ldots, 2j+1} \partTr{ A_1,\ldots, A_N}{\left( \id \otimes  \Delta_{0}J_{\mathcal{M}_0}^{T_{A_1}}\right)\! \ldots\! \left( \id \otimes \Delta_{2j+2,A_{j+1}} J_{\mathcal{M}_{j+1}}^{T_{A_{j+1},A_{j+2}}}\right)\!\ldots \!\Big( \id \otimes J_{\mathcal{M}_N}\Big)} \nonumber  \\
        &= \Delta_{0,\ldots, 2j+2} \Delta_{1,\ldots, 2j+3} J_{\mathcal{S}_N},
    \end{align}
where we used the induction hypothesis in the second-to-last line. Thus, we find that $\Delta_{0,2,\ldots, 2j} J_{\mathcal{S}_N}= \Delta_{0,2,\ldots, 2j} \Delta_{1,3,\ldots, 2j+1}J_{\mathcal{S}_N}$ for all $ 0\leq j \leq N.$
\end{proof}

Lastly, the set of MIO-compatible supermaps is closed under arbitrary compositions (that respect causal ordering), see the following Lemma. This guarantees that a MIO-compatible supermap cannot transform MIO channels into non-MIO channels, thereby ensuring that this set is resource-theoretically consistent too.

\begin{lem}\label{lem:ClosedUnderComposition}
    The set of MIO-compatible supermaps in Definition~\ref{def:DeltaConditionsNetworkSM} is closed under \textit{arbitrary} concatenations that respect causality.
\end{lem}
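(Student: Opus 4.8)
The plan is to propagate the dephasing conditions of Definition~\ref{def:DeltaConditionsNetworkSM} through the link product~\eqref{eq:defLinkProduct}, in the same spirit as the proof of Proposition~\ref{prop:MIOnetworkSubsetMIOCompatible}. It suffices to treat the composition of two MIO-compatible supermaps $\mathcal{S}$ and $\mathcal{T}$, since any causality-respecting concatenation is a finite iteration of such binary link products (equivalently, one runs the single induction below on the joint wire set of all components at once). Write $\mathcal{W}=\mathcal{S}*\mathcal{T}$, so that
\[
    J_{\mathcal{W}}=\partTr{K}{\left(\id\otimes J_{\mathcal{T}}^{T_K}\right)\left(J_{\mathcal{S}}\otimes\id\right)},
\]
where $K$ collects the wires shared by $\mathcal{S}$ and $\mathcal{T}$; each such wire is an output of one factor and an input of the other, and ``respects causality'' means precisely that the causal orders of $\mathcal{S}$ and $\mathcal{T}$ extend to a single total order on all their wires. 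After padding with trivial (one-dimensional) systems, $J_{\mathcal{W}}$ is a comb in the canonical form of Definition~\ref{def:DeltaConditionsNetworkSM}, whose input (output) wires are the non-shared input (output) wires of $\mathcal{S}$ and $\mathcal{T}$, and padding is harmless since trivial wires are automatically incoherent.

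The core of the argument is a cascade along this total order, built from three elementary facts about total dephasing $\Delta_w$ on a wire $w$: $\Delta_w$ commutes with $\partTr{X}{\cdot}$ for $w\notin X$; $\Delta_w\bigl(A^{T_w}\bigr)=\Delta_w A=\bigl(\Delta_w A\bigr)^{T_w}$; and, for a shared wire $w$, $(\Delta_w A)*B=A*(\Delta_w B)=(\Delta_w A)*(\Delta_w B)$, because contracting over $w$ couples only the diagonal blocks once one side is dephased there. Hence dephasing a non-shared wire of $J_{\mathcal{W}}$ equals dephasing it inside the factor it belongs to, while dephasing a shared wire can be freely moved between the two factors. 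Now induct on the total causal order: the claim is that once the input wires of $\mathcal{W}$ at positions $\le t$ are dephased, every wire of $\mathcal{S}$ and of $\mathcal{T}$ at position $\le t$ is already dephased in the link product. At a position that is an input wire of $\mathcal{W}$ this holds by hypothesis (or trivially for padding); at a position that is an output wire $w$ of a factor $\mathcal{P}\in\{\mathcal{S},\mathcal{T}\}$ (whether a $\mathcal{W}$-output or a shared wire feeding the other factor), one invokes the MIO-compatibility of $\mathcal{P}$ at the level of the tooth containing $w$: its hypothesis is that all input wires of $\mathcal{P}$ strictly before $w$ in $\mathcal{P}$'s own order are dephased, and these all lie at positions $<t$, hence are dephased by the inductive assumption --- each being either a $\mathcal{W}$-input wire (dephased by assumption) or a shared wire that, as an output of the other factor at an earlier position, was already dephased in an earlier step of the cascade. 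Taking $t$ to be the position of the $\mathcal{W}$-output wire $2j+1$, the inputs at positions $\le t$ are exactly $\{0,2,\ldots,2j\}$ and the conclusion includes that $\{1,3,\ldots,2j+1\}$ are dephased, which is the condition of Definition~\ref{def:DeltaConditionsNetworkSM} for $J_{\mathcal{W}}$ at index $j$; since $j$ was arbitrary, $\mathcal{W}$ is MIO-compatible.

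I expect the main obstacle to be bookkeeping rather than anything conceptual: one must fix the total causal order so that each factor's order embeds in it, track exactly which wires are shared and how the partial transposes in the link product interact with the dephasing maps, and state the inductive hypothesis tightly enough that the premise of each factor's MIO-compatibility condition is literally available at the step where it is used --- this is where ``strictly before $w$'' in $\mathcal{P}$'s own order, the existence of the total order guaranteed by ``respects causality'', and the padding all matter. A cleaner but equivalent route first establishes the operational reformulation of Definition~\ref{def:DeltaConditionsNetworkSM} --- that $\mathcal{S}_N$ is MIO-compatible iff plugging MIO channels (with arbitrary references) into an initial segment of its slots and an incoherent state into its global input always produces an incoherent state at the next hand-off --- and then observes that the internal connections of $\mathcal{W}$ are identity channels, which lie in $\MIO$; the incoherence then cascades through the combined network exactly as above, with the MIO-compatibility of each factor and the MIO-ness of each connection and of each plugged-in channel invoked alternately along the causal order.
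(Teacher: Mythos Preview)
Your proposal is correct and follows essentially the same approach as the paper: a recursive cascade of dephasings along the total causal order, alternately invoking the MIO-compatibility of each factor at the tooth whose output is currently being reached. The paper's own proof is merely an informal sketch of this cascade on the example of Fig.~\ref{fig:CompositionExample}, whereas you have spelled out the link-product bookkeeping (how $\Delta_w$ interacts with $\partTr{K}{\cdot}$, $T_w$, and shared wires) and phrased the cascade as a clean induction on the total order; conceptually the arguments are identical.
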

\begin{proof}
 That the set of MIO-compatible supermaps is closed under arbitrary compositions becomes clear by, for example, looking at Fig.~\ref{fig:CompositionExample} and assuming that both supermaps in the composition are MIO-compatible:
    By applying a dephasing to the first input of the resulting supermap, $\tilde{0}$, we obtain a dephasing on $0$ too because $\mathcal{M}$ is MIO-compatible. Since $\mathcal{S}$ is also MIO-compatible, this implies dephasing on $1$ too, and we showed that dephasing on the first input of the resulting supermap implies dephasing on its first output. Now applying a dephasing to $\tilde{0}$ and $2$, i.e., the first two inputs of the resulting supermap, and using the fact that both $\mathcal{M}$ and $\mathcal{S}$ are MIO-compatible, we can recursively add dephasing channels to show that this implies a dephasing channel on system $\tilde{7}$ too. Since this recursive method can be applied to arbitrary concatenations of supermaps, the Lemma follows.
\end{proof}

\section{MIO-compatible supermaps}
Recall that MIO is the largest set of channels that is consistent with incoherent states in the sense that it maps incoherent states to incoherent states and is closed under arbitrary compositions that lead to a new channel. In the following, we will see that the set of MIO-compatible supermaps has a similar property, which endows this set with an operational meaning that justifies its name. To this end, we will need the following Lemma and Corollary. 

\begin{lem}\label{lem:ModifiedMIOTester}
Let $1,2,B$ be quantum systems and let $\ket\psi$ and $\ket\phi$ be normalized states of sytem B such that
\begin{align}
    \ketbra{\psi}{\psi}+\ketbra{\phi}{\phi}=\Delta \left(\ketbra{\psi}{\psi}+\ketbra{\phi}{\phi} \right).
\end{align}
Let $d_X$ denotes the dimension of Hilbert space $X$ and let
\begin{align}
    D_{k,l}=\left( \id -\ketbra{k}{k}-\ketbra{l}{l}\right)_1 \otimes \frac{\id_2}{d_2} \otimes \frac{\id_B}{d_B}.
\end{align}
Then, for every $m,n$ and $k\ne l$, the matrices
     \begin{align}
         &J_\mathcal{N}=  D_{k,l} + \frac{1}{2} \left(\ketbra{kn}{kn}+\ketbra{lm}{lm}\right)_{1,2} \otimes \left(\ketbra{\psi}{\psi}+\ketbra{\phi}{\phi}\right)_B +\frac{1}{2}\left(\ketbra{kn}{lm}+\ketbra{lm}{kn}\right)_{1,2} \otimes \left(\ketbra{\psi}{\psi}-\ketbra{\phi}{\phi}\right)_B\\
         &J_\mathcal{M} = D_{k,l}+ \frac{1}{2}\left(  \ketbra{kn}{kn}  + \ketbra{lm}{lm} \right)_{1,2} \otimes  \left(\ketbra{\psi}{\psi}+\ketbra{\phi}{\phi}\right)_{B} +\frac{1}{2} \left( \ketbra{kn}{lm}-\ketbra{lm}{kn} \right)_{1,2} \otimes \left( \ketbra{\psi}{\phi}-\ketbra{\phi}{\psi}\right)_B
    \end{align}
define the Choi states of channels $\mathcal{N}^{2,B \leftarrow 1} \in \MIO$ and $\mathcal{M}^{2,B \leftarrow 1} \in \MIO$ respectively for all choices of $n,m$. 
\end{lem}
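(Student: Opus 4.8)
The plan is to check, separately for $J_\mathcal N$ and $J_\mathcal M$, the three properties characterizing the Choi operator of an MIO channel with input system $1$ and output system $2\otimes B$: positive semidefiniteness (complete positivity), the normalization $\partTr{2,B}{J}=\id_1$ (trace preservation), and the Choi form of the MIO condition, $\Delta_1 J=\Delta_1\Delta_{2,B}J$ with $\Delta_{2,B}=\Delta_2\otimes\Delta_B$. It is convenient to abbreviate $P=\ketbra\psi\psi+\ketbra\phi\phi$, $Q=\ketbra\psi\psi-\ketbra\phi\phi$, $S=\ketbra\psi\phi-\ketbra\phi\psi$, so that the hypothesis reads $\Delta_B P=P$. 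Hermiticity is immediate in both cases: the first two summands of each matrix are self-adjoint, while the third summand is a product of two Hermitian operators for $J_\mathcal N$ (it equals $(\ketbra{kn}{lm}+\ketbra{lm}{kn})\otimes Q$) and a product of two anti-Hermitian operators for $J_\mathcal M$ (it equals $(\ketbra{kn}{lm}-\ketbra{lm}{kn})\otimes S$), hence Hermitian either way.

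For positivity I would first observe that $D_{k,l}\ge0$ is supported on the subspace of $\mathcal H_1$ orthogonal to $\operatorname{span}\{\ket k,\ket l\}$, whereas every other term is supported on $\operatorname{span}\{\ket{kn},\ket{lm}\}_{1,2}\otimes\mathcal H_B$; these supports are orthogonal inside $\mathcal H_1\otimes\mathcal H_2\otimes\mathcal H_B$, so it suffices that the second piece be positive. Writing the two basis vectors $\ket{kn},\ket{lm}$ as $\ket0,\ket1$, that piece is the operator-valued $2\times2$ matrix $\tfrac12\left(\begin{smallmatrix}P&Q\\Q&P\end{smallmatrix}\right)$ for $\mathcal N$ and $\tfrac12\left(\begin{smallmatrix}P&S\\-S&P\end{smallmatrix}\right)$ for $\mathcal M$. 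Evaluating the quadratic form on a general vector $\ket0\otimes\ket a_B+\ket1\otimes\ket b_B$ and completing the square yields $\tfrac12\bigl(|\langle\psi|a\rangle+\langle\psi|b\rangle|^2+|\langle\phi|a\rangle-\langle\phi|b\rangle|^2\bigr)$ in the first case and $\tfrac12\bigl(|\langle\psi|a\rangle+\langle\phi|b\rangle|^2+|\langle\phi|a\rangle-\langle\psi|b\rangle|^2\bigr)$ in the second, both manifestly nonnegative. (Equivalently one can block-diagonalize: Hadamard conjugation on the $2\times2$ label sends the first matrix to $\diag(2\ketbra\psi\psi,2\ketbra\phi\phi)$, and the unitary diagonalizing $\left(\begin{smallmatrix}0&1\\-1&0\end{smallmatrix}\right)$ sends the second to $\diag(\ketbra{\psi-i\phi}{\psi-i\phi},\ketbra{\psi+i\phi}{\psi+i\phi})$.) This completion-of-squares / block-diagonalization step is the one genuinely non-mechanical point of the proof.

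It remains to verify trace preservation and the MIO condition, which are bookkeeping. Using $\Tr{\id_2/d_2}=\Tr{\id_B/d_B}=1$ gives $\partTr{2,B}{D_{k,l}}=(\id-\ketbra kk-\ketbra ll)_1$; the second summand contributes $\tfrac12\Tr{P}\,(\ketbra kk+\ketbra ll)_1=(\ketbra kk+\ketbra ll)_1$ since $\Tr{P}=\norm\psi^2+\norm\phi^2=2$; and the third summand has zero partial trace over $2,B$, because $\partTr2{\ketbra{kn}{lm}\pm\ketbra{lm}{kn}}\propto\langle m|n\rangle$ already vanishes for $n\ne m$, while for $n=m$ it is killed by the vanishing $B$-trace $\Tr{Q}=0$ (for $\mathcal N$) respectively $\Tr{S}=\langle\phi|\psi\rangle-\langle\psi|\phi\rangle=0$ (for $\mathcal M$, the reality of $\langle\phi|\psi\rangle$ being forced by the hypothesis $\Delta_B P=P$). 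Hence $\partTr{2,B}{J}=\id_1$. For the MIO condition, note that $\Delta_1$ annihilates the third summand because its system-$1$ factor $\ketbra kl$ has $k\ne l$, so $\Delta_1 J=D_{k,l}+\tfrac12(\ketbra{kn}{kn}+\ketbra{lm}{lm})\otimes P$; applying $\Delta_{2,B}=\Delta_2\otimes\Delta_B$ leaves this unchanged, since $\Delta_2\ketbra nn=\ketbra nn$, $\Delta_B(\id_B/d_B)=\id_B/d_B$, and $\Delta_B P=P$ is exactly the hypothesis. Therefore $\Delta_1 J=\Delta_{2,B}\Delta_1 J=\Delta_1\Delta_{2,B}J$, i.e. $\mathcal N,\mathcal M\in\MIO$, and the same computation applies verbatim to both matrices. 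The main obstacle is the $2\times2$ operator positivity; the $n=m$ subcase of trace preservation is the only other point requiring a moment's care, and everything else rests on $\Tr P=2$, $\Tr Q=0$, $k\ne l$, and the single hypothesis $\Delta_B P=P$.
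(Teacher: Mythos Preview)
Your argument follows the paper's line closely. For positivity the paper writes down explicit rank-one decompositions
\[
J_\mathcal{N}=D_{k,l}+\ketbra{\Phi^+_{klnm}}{\Phi^+_{klnm}}\otimes\ketbra{\psi}{\psi}+\ketbra{\Phi^-_{klnm}}{\Phi^-_{klnm}}\otimes\ketbra{\phi}{\phi},\qquad
J_\mathcal{M}=D_{k,l}+\ketbra{\lambda}{\lambda}+\ketbra{\xi}{\xi},
\]
with $\ket\lambda=\tfrac{1}{\sqrt2}(\ket{kn}\ket\psi+\ket{lm}\ket\phi)$ and $\ket\xi=\tfrac{1}{\sqrt2}(\ket{kn}\ket\phi-\ket{lm}\ket\psi)$. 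Your Hadamard block-diagonalization of the $2\times2$ block for $\mathcal N$ is exactly the $\ket{\Phi^\pm}$ decomposition, and your $\ket{\psi\mp i\phi}$ diagonalization for $\mathcal M$ is just another rank-$\le 2$ splitting of the same positive block. The MIO check is identical to the paper's.

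One step is not right, however: the hypothesis $\Delta_B P=P$ does \emph{not} force $\langle\phi|\psi\rangle\in\mathbb R$. Replacing $\ket\phi$ by $e^{i\theta}\ket\phi$ leaves $\ketbra\phi\phi$ and hence $P$ unchanged but rotates $\langle\phi|\psi\rangle$ by $e^{-i\theta}$. Consequently, in the edge case $n=m$ with $\imag\langle\phi|\psi\rangle\ne 0$ one finds
\[
\partTr{2,B}{J_\mathcal{M}}=\id_1+i\,\imag\langle\phi|\psi\rangle\,(\ketbra kl-\ketbra lk)_1\ne\id_1,
\]
so trace preservation actually fails there. The paper's own proof glosses over the same edge case (it asserts $\partTr{2}{J_\mathcal{N}}=\partTr{2}{J_\mathcal{M}}$, which is likewise false for $n=m$). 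This is a minor defect in the lemma as stated rather than in your strategy; in the only application (the corollary with the real vectors $\psi^\pm$) the inner product is real and the issue never arises.
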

\begin{proof}
For some $0\leq k,l \leq d_1-1$ and $0\leq n,m \leq d_2-1$, let
\begin{align}
    & \ket{\Phi_{k,l,n,m}^{\pm}}_{1,2}=\frac{1}{\sqrt{2}} \left( \ket{kn}_{1,2} \pm \ket{lm}_{1,2}\right), \label{eq:PhiKLNM} \\
    &\ket{\lambda_{k,l,n,m}}_{1,2,B}= \frac{1}{\sqrt{2}} \left(\ket{kn}_{1,2} \otimes \ket{\psi}_{B} + \ket{lm}_{1,2} \otimes \ket{\phi}_{B} \right),\\
    &\ket{\xi_{k,l,n,m}}_{1,2,B}=\frac{1}{\sqrt{2}} \left( \ket{kn}_{1,2} \otimes \ket{\phi}_{B} -\ket{lm}_{1,2} \otimes \ket{\psi}_{B} \right).
\end{align}
Note that
\begin{align}
    &2\ketbra{\Phi_{k,l,n,m}^{+}}{\Phi_{k,l,n,m}^{+}}_{1,2}= \ketbra{kn}{kn}_{1,2}+ \ketbra{lm}{lm}_{1,2} \pm \ketbra{kn}{lm}_{1,2} \pm \ketbra{lm}{kn}_{1,2}, \\
    &2 \ketbra{\lambda_{k,l,n,m}}{\lambda_{k,l,n,m}}_{1,2,B}=  \ketbra{kn}{kn}_{1,2} \!\otimes \ketbra{\psi}{\psi}_B\! +\! \ketbra{lm}{lm}_{1,2} \!\otimes\! \ketbra{\phi}{\phi}_B\! +\!\ketbra{kn}{lm}_{1,2}\! \otimes\! \ketbra{\psi}{\phi}_B\!+ \!\ketbra{lm}{kn}_{1,2} \!\otimes\! \ketbra{\phi}{\psi}_B,\nonumber
\end{align}
and analogously for $\ket{\xi_{k,l,n,m}}_{1,2}$. Combining these, a straightforward calculation reveals that
    \begin{align}
        &J_\mathcal{N}=  D_{k,l} +\ketbra{\Phi_{k,l,n,m}^{+}}{\Phi_{k,l,n,m}^{+}}_{1,2} \otimes \ketbra{\psi}{\psi}_B +\ketbra{\Phi_{k,l,n,m}^{-}}{\Phi_{k,l,n,m}^{-}}_{1,2}\otimes \ketbra{\phi}{\phi}_B \; ,\\
        &   J_\mathcal{M}=D_{k,l}+ \ketbra{\lambda_{k,l,n,m}}{\lambda_{k,l,n,m}}_{1,2,B}+\ketbra{\xi_{k,l,n,m}}{\xi_{k,l,n,m}}_{1,2,B}.
    \end{align}
Thus, $J_\mathcal{N},J_\mathcal{M}\geq 0$ for any $k\neq l$. Moreover, we have $\partTr{2}{J_\mathcal{N}}=\partTr{2}{J_\mathcal{M}}$ and
    \begin{align}
        \partTr{2,B}{J_\mathcal{N}}&=\left( \id -\ketbra{k}{k}-\ketbra{l}{l}\right)_1+  \ketbra{k}{k}_1 +\ketbra{l}{l}_1  =\id_1.
    \end{align}
    Lastly, if $k\neq l$, then $\Delta_1 J_\mathcal{N}=\Delta_1 J_\mathcal{M}$, and if also $\Delta\left(\ketbra{\psi}{\psi}+\ketbra{\phi}{\phi}\right)=\ketbra{\psi}{\psi}+\ketbra{\phi}{\phi}$, then
    \begin{align}
        \Delta_1 J_\mathcal{N} &=\left( \id -\ketbra{k}{k}-\ketbra{l}{l}\right)_1 \otimes \frac{\id_2}{d_2} \otimes \frac{\id_B}{d_B}+\frac{1}{2}\left(\ketbra{kn}{kn}+\ketbra{lm}{lm}\right)_{1,2} \otimes \left(\ketbra{\psi}{\psi}+\ketbra{\phi}{\phi}\right)_B \nonumber \\
        &= \left( \id -\ketbra{k}{k}-\ketbra{l}{l}\right)_1 \otimes \frac{\id_2}{d_2} \otimes \frac{\id_B}{d_B}+\frac{1}{2}\left(\ketbra{kn}{kn}+\ketbra{lm}{lm}\right)_{1,2} \otimes \Delta \left(\ketbra{\psi}{\psi}+\ketbra{\phi}{\phi}\right)_B \nonumber\\
        &= \Delta_1 \Delta_{2,B}J_\mathcal{N}.
    \end{align}
    Thus, $\mathcal{N}^{2,B \leftarrow 1},\mathcal{M}^{2,B \leftarrow 1} \in \MIO$.
\end{proof}

\begin{corollary}\label{cor:SuperMapMIOTesters}
Let $B_j$ denote quantum systems. Let
\begin{align}\label{eq:PsiPm}
    \ket{\psi^\pm}= \frac{1}{\sqrt{2}} \left( \ket{i_{B_1} k_{B_4}\ldots k_{B_{2N}}}\pm  \ket{j_{B_1} l_{B_4}\ldots l_{B_{2N}}}\right)_{B_1,B_4,\ldots, B_{2N}}.
\end{align}
Then, the channels $\mathcal{N}^{2,B_1,B_4\ldots, B_{2N} \leftarrow 1}$ and $\mathcal{M}^{2,B_1,B_4\ldots, B_{2N} \leftarrow 1}$ defined by Choi states $J_\mathcal{N}$ and $J_\mathcal{M}$ as in Lemma~\ref{lem:ModifiedMIOTester} with $\psi$ and $\phi$ chosen as $\psi^\pm$, are contained in MIO for all $i,j,k_{B_4},\ldots, k_{B_{2N}},l_{B_4},\ldots, l_{B_{2N}}$.
\end{corollary}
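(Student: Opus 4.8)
The plan is to derive Corollary~\ref{cor:SuperMapMIOTesters} as an immediate instance of Lemma~\ref{lem:ModifiedMIOTester}, with the composite auxiliary register $B:=B_1\otimes B_4\otimes\cdots\otimes B_{2N}$ playing the role of the single system ``$B$'' in that lemma and with incoherent basis the tensor product of the computational bases of the individual $B_j$. Concretely, I would set $\psi:=\psi^+$ and $\phi:=\psi^-$ and verify the single hypothesis of Lemma~\ref{lem:ModifiedMIOTester}, namely that $\ketbra{\psi^+}{\psi^+}+\ketbra{\psi^-}{\psi^-}$ is fixed by total dephasing on $B$; everything else in the corollary is then handed to us verbatim by the lemma, uniformly in all the index choices.

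For the verification I would abbreviate the two product basis vectors $\ket{a}:=\ket{i_{B_1}k_{B_4}\cdots k_{B_{2N}}}$ and $\ket{b}:=\ket{j_{B_1}l_{B_4}\cdots l_{B_{2N}}}$ of $B$, so that $\ket{\psi^\pm}=\tfrac{1}{\sqrt2}(\ket{a}\pm\ket{b})$. If $\ket a=\ket b$ then $\ket{\psi^-}=0$ and the construction is vacuous, so I assume $\ket a\neq\ket b$; then $\langle a|b\rangle=0$, both $\ket{\psi^\pm}$ are normalized, and a one-line expansion gives $\ketbra{\psi^+}{\psi^+}+\ketbra{\psi^-}{\psi^-}=\ketbra{a}{a}+\ketbra{b}{b}$. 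Since $\ket a$ and $\ket b$ are product incoherent basis vectors of $B$, this operator is diagonal in the incoherent basis of $B$ and hence invariant under the total dephasing $\Delta$ on $B$, which is exactly the hypothesis of Lemma~\ref{lem:ModifiedMIOTester} (the hypothesis is symmetric under $\psi\leftrightarrow\phi$, so it does not matter which of $\psi^\pm$ is called $\psi$).

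With this in hand I would simply apply Lemma~\ref{lem:ModifiedMIOTester} with system $B$ as above, with $\psi,\phi$ as just verified, and keeping $k\neq l$ on system $1$ and $n,m$ on system $2$ as the free parameters of that lemma; it yields that $J_\mathcal{N}$ and $J_\mathcal{M}$ are the Choi states of channels $\mathcal{N}^{2,B_1,B_4,\ldots,B_{2N}\leftarrow 1}\in\MIO$ and $\mathcal{M}^{2,B_1,B_4,\ldots,B_{2N}\leftarrow 1}\in\MIO$. Nothing in the argument constrained $i,j,k_{B_4},\ldots,k_{B_{2N}},l_{B_4},\ldots,l_{B_{2N}}$, so the conclusion holds for all these choices, which is precisely the corollary. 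There is no real obstacle here; the only points to keep track of are the degenerate case $\ket a=\ket b$ (where the GHZ-type vector $\psi^-$ collapses) and the routine observation that the incoherent basis of the composite register $B$ is the product basis, so that ``$\Delta$ on $B$'' in the lemma means the tensor product of the single-system dephasings.
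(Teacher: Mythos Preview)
Your proposal is correct and follows essentially the same route as the paper: identify $B:=B_1\otimes B_4\otimes\cdots\otimes B_{2N}$, set $\psi=\psi^+$, $\phi=\psi^-$, and verify the single hypothesis of Lemma~\ref{lem:ModifiedMIOTester} by expanding $\ketbra{\psi^+}{\psi^+}+\ketbra{\psi^-}{\psi^-}=\ketbra{a}{a}+\ketbra{b}{b}$, which is manifestly diagonal in the product incoherent basis. Your brief remark on the degenerate case $\ket a=\ket b$ is a small addition the paper does not make explicitly, but otherwise the arguments coincide.
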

\begin{proof}
Consider the channels from Lemma~\ref{lem:ModifiedMIOTester}. Choose the system $B$ to be composed of $B_1,B_4,\ldots B_{2N}$ itself and choose $\ket{\psi}$ and $\ket{\phi}$ as $\ket{\psi^\pm}$ as in Eq.~\eqref{eq:PsiPm}, respectively. Clearly,  
\begin{align}
    \ketbra{\psi^+}{\psi^+}+\ketbra{\psi^-}{\psi^-}&= \ketbra{i_{B_1} k_{B_4}\ldots k_{B_{2N}}}{i_{B_1} k_{B_4}\ldots k_{B_{2N}}}+\ketbra{j_{B_1} l_{B_4}\ldots l_{B_{2N}}}{j_{B_1} l_{B_4}\ldots l_{B_{2N}}} \nonumber \\
    &=\Delta \left(\ketbra{\psi^+}{\psi^+}+\ketbra{\psi^-}{\psi^-} \right).
\end{align}
\end{proof}

We are now ready to prove our result concerning the operational interpretation.
\begin{thm}\label{thm:MIOCompatible}
    Let $\mathfrak{S}_0$ be the set of all MIO channels (with arbitrary input and output spaces). Moreover, for $N\ge 1$, let $\mathfrak{S}_N$ be a subset of the supermaps acting on $N$ channels (again with arbitrary input and output spaces), and let $\mathfrak{S}(\mathfrak{S}_N)=\bigcup_{N=0}^\infty \mathfrak{S}_N$.
    If $\mathfrak{S}(\mathfrak{S}_N)$ is closed under arbitrary compositions, then any  $\mathcal{S}_N\in\mathfrak{S}_N$ satisfies
    \begin{align}\label{eq:MIOCompFromMaxSet}
        \Delta_{0,2,\ldots, 2j} J_{\mathcal{S}_N}=\Delta_{0,2,\ldots, 2j} \Delta_{1,3,\ldots, 2j+1} J_{\mathcal{S}_N} \quad \forall j: 0\leq j\leq N.
    \end{align}
\end{thm}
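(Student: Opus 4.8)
The plan is to establish~\eqref{eq:MIOCompFromMaxSet} by contraposition. Suppose some $\mathcal{S}_N\in\mathfrak{S}_N$ violates it, and let $j$ be the \emph{smallest} index with $\Delta_{0,2,\ldots,2j}J_{\mathcal{S}_N}\neq\Delta_{0,2,\ldots,2j}\Delta_{1,3,\ldots,2j+1}J_{\mathcal{S}_N}$. Because $\Delta$ is an orthogonal projection, this says that $\Delta_{0,2,\ldots,2j}J_{\mathcal{S}_N}$ has a nonvanishing component that is off-diagonal in the incoherent basis of the output systems $1,3,\ldots,2j+1$: there are incoherent labels for $0,2,\ldots,2j$ and two distinct incoherent labels $\vec y\neq\vec z$ for $1,3,\ldots,2j+1$ selecting a nonzero block $\ketbra{\vec y}{\vec z}$ of $J_{\mathcal{S}_N}$. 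The aim is to turn this block into a coherence-creating effect of a channel built out of $\mathcal{S}_N$ and $\MIO$ channels, which contradicts the closure hypothesis: any channel obtained by composing $\mathcal{S}_N\in\mathfrak{S}(\mathfrak{S}_N)$ with $\MIO$ channels from $\mathfrak{S}_0\subseteq\mathfrak{S}(\mathfrak{S}_N)$ lies, by closure, in $\mathfrak{S}(\mathfrak{S}_N)=\bigcup_M\mathfrak{S}_M$, and since it acts on zero channels it must lie in $\mathfrak{S}_0$, i.e.\ be a member of $\MIO$.

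Concretely, I would fill every channel slot of $\mathcal{S}_N$ with an $\MIO$ channel so as to obtain an ordinary channel $\mathcal{C}$. Into the slots that feed the systems $1,3,\ldots,2j+1$ appearing in the level-$j$ condition I insert the tester channels furnished by Lemma~\ref{lem:ModifiedMIOTester} and Corollary~\ref{cor:SuperMapMIOTesters}, with their parameters matched to $\vec y$ and $\vec z$; these are genuine elements of $\MIO$ (so dephasing their inputs commutes with the total dephasing of their outputs, as the definition demands), yet they carry GHZ-correlated reference registers whose joint state records the off-diagonal block $\ketbra{\vec y}{\vec z}$. The remaining slots are filled with inert $\MIO$ channels, e.g.\ replacement channels that discard the input and prepare a fixed incoherent state, so they neither create coherence nor spoil the detection. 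Then $\mathcal{C}$ maps the external input $0$ to the external output $2N+1$ together with the reference registers, and by the closure argument $\mathcal{C}\in\MIO$.

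The final step is to feed an incoherent state into system $0$ and compute $\mathcal{C}$ via the link product $J_{\mathcal{C}}=J_{\mathcal{S}_N}*J_{\mathcal{N}_1}*\cdots*J_{\mathcal{N}_N}$, comparing $\Delta_0 J_{\mathcal{C}}$ with its total dephasing on all output systems. Here the minimality of $j$ is essential: since~\eqref{eq:MIOCompFromMaxSet} holds for all levels below $j$, an incoherent state on $0$ produces incoherent inputs into the testers on the slots feeding $1,3,\ldots,2j-1$, and, those testers being $\MIO$, their outputs on $2,4,\ldots,2j$ and on their references stay incoherent; hence the state arriving at the slot feeding $2j+1$ is incoherent, and the level-$j$ violation of $J_{\mathcal{S}_N}$ forces the corresponding reference register to come out in a coherent superposition. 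Thus $\mathcal{C}$ sends an incoherent input to a coherent output, contradicting $\mathcal{C}\in\MIO$, and~\eqref{eq:MIOCompFromMaxSet} follows. (Equivalently one may use the \emph{pair} $\mathcal{N},\mathcal{M}$ of Lemma~\ref{lem:ModifiedMIOTester}: both lie in $\MIO$ and coincide after dephasing their inputs, so $\MIO$-compatibility of $J_{\mathcal{S}_N}$ would make the two composed channels equal, whereas the level-$j$ violation makes $\mathcal{S}_N[\ldots,\mathcal{N},\ldots]-\mathcal{S}_N[\ldots,\mathcal{M},\ldots]$ nonzero already on incoherent external inputs, forcing at least one of them out of $\MIO$.)

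I expect the main obstacle to be the construction of the tester channels rather than the composition bookkeeping. They must be \emph{bona fide} $\MIO$ channels---a global constraint on their Choi operators, not merely incoherence-preservation on a subspace---while still retaining, on auxiliary references, enough coherent structure to detect an \emph{arbitrary} off-diagonal violation at level $j$ and to compose coherently across the relevant slots; the GHZ form of $\ket{\psi^\pm}$ in Corollary~\ref{cor:SuperMapMIOTesters} is precisely what makes both demands compatible. With those testers in hand, the remainder---choosing the inert slots so the signal survives, tracking the link product and reference registers while respecting causal order, and running the argument uniformly over all $0\le j\le N$---is routine.
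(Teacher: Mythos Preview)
Your overall strategy---compose $\mathcal{S}_N$ with $\MIO$ channels to obtain an ordinary channel and invoke closure to force it into $\mathfrak{S}_0=\MIO$---is the paper's. The gap is in your treatment of the \emph{later} slots. A replacement channel in slot $k>j{+}1$ traces out output $2k{-}1$ and feeds a fixed incoherent state into input $2k$; your composed channel therefore only sees the \emph{diagonal} block of $J_{\mathcal{S}_N}$ on the later inputs $2j{+}4,\ldots,2N$. But the level-$j$ violation can live entirely in the off-diagonal sector of those inputs, and then your $\mathcal{C}$ is genuinely in $\MIO$ and no contradiction arises. For a concrete failure take $N=2$ (all qubits) and let $\mathcal{S}_2$ discard $0,2$, prepare $\ket{\Phi^+}_{1,A}$, emit $\ket{0}_3$, and Bell-measure $(A,4)$ onto a classical register $5$: one checks that $(\idChannel_1-\Delta_1)\Delta_0 J_{\mathcal{S}_2}\neq 0$, yet $\Delta_4 J_{\mathcal{S}_2}$ is already diagonal in system $1$, so any replacement channel in slot $2$ erases the witness. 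Your minimality-of-$j$ cascade does not help: the marginal comb $J^{(0)}_{\mathcal{S}_2}=\tfrac12\id_{0,1}$ is incoherent, so ``the state arriving at the slot receiving $2j{+}1$'' is incoherent even though level $j$ fails---the violation is a \emph{correlation} with a later input, not a local coherence on $2j{+}1$.

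The paper's remedy is precisely the step you declared routine: the later slots are \emph{not} inert. All slots $k\neq j{+}1$ carry identity-swap channels $\mathcal{E}_k$ that route $2k{-}1$ to external references, and the tester's GHZ auxiliaries $B_{2j+4},\ldots,B_{2N}$ from Corollary~\ref{cor:SuperMapMIOTesters} are fed \emph{back into} the later inputs $2j{+}4,\ldots,2N$ through those $\mathcal{E}_k$. This routing is what allows the composed channel to probe arbitrary (including off-diagonal) matrix elements on the later inputs; without it the GHZ references are dangling and useless. The paper then uses \emph{both} testers $\mathcal{N}$ and $\mathcal{M}$ (your parenthetical alternative) and adds the two resulting $\MIO$ constraints to isolate $(\idChannel_{2j+1}-\Delta_{2j+1})\Delta_{0,\ldots,2j}J_{\mathcal{S}_N}=0$ for each $j$ directly, with no induction on $j$.
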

\begin{proof}
    The requirement that $\mathfrak{S}(\mathfrak{S}_N)$ is closed under arbitrary compositions, combined with the fact that it contains all MIO channels implies that all MIO networks are included in $\mathfrak{S}(\mathfrak{S}_N)$. We now show that $\Delta_{0,2,\ldots, 2j} J_{\mathcal{S}_N}=\Delta_{0,2,\ldots, 2j} \Delta_{1,3,\ldots, 2j+1} J_{\mathcal{S}_N}$ holds for any $j:0\leq j\leq N-1$ by inserting the MIO network depicted in Fig.~\ref{fig:SupermapMIOCompatible} into the supermap $\mathcal{S}_N$. Let $B_j$ denote a copy of system $j$ (fixed by the supermap $\mathcal{S}_N$). Consider the MIO channels $\mathcal{N}^{2j+2,B_{2j+1},B_{2j+4}\ldots, B_{2N} \leftarrow 1}$ and $\mathcal{M}^{2j+2,B_{2j+1},B_{2j+4}\ldots, B_{2N}}$ from Corollary~\ref{cor:SuperMapMIOTesters}, where we simply replaced the systems $1,2$ by $2j+1,2j+2$. Let $\mathcal{E}_j$ denote the channel that acts as identity from system $2j-1$ to $B_{2j-1}$ and $B_{2j}$ to $2j$, see Fig.~\ref{fig:SupermapMIOCompatible} where $\mathcal{E}_1$ is plugged in the first slot of $\mathcal{S}_N$.
    Now we construct the two networks comprised of the sequence of channels
\begin{align}
    & \mathcal{E}_1,\ldots, \mathcal{E}_j, \mathcal{N}^{2j+2,B_{2j+1},B_{2j+4}\ldots, B_{2N} \leftarrow 2j+1}, \mathcal{E}_{j+1},\ldots, \mathcal{E}_N
\end{align}
and
\begin{align}
    &\mathcal{E}_1,\ldots, \mathcal{E}_j, \mathcal{M}^{2j+2,B_{2j+1},B_{2j+4}\ldots, B_{2N} \leftarrow 2j+1}, \mathcal{E}_{j+1},\ldots, \mathcal{E}_N.
\end{align}
Here we contract over the systems enumerated by $B_{2j+4},\ldots, B_{2N}$ as depicted in Fig.~\ref{fig:SupermapMIOCompatible}. From now on, we suppress the system indices in the channels $\mathcal{N}$ and $\mathcal{M}$.

\begin{figure}[ht]
    \centering
    \scalebox{1}{\includegraphics[width=1\linewidth]{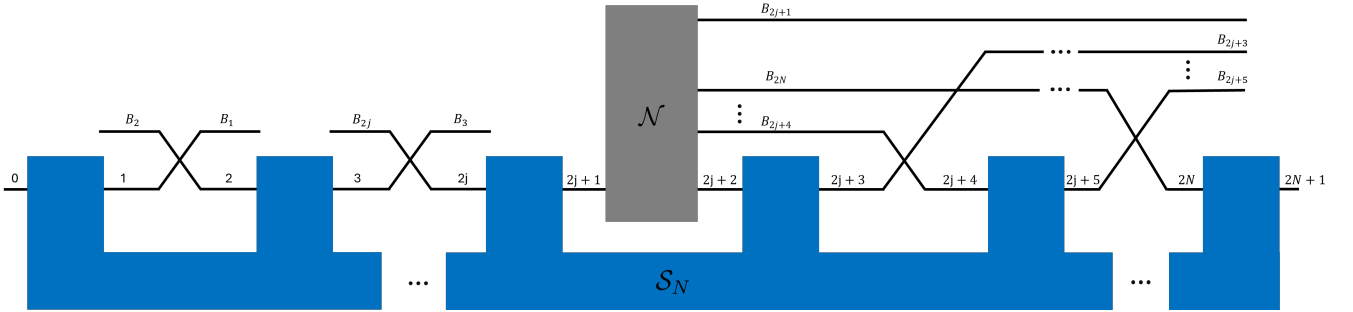}}
    \caption{Composition of the given supermap $\mathcal{S}_N$ with a MIO network comprised of the channels $\mathcal{E}_j$, and the MIO channels  $\mathcal{N}$ (and $\mathcal{M}$ in the place of $\mathcal{N}$) from Corollary~\ref{cor:SuperMapMIOTesters}.}
    \label{fig:SupermapMIOCompatible}
\end{figure}
Note that the channels  $\mathcal{N}$ and $\mathcal{M}$ are in $\MIO$ if the two integers $k\neq l$ (which are arbitrary but fixed for the definition of $\mathcal{N}$ and $\mathcal{M}$) according to Lemma~\ref{lem:ModifiedMIOTester} and Corollary~\ref{cor:SuperMapMIOTesters}. Obviously, the  channels $\mathcal{E}_j$ are contained in MIO too, and thus, such a network is a MIO network. Hence, the composition of the supermap $\mathcal{S}_N$ with each of these networks (either with the channel $\mathcal{N}$ or $\mathcal{M}$ respectively), which we denote by $\mathcal{T}$ and $\mathcal{W}$ respectively, must be a MIO channel if $\mathcal{M}$ and $\mathcal{N}$ were (which is the case if $k\neq l$). This is equivalent to
\begin{subequations}
        \begin{align}
        &\Delta_{0,B_2,\ldots B_{2j}} J_\mathcal{T}=  \Delta_{0,B_2,\ldots B_{2j}}  \Delta_{B_1,\ldots, B_{2j-1}}\Delta_{B_{2j+1}} \Delta_{B_{2j+3},B_{2j+5}\ldots, B_{2N-1}, 2N+1} J_\mathcal{T}, \\
        &\Delta_{0,B_2,\ldots B_{2j}} J_\mathcal{W}=  \Delta_{0,B_2,\ldots B_{2j}} \Delta_{B_1,\ldots, B_{2j-1}}\Delta_{B_{2j+1}} \Delta_{B_{2j+3},B_{2j+5}\ldots, B_{2N-1}, 2N+1} J_\mathcal{W}.
    \end{align}
\end{subequations}
Applying $\left(\idChannel_{B_{2j+1}}-\Delta_{B_{2j+1}}\right)$ to this expression yields
\begin{subequations}
        \begin{align}
        &\left(\idChannel_{B_{2j+1}}-\Delta_{B_{2j+1}}\right) \Delta_{0,B_2,\ldots B_{2j}} J_\mathcal{T}= 0, \\
         &\left(\idChannel_{B_{2j+1}}-\Delta_{B_{2j+1}}\right) \Delta_{0,B_2,\ldots B_{2j}} J_\mathcal{W}= 0.
    \end{align}
\end{subequations}
    By a straightforward calculation, it is easy to check that the composition with a channel $\mathcal{E}_j$ amounts to renaming system labels, i.e., relabeling system $j$ as $B_j$. Using this fact in the definition of the link product in Eq.~\eqref{eq:defLinkProduct} to express $J_\mathcal{T}$ and $J_\mathcal{W}$ respectively, we can insert this into the latter expression to obtain
    \begin{subequations}
    \begin{align}
        &0= \left(\idChannel_{B_{2j+1}}-\Delta_{B_{2j+1}}\right) \Delta_{0,B_2,\ldots B_{2j}} \partTr{2j+1,2j+2,B_{2j+4},\ldots, B_{2N}}{\left( \id  \otimes \left( J_{\mathcal{N}}\right)^{T_{2j+1,2j+2,B_{2j+4},\ldots,B_{2N}}} \right)\left( \id \otimes J_{\mathcal{S}_N}\right)}, \label{eq:TOffDiagsNetwork} \\
        &0=\left(\idChannel_{B_{2j+1}}-\Delta_{B_{2j+1}}\right) \Delta_{0,B_2,\ldots B_{2j}} \partTr{2j+1,2j+2,B_{2j+4},\ldots, B_{2N}}{\left( \id  \otimes \left( J_{\mathcal{M}}\right)^{T_{2j+1,2j+2,B_{2j+4},\ldots,B_{2N}}} \right)\left( \id \otimes J_{\mathcal{S}_N}\right)},\label{eq:WOffDiagsNetwork}
    \end{align}  
    \end{subequations}
    where we suppressed the system indices of $J_{\mathcal{S}_N}$ and the identities for enhanced readability. Next, note that the only contribution on system $B_{2j+1}$ originates from the Choi state of $J_\mathcal{M}$ and $J_\mathcal{N}$ respectively, and are given by
    \begin{subequations}
             \begin{align}
         &\left(\idChannel_{B_{2j+1}}\!-\!\Delta_{B_{2j+1}}\right) \!J_\mathcal{N}^{T_{2j+1,2j+2,B_{2j+4},\ldots,B_{2N}}} \!=\!\frac{1}{2}\! \left( \ketbra{kn}{lm}\!+\!\ketbra{lm}{kn} \right)_{2j+1,2j+2}^{T} \!\otimes \!\left( \ketbra{\psi^+}{\psi^+}\!-\!\ketbra{\psi^-}{\psi^-}\right)_{B_{2j+1},B_{2j+4},\ldots, B_{2N}}^{T_{B_{2j+4},\! \ldots \!, B_{2N}}}, \\
         &\left(\idChannel_{B_{2j+1}}\!-\!\Delta_{B_{2j+1}}\right) \!J_\mathcal{M}^{T_{2j+1,2j+2,B_{2j+4},\ldots,B_{2N}}} \!=\!\frac{1}{2}\! \left( \ketbra{kn}{lm}\!-\!\ketbra{lm}{kn} \right)_{2j+1,2j+2}^{T} \!\otimes \!\left( \ketbra{\psi^+}{\psi^-}\!-\!\ketbra{\psi^-}{\psi^+}\right)_{B_{2j+1},B_{2j+4},\ldots, B_{2N}}^{T_{B_{2j+4},\! \ldots \!, B_{2N}}},
    \end{align}
    \end{subequations}
where $ \ket{\psi^\pm}= \frac{1}{\sqrt{2}} \left( \ket{i_{B_{2j+1}} k_{B_{2j+4}}\ldots k_{B_{2N}}}\pm  \ket{j_{B_{2j+1}} l_{B_{2j+4}}\ldots l_{B_{2N}}}\right)_{B_{2j+1},B_{2j+4},\ldots, B_{2N}}$. Thus, 
\begin{subequations}
    \begin{align}
    &\ketbra{\psi^+}{\psi^+}-\ketbra{\psi^-}{\psi^-}= \ketbra{i_{B_{2j+1}} k_{B_{2j+4}}\ldots k_{B_{2N}}}{j_{B_{2j+1}} l_{B_{2j+4}}\ldots l_{B_{2N}}} + \text{h.c.} \, , \\
    &\ketbra{\psi^+}{\psi^-}-\ketbra{\psi^-}{\psi^+}= \ketbra{j_{B_{2j+1}} l_{B_{2j+4}}\ldots l_{B_{2N}}}{i_{B_{2j+1}} k_{B_{2j+4}}\ldots k_{B_{2N}}} -\text{h.c.} \, .
\end{align}
\end{subequations}
Inserting this into Eqs.~\eqref{eq:TOffDiagsNetwork} and ~\eqref{eq:WOffDiagsNetwork} yields
\begin{subequations}
        \begin{align}
        0&= \Delta_{0,B_2,\ldots B_{2j}} \bra{k_{B_{2j+4}}\ldots k_{B_{2N}}} \left( \bra{l m}_{2j+1,2j+2} J_{\mathcal{S}_N} \ket{kn}_{2j+1,2j+2}  - \bra{k n}_{2j+1,2j+2}J_{\mathcal{S}_N} \ket{lm}_{2j+1,2j+2} \right)\ket{l_{B_{2j+4}}\ldots l_{B_{2N}}}, \\
        0&= \Delta_{0,B_2,\ldots B_{2j}} \bra{k_{B_{2j+4}}\ldots k_{B_{2N}}} \left( \bra{l m}_{2j+1,2j+2} J_{\mathcal{S}_N} \ket{kn}_{2j+1,2j+2}  +\bra{k n}_{2j+1,2j+2}J_{\mathcal{S}_N} \ket{lm}_{2j+1,2j+2} \right)\ket{l_{B_{2j+4}}\ldots l_{B_{2N}}}.
    \end{align}
\end{subequations}
Adding the latter two expression yields
    \begin{align}
        0&= \Delta_{0,B_2,\ldots B_{2j}} \bra{ l m k_{B_{2j+4}}\!\ldots\!k_{B_{2N}}}_{2j+1,2j+2, B_{2j+4},\ldots B_{2N}}   J_{\mathcal{S}_N} \ket{kn l_{B_{2j+4}}\ldots l_{B_{2N}}}_{2j+1,2j+2, B_{2j+4},\ldots B_{2N}} \quad \forall k\neq l,
    \end{align}
    and for all other indices $n,m, k_{B_{2j+4}},l_{B_{2j+4}} \ldots  k_{B_{2N}},l_{B_{2N}} $. After renaming all systems $B_j$ into $j$ again, this is equivalent to
    \begin{align}
        \left(\idChannel_{{2j+1}}-\Delta_{{2j+1}}\right) \Delta_{0,2,\ldots {2j}}  J_{\mathcal{S}_N}=0.
    \end{align}
Since this holds for arbitrary $0\leq j\leq N-1$, this is equivalent to   $\Delta_{0,2,\ldots, 2j} J_{\mathcal{S}_N}=\Delta_{0,2,\ldots, 2j} \Delta_{1,3,\ldots, 2j+1} J_{\mathcal{S}_N} \quad \forall j: 0\leq j\leq N-1.$ Lastly, the condition that $\Delta_{0,\ldots 2N}J_{\mathcal{S}_N} =\Delta_{0,\ldots 2N} \Delta_{1,\ldots 2N+1} J_{\mathcal{S}_N}$ follows since $\mathcal{S}_N$ must be MIO channel from all its input to all its output systems. More technically, this follows by composing the supermap $\mathcal{S}_N$ with a MIO network comprised only of independent SWAP channels.
\end{proof}
Since Eq.~\eqref{eq:MIOCompFromMaxSet} in the above Theorem is exactly the requirement that each $\mathcal{S}_N$ is MIO-compatible, we can invoke Lemma~\ref{lem:ClosedUnderComposition} to conclude that Eq.~\eqref{eq:MIOCompFromMaxSet} is also sufficient for closedness under arbitrary compositions (that respect causal ordering). Moreover, it is straightforward to see that by composing arbitrary MIO-compatible supermaps in such a way that one obtains a channel, one will always end up with a MIO channel. This shows that the MIO-compatible supermaps are the largest set of supermaps that contains precisely MIO as channels and is closed under arbitrary compositions. Using this ``compatibility construction", MIO, i.e., a choice of free channels, fixes the free states ($\I$ is the set of states that correspond to a MIO channel with trivial input) and the free supermaps, and thus a resource theory encompassing all quantum objects. After the network approach, this is the second method by which one can construct a general resource theory purely from the choice of free channels. That there exists such a maximal set is a priori not clear, and it is an interesting open question whether this is the case too in other resource theories (not necessarily related to coherence).

\section{(Completely) MIO-preserving superchannels}\label{sec:MIOpreserving}
In this section, we consider the special case of superchannels, i.e., we consider supermaps $\mathcal{S}_1$. We show that the set of MIO-compatible superchannels is equal to the set of completely MIO-preserving superchannels, which are defined as follows (see Refs.~\cite{Gour2020,Gour2021b} for a discussion of complete resource preservation).

\begin{defin}\label{def:cMIOP}
   A superchannel $\mathcal{S}_1^{1,3\leftarrow 0,2}$ is called completely MIO-preserving (cMIOP) iff it maps every MIO channel into a MIO channel in a complete sense, i.e., for all systems $A,B$
   \begin{align}
       \mathcal{S}_1^{1,3\leftarrow 0,2}[\mathcal{M}^{2,B\leftarrow 1,A}] \in \MIO \quad \forall \mathcal{M}^{2,B\leftarrow 1,A} \in \MIO.
    \end{align}
\end{defin}

\begin{figure}[ht]
    \centering
    \scalebox{0.7}{\includegraphics[width=1\linewidth]{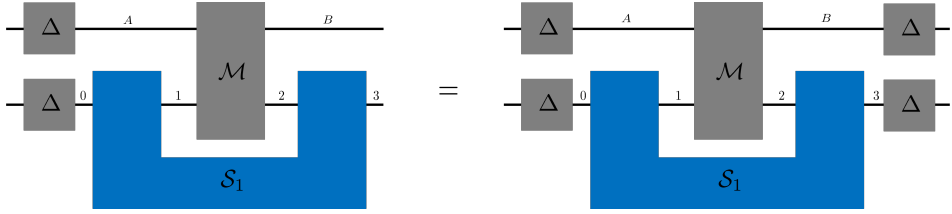}}
    \caption{A superchannel $\mathcal{S}_1$ is completely MIO-preserving iff it satisfies the depicted property for all systems $A,B$  and all $\mathcal{M} \in \MIO$. }
    \label{fig:cMIOPSuperchannel}
\end{figure}
We now proceed to note that cMIOP is equal to MIO-compatible as introduced previously in Definition~\ref{def:DeltaConditionsNetworkSM}.
\begin{corollary}\label{cor:cMIOP}
    The set of completely MIO-preserving superchannels is equal to the MIO-compatible superchannels, i.e., a superchannel $\mathcal{S}_1$ is completely MIO-preserving iff
    \begin{subequations}
    \begin{align}
        &\Delta_0 J_{\mathcal{S}_1}=\Delta_0 \Delta_{1} J_{\mathcal{S}_1}, \\
        &\Delta_{0,2} J_{\mathcal{S}_1}=\Delta_{0,2} \Delta_{1,3}  J_{\mathcal{S}_1}.
    \end{align}
    \end{subequations}
\end{corollary}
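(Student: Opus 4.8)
The plan is to prove the two implications of the equivalence separately; recall that ``$\mathcal{S}_1$ is MIO-compatible'' means precisely that its Choi state $J_{\mathcal{S}_1}$ satisfies the two displayed conditions (the case $N=1$ of Definition~\ref{def:DeltaConditionsNetworkSM}). \emph{MIO-compatible $\Rightarrow$ cMIOP} is the short direction. Given any MIO channel $\mathcal{M}^{2,B\leftarrow 1,A}$, I would regard it as a supermap on zero channels: for $N=0$ the MIO-compatibility condition on its Choi state reads $\Delta_{1,A}J_{\mathcal{M}}=\Delta_{1,A}\Delta_{2,B}J_{\mathcal{M}}$, which is exactly $\mathcal{M}\in\MIO$. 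Plugging it into the single slot of $\mathcal{S}_1$ is a causality-respecting composition of two MIO-compatible supermaps that closes the only open slot, so by Lemma~\ref{lem:ClosedUnderComposition} the result $\mathcal{S}_1[\mathcal{M}^{2,B\leftarrow 1,A}]$ is an MIO-compatible supermap with no slots, i.e.\ an MIO channel. Since $A,B$ are arbitrary, $\mathcal{S}_1$ is cMIOP. (One really does need Lemma~\ref{lem:ClosedUnderComposition} here rather than a decomposition into MIO channels, since by Proposition~\ref{prop:MIOnetworkSubsetMIOCompatible} not every MIO-compatible superchannel is a MIO network.)

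For the converse, \emph{cMIOP $\Rightarrow$ MIO-compatible}, the key observation is that the proof of Theorem~\ref{thm:MIOCompatible} used, about the supermap it is applied to, only that composing it with an MIO network that fills all of its slots produces an MIO channel. A superchannel has a single slot, and an MIO network filling it is nothing but an arbitrary MIO channel $\mathcal{M}^{2,B\leftarrow 1,A}$, for which the cMIOP hypothesis says precisely that $\mathcal{S}_1[\mathcal{M}^{2,B\leftarrow 1,A}]\in\MIO$. I can therefore run the argument of Theorem~\ref{thm:MIOCompatible} with $N=1$. Concretely, the condition $\Delta_{0,2}J_{\mathcal{S}_1}=\Delta_{0,2}\Delta_{1,3}J_{\mathcal{S}_1}$ follows by inserting the identity-wire channel $\idChannel^{\tilde{1}\leftarrow 1}\otimes\idChannel^{2\leftarrow\tilde{2}}\in\MIO$ into the slot, since its composition with $\mathcal{S}_1$ is --- after relabeling $\tilde{1}\mapsto 1$, $\tilde{2}\mapsto 2$ --- just $J_{\mathcal{S}_1}$ read as the Choi state of a channel $0,\tilde{2}\to 3,\tilde{1}$, which cMIOP forces into $\MIO$; and the condition $\Delta_{0}J_{\mathcal{S}_1}=\Delta_{0}\Delta_{1}J_{\mathcal{S}_1}$ follows by inserting into the slot the MIO testers $\mathcal{N}^{2,B_1\leftarrow 1}$ and $\mathcal{M}^{2,B_1\leftarrow 1}$ of Corollary~\ref{cor:SuperMapMIOTesters} for $N=1$ (with $B_1$ a copy of system $1$ and $\ket{\psi^{\pm}}=\tfrac{1}{\sqrt{2}}(\ket{i}\pm\ket{j})_{B_1}$), using cMIOP to deduce $\mathcal{S}_1[\mathcal{N}],\mathcal{S}_1[\mathcal{M}]\in\MIO$, and then applying $(\idChannel_{B_1}-\Delta_{B_1})$ and expanding the link product exactly as in the $j=0$ step of that proof to obtain $(\idChannel_1-\Delta_1)\Delta_0 J_{\mathcal{S}_1}=0$.

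The only slightly delicate point --- and what I would check most carefully --- is the index bookkeeping in the second part: one must verify that the auxiliary output $B_1$ of the tester faithfully transports the off-diagonal content of $J_{\mathcal{S}_1}$ on its slot-input system $1$ out to an accessible output of $\mathcal{S}_1[\mathcal{N}]$, so that the two testers, which differ only in producing the symmetric versus the antisymmetric combination $\ketbra{i}{j}\pm\ketbra{j}{i}$ on $B_1$, together annihilate every off-diagonal element $\bra{j}\Delta_0 J_{\mathcal{S}_1}\ket{i}$ on system $1$. Since this is literally the $N=1$, $j=0$ instance of the computation already performed in the proof of Theorem~\ref{thm:MIOCompatible}, no genuinely new difficulty arises: one only has to keep track of the relabelings. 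Combining the two implications gives the stated equivalence, so that for superchannels cMIOP and MIO-compatibility coincide.
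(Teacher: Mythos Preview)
Your proposal is correct and follows the same approach as the paper, which simply cites Theorem~\ref{thm:MIOCompatible} for necessity and Lemma~\ref{lem:ClosedUnderComposition} for sufficiency; you have just unpacked what those two invocations actually mean in the $N=1$ case. Your explicit observation that the proof of Theorem~\ref{thm:MIOCompatible} uses about $\mathcal{S}_N$ only that composing it with slot-filling MIO networks yields MIO channels---and that for $N=1$ this is precisely the cMIOP hypothesis---is the right way to justify why that theorem applies here.
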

\begin{proof}
According to Theorem~\ref{thm:MIOCompatible}, these conditions are necessary. According to Lemma~\ref{lem:ClosedUnderComposition}, they are also sufficient.
\end{proof}

Next, we want to discuss what happens if we do not require closure under \textit{arbitrary} composition of free objects. To this end, we consider the set of MIO-preserving superchannels (see also, e.g., Ref.~\cite{Liu2019} for other such non-complete frameworks) defined as follows.

\begin{defin}\label{def:MIOP}
    A superchannel $\mathcal{S}_1^{1,3\leftarrow 0,2}$ is called MIO-preserving (MIOP) iff
    \begin{align}
       \mathcal{S}_1^{1,3\leftarrow 0,2}[\mathcal{M}^{2\leftarrow 1}] \in \MIO \quad \forall \mathcal{M}^{2\leftarrow 1} \in \MIO.
   \end{align}
\end{defin}

\begin{figure}[ht]
    \centering
    \scalebox{0.7}{\includegraphics[width=1\linewidth]{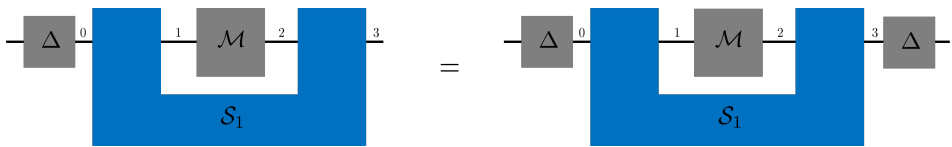}}
    \caption{A superchannel $\mathcal{S}_1$ is MIO-preserving iff it satisfies the depicted property for all $\mathcal{M} \in \MIO$.
    }
    \label{fig:MIOPSuperchannel}
\end{figure}
We now want to demonstrate that we can extract a maximally coherent state from a specific MIO-preserving superchannel if we allow to compose it with only subsystems of a MIO channel. Thereby, we show that the set of cMIOP superchannels is a strict subset of the MIOP superchannels and use this example to emphasize that resource theories of superchannels (-maps) should be completely free. To this end, let $\mathcal{N}^{1,A\leftarrow 0}(\cdot)=\Tr{\cdot} \Phi_{1,A}^+$, where $\ket{\Phi^{\pm}}= \tfrac{1}{\sqrt{2}} (\ket{00}\pm\ket{11})$, and let $\mathcal{K}^{3\leftarrow 2,A}(\cdot)=\sum_i \Tr{M_i (\cdot)} \ketbra{i}{i}_3 $, for some POVM $\{M_i \}_i.$ The superchannel $\mathcal{S}_1$ defined by these two channels as pre- and post-processing channels, respectively, is depicted in Fig.~\ref{fig:MIOPExample}.

\begin{figure}[ht]
    \centering
    \scalebox{0.5}{\includegraphics[width=1\linewidth]{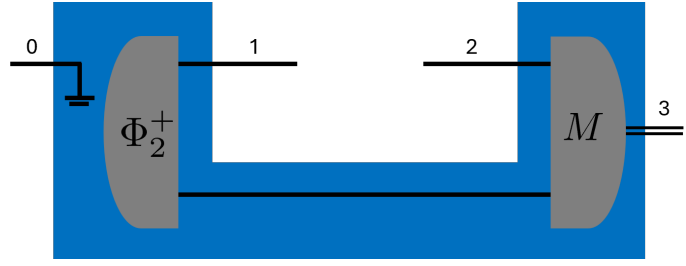}}
    \caption{An example of a MIO-preserving (but not completely MIO-preserving) supermap.}
    \label{fig:MIOPExample}
\end{figure}

Its Choi representation is given by
\begin{align}
    J_{\mathcal{S}_1}&=\partTr{A}{\Big( \id_{2,3} \otimes J_{\mathcal{N}}^{T_A} \Big)\Big( \id_{0,1} \otimes J_\mathcal{K}\Big)} \nonumber  \\
    &= \frac{1}{2}\id_0 \otimes \sum_{n,m} \ketbra{n}{m}_1 \otimes \sum_{k,l} \ketbra{k}{l}_2 \otimes \sum_i \ketbra{i}{i}_3 \Tr{M_i \ketbra{kn}{lm}_{2,A}},
\end{align}
Obviously, the superchannel defined by $J_{\mathcal{S}_1}$ is MIO-preserving, but not completely MIO-preserving since $\Delta_0 J_{\mathcal{S}_1}\neq \Delta_0\Delta_1 J_{\mathcal{S}_1} $, which is necessary according to Corollary~\ref{cor:cMIOP}. Even more so, the superchannel is resource destroying in the sense that for any channel $\mathcal{M}^{2\leftarrow 1}$ (contained in MIO or not), the resulting channel $\mathcal{S}_1[\mathcal{M}^{2\leftarrow 1}] \in \MIO$. Nonetheless, we can extract a maximally coherent bit from this superchannel if we insert a MIO channel $\mathcal{M}^{2,B \leftarrow 1}$ into it. To this end, let $M_i$ be the projective measurement on the Bell states between systems $2$ and $A$. This results in a superchannel (in the Choi representation) of
\begin{align}\label{eq:SuperChannelMIOpNOTcMIOP}
    J_{\mathcal{S}_1}= \frac{1}{2} \id_0 \otimes \Big( \ketbra{\Phi^+}{\Phi^+}_{1,2} \otimes \ketbra{0}{0}_3 +\ketbra{\Phi^-}{\Phi^-}_{1,2} \otimes \ketbra{1}{1}_3 +\ketbra{\Psi^+}{\Psi^+}_{1,2} \otimes \ketbra{2}{2}_3+\ketbra{\Psi^-}{\Psi^-}_{1,2} \otimes \ketbra{3}{3}_3\Big),
\end{align}
where $\ket{\Psi^\pm}=\tfrac{1}{\sqrt{2}} (\ket{01}\pm\ket{10})$. Let us now define a channel $\mathcal{M}^{2,B \leftarrow 1}$ in its Choi representation as
\begin{align}
    J_\mathcal{M}= \ketbra{\Phi^+}{\Phi^+}_{1,2} \otimes \ketbra{+}{+}_B +\ketbra{\Phi^-}{\Phi^-}_{1,2} \otimes \ketbra{-}{-}_B.
\end{align}
 According to Lemma~\ref{lem:ModifiedMIOTester}, this defines the Choi state of a channel $\mathcal{M}^{2,B \leftarrow 1} \in \MIO.$ Inserting this channel into the superchannel above we find
\begin{align}
    J_{\mathcal{S}_1[\mathcal{M}]}&= \partTr{1,2}{\Big(\id_{0,3} \otimes \left(J_\mathcal{M}\right)^{T_{1,2}} \Big) \Big(\id_B \otimes J_{\mathcal{S}_1}\Big)}= \frac{1}{2} \id_0 \otimes \left(\ketbra{+}{+}_B\otimes \ketbra{0}{0}_3+\ketbra{-}{-}_B\otimes\ketbra{1}{1}_3\right).
\end{align}
This corresponds to the replacement channel
\begin{align}
    \mathcal{S}_1[\mathcal{M}]=\frac{1}{2}\left(\ketbra{+}{+}_B\otimes \ketbra{0}{0}_3+\ketbra{-}{-}_B\otimes\ketbra{1}{1}_3\right) \partTr{0}{\cdot}.
\end{align}
If we now perform a projective measurement in the incoherent basis on system $3$, and a classically controlled $Z$ gate based on the measurement outcome on system $B$, we obtain a perfect bit of coherence.

Thus, under compositions on subsystems only, the MIO-preserving (but not completely MIO-preserving) superchannel above can be used to extract a bit of coherence by applying it only to subsystems of the channel $\mathcal{M}.$ This implies that the set MIOP is not closed under \textit{arbitrary} compositions. The fact that we can extract a maximally coherent state from this superchannel highlights the need for closure under arbitrary compositions, and the usage of completely free supermaps is thus crucial in the context of resource theories on quantum supermaps. Therefore, this justifies our initial choice of free supermaps. As we will see later, using the set of MIO-preserving superchannels as free superchannels in phase estimation protocols trivializes the whole problem in the sense that MIO-preserving superchannels can be arbitrarily powerful for phase estimation, \textit{even though we do not apply them to subsystems only}. To leap ahead, we find that for a single copy of the phase gate $V_\phi^{(2)}$, i.e., a qubit unitary encoding the phase $\phi$ and the Holevo cost function given by $C(\phi)=4\sin^2(\phi/2)$, the superchannel $\mathcal{S}_1$ depicted in Fig.~\ref{fig:MIOPExample} achieves the optimal average cost without coherence constraints regardless of the (fixed) supplied coherence. We will address this rigorously after introducing the application to phase estimation in Section~\ref{sec:PhaseEstimationMIOpreserving}.

\section{Phase estimation in quantum algorithms}
We now briefly review why phase estimation serves as a crucial subroutine in many quantum algorithms such as~\cite{Shor1997,Kitaev1995, Harrow2009}. For a comprehensive and rigorous treatment of the topic, see for example Refs.~\cite{Nielsen2010,Cleve1998}, which we follow in this section. Phase estimation as presented in Ref.~\cite{Cleve1998} is formulated as follows. Suppose that we have access to multiple copies of a black-box unitary $U$ and an eigenvector $\ket{\psi}$ of this unitary, such that $U \ket\psi=e^{i\phi} \ket{\psi}.$ The goal is to estimate the phase $\phi$. One can use the quantum circuit depicted in Fig.~\ref{fig:PhaseEstimationCircuitShor} to construct phase estimates of the phase $\phi$.
\begin{figure}[ht]
    \centering
    \scalebox{0.7}{\includegraphics[width=1\linewidth]{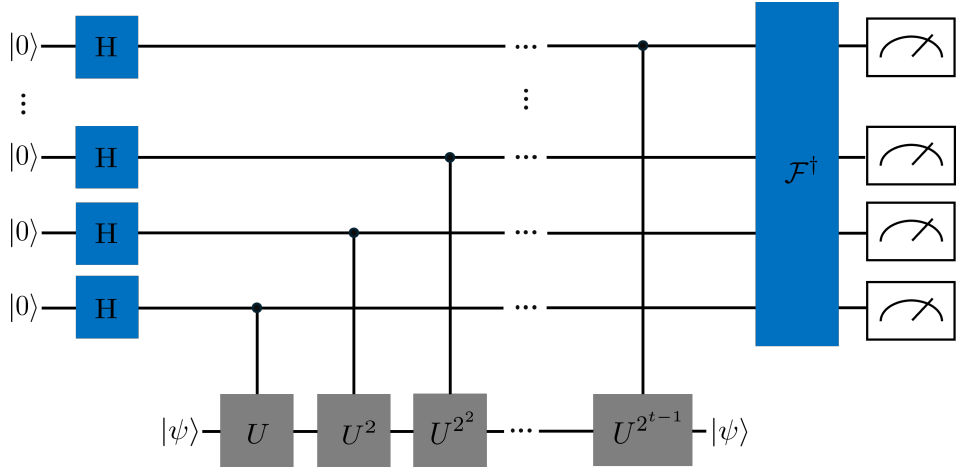}}
    \caption{A phase estimation protocol that prepares $t$ copies of a maximally coherent state via the Hadamard gate H, which are followed by controlled unitaries on an auxiliary system. A measurement is performed in the Fourier basis, i.e., we apply the inverse Fourier transform and measure each qubit in the computational basis, to deduce a phase estimate.}
    \label{fig:PhaseEstimationCircuitShor}
\end{figure}
For any outcome $0\leq k\leq 2^t-1$ (in binary) of the measurement in the circuit, we construct a phase estimate of $\hat{\phi}_k= \frac{2\pi k}{2^t}.$ The accuracy of each estimate and the probability to measure the corresponding outcome depends on the number of qubits $t$ (and also on the number of total copies of the black-box unitary $U$, given by $2^{t}-1$). As shown in Ref.~\cite{Cleve1998}, one obtains a measurement outcome $k$, such that $k$ is the best $t$-bit approximation to $\phi$, i.e., $\left|\phi-\frac{2\pi k}{2^t}\right| <\frac{2\pi}{2^{t+1}}$, with a probability of at least $\frac{4}{\pi^2} \approx 0.405$. Moreover, the probability to obtain the best $t$-bit approximation to $\phi$ can be amplified to arbitrary $1-\epsilon$, by using a total of at least $t+ \lceil \log\left(1+\frac{1}{2\epsilon} \right)\rceil$ qubits, see Refs.~\cite{Cleve1998,Nielsen2010}. Thus, the protocol in Fig.~\ref{fig:PhaseEstimationCircuitShor} can be used to estimate the phase (corresponding to an eigenstate of a given unitary $U$) with arbitrary accuracy and success probability with a suitable choice of circuit size. If one uses phase estimation as a subroutine, the desired accuracy and success probability depends on the specific problem one seeks to solve via phase estimation.\\

We now briefly discuss the example of Shor's factoring algorithm~\cite{Shor1997} and refer to Ref.~\cite{Cleve1998} for the complete discussion. Here, the factorization problem of a composite number $N$ is solved via order-finding, i.e., finding the smallest integer $r$ such that for some (randomly chosen) coprime integer $a$ to $N$ we have $a^r \text{mod} \, N = 1$, which allows to find a factor of $N$ with sufficiently high probability, see~\cite{Shor1997}. This order-finding problem can be solved via phase estimation. To this end, let $U$ define the unitary with $U\ket{n}=\ket{an \, \text{mod}\, N}$. The orthonormal eigenstates and eigenvalues of $U$ are
\begin{align}
    \lambda_k= \exp\left( 2\pi i \tfrac{k}{r}\right), \quad  \ket{\psi_k}= \frac{1}{\sqrt{r}} \sum_{j=0}^{r-1} e^{-2\pi i k \frac{j}{r}} \ket{ x^j \text{mod} N}.
\end{align}
For the moment, assume that we have access to a copy of $ \ket{\psi_1}= \frac{1}{\sqrt{r}} \sum_{j=0}^{r-1} e^{-2\pi i \frac{j}{r}} \ket{ x^j \text{mod} N}$. Our goal is to estimate the phase of its corresponding eigenvalue $\phi=  \frac{2\pi }{r}$. Using the phase estimation algorithm in Fig.~\ref{fig:PhaseEstimationCircuitShor}, where we initialize the auxiliary system in the state $\ket{\psi}=\ket{\psi_1}$, and we choose the number of bits as $t=  \lceil \log N \rceil$, we obtain a measurement outcome $k$ from which we construct a phase estimate of $\hat{\phi}_k= \frac{2\pi k}{r}$ for which  $\left|\frac{2\pi}{r}-\frac{2\pi k}{2^t}\right| <\frac{2\pi}{2^{t+1}}$ with sufficiently high probability. 
For such a measurement outcome, there exists an (efficient) classical post-processing, namely the continued fraction algorithm, which allows us to deduce the order $r$ from the estimate, and thus, find a factor of $N$.
Recall that we assumed that we have access to the eigenstate $ \ket{\psi_1}= \frac{1}{\sqrt{r}} \sum_{j=0}^{r-1} e^{-2\pi i \frac{j}{r}} \ket{ x^j \text{mod} N}$ of the unitary $U$. Given we have no knowledge of $r$ to begin with, this state is hard to prepare in practice.  However, this problem can be circumvented by noticing that
\begin{align}
    \ket{1}=\frac{1}{\sqrt{r}} \sum_{j=0}^{r-1} \ket{\psi_j}.
\end{align}
If we initialize the auxiliary system in Fig.~\ref{fig:PhaseEstimationCircuitShor} with the state $\ket{1}$ rather than $\ket{\psi_1}$, the protocol produces estimates of randomly chosen eigenvalue $\lambda_j =\frac{2\pi j }{r}$ for $0\leq j\leq r-1$. If $j$ and $r$ share a common factor, the classical post-processing will not return $r$, but only a factor of $r$. Fortunately, this is sufficiently rare, and the phase estimation algorithm in Fig.~\ref{fig:PhaseEstimationCircuitShor} combined with the classical post-processing returns the order $r$ with sufficiently high probability to find a factor of $N$. For more details, see Refs.~\cite{Shor1997,Cleve1998}.

Lastly, we want to emphasize that the formulation of phase estimation discussed in this section is equivalent to the one presented in the main text. Note, that the protocol in Fig.~\ref{fig:PhaseEstimationCircuitShor} uses a total of $2^t-1$ copies of the black-box unitary. In total, each basis state of the combined $t$-qubit register undergoes a phase shift of $\ket k \mapsto  e^{i\phi k} \ket{k}$ for any $0\leq k \leq 2^t-1$. This is equivalent to writing the entire middle part of Fig.~\ref{fig:PhaseEstimationCircuitShor} as a unitary $V_\phi^{(M)}$, where $M=2^t$. Moreover, we saw that the cost function relevant to Shor's algorithm is a window function with a width determined by the classical post-processing. Estimates that are close enough to the true value of the phase are useful for the classical post-processing, and thus, do not have to be penalized by the cost function, while others are too inaccurate to be helpful and thus have to be penalized.

\section{Optimal phase estimation with limited coherence}\label{sec:optimalPhaseEstimation}
We consider a scenario where we have access to $N$ copies of a unitary 
\begin{align}
    V_\phi^{(d)}=\sum_{n=0}^{d-1} e^{i\phi n} \ketbra{n}{n},
\end{align}
which encode the phase $\phi$ we wish to estimate. For example, this could be the phase imprinted to our system register via the modular exponentiation in Shor's algorithm, as described in the previous section. The most general way to utilize these unitaries is to interlace them with a sequence of quantum channels, i.e., to plug them into a quantum network as described above and in the main text. In particular, we consider quantum networks comprised of a sequence of MIO channels as in Definition~\ref{def:MIONetwork}. By construction, such networks cannot create coherence, which is necessary to perform phase estimation. Therefore, we need to supply resources, which we do in the form of a resource state $\rho$. It is convenient to describe the network of MIO operations via its Choi representation introduced above. Given the sequence of channels $\mathcal{M}_0,\ldots, \mathcal{M}_N$ the Choi state can be constructed using the link product from Eq.~\eqref{eq:NetworkLinkProduct}.

A phase estimation protocol as depicted in Fig.~\ref{fig:PhaseEstimationProtocolsSM} yields a measurement output $x$, for which we construct a phase estimate according to the assignment rule $x\mapsto \hat{\phi}_x$. The accuracy of such an estimate directly impacts the success of a quantum algorithm that uses phase estimation as a subroutine, and we thus wish to infer estimates that are as close to the true value as possible. Since phase estimation has diverse applications in quantum technologies, the impact of the accuracy on the success of the specific algorithm varies. To assess the performance of phase estimation in a unified manner, we use a cost function $C$ that penalizes deviation of an estimates $\hat{\phi}$ from the true value $\phi$ according to $C(\phi-\hat{\phi})$. We assume that $C\in\mathcal{L}_1([0,2\pi)$) is integrable, $2\pi$-periodic, and non-negative (this can be trivially relaxed to the cost function being bounded from below). 

To assess the performance of a phase estimation protocol, we will use the \textit{minimal average cost} with respect to any  fixed cost function that satisfies the above properties and assuming that our prior distribution of $\phi$ is uniform on $[0,2\pi)$. 
\begin{figure}[ht]
    \centering
   \scalebox{0.8}{\includegraphics[width=1\linewidth]{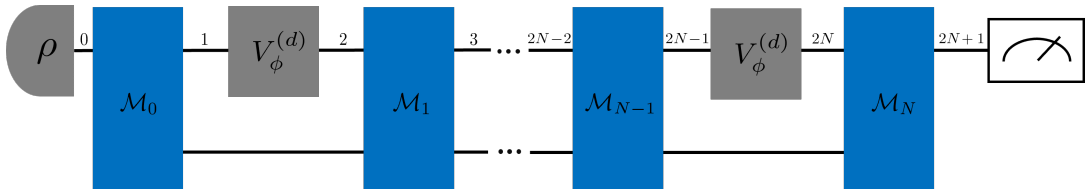}}
    \caption{The quantum part of a phase estimation protocol composed of a network of MIO operations $\mathcal{M}_j$ probing $N$ copies of $V_\phi^{(d)}$, an input state $\rho$, and a measurement.}
    \label{fig:PhaseEstimationProtocolsSM}
\end{figure}
As stated in Eq.~\eqref{eq:AvgCostMainText} of the main text, the minimal average cost is given by 
\begin{align}\label{eq:AvgCostSM}
    C_{\min}^{(d,N)}(\rho)=  \inf \sum_{x} \int_{0}^{2\pi}\dPhi  C(\phi -\hat{\phi}_x) \, p^{(d,N)}(x|\phi,\rho).
\end{align}
The probability distribution $p^{(d,N)}(x|\phi,\rho)$ can be conveniently expressed in the Choi representation. For that, we need the Choi states of the two networks given by the sequence of MIO channels and the identical unitaries $V_\phi^{(d)}$. To this end, let us introduce the systems $\odd=\{1,3,\ldots, 2N-1\}$ and $\even=\{2,4,\ldots, 2N\}$, where the system $0$ through which we supply coherence and the system $2N+1$ which we measure are excluded. Let 
\begin{align}\label{eq:UnitaryExpandedForm}
    U_\phi:=U_\phi^{(d,N)}&=\bigotimes_{j=1}^N \left(V_\phi^{(d)}\right)_{2j-1}\otimes \id_{2j} = \sum_{n=0}^{d^N-1} e^{i\phi H_d(n)} \ketbra{n}{n}_{\odd}\otimes \mathbb{1}_{\even},
\end{align}
where $\ket{n}_{\odd}=\ket{n_1,\ldots , n_N}_{1,3,\ldots , 2N-1}$ with $0\leq n_k\leq d-1$ and $H_d(n)$ denotes the digit sum of the integer $n$, i.e., $H_d(n)=\sum_{k=1}^{N} n_k$. Since the Choi state of a composition of channels that do not share input and output states simply factors as a tensor product of Choi states (see for example Ref.~\cite{Chiribella2009}), the Choi state of the identical unitaries, which we denote by $J_\phi$, is given by
\begin{align}\label{eq:ChoiStateVs}
     J_\phi:=J_\phi^{(d,N)}= \bigotimes_{j=1}^{N} J_{V_\phi^{(d)}}= U_\phi^{(d,N)} \sum_{n,m=0}^{d^N-1} \ketbra{nn}{mm}_{{\odd},{\even}} \left(U_\phi^{(d,N)}\right)^\dagger,
\end{align}
where we suppress the superscript $(d,N)$ if clear form the context. Note that $J_\phi^T=J_{-\phi}$. Let us denote the Choi state of the MIO network as $\mathcal{K}$, where we use the convention to denote Choi states of supermaps with calligraphic letters, just as we do for channels. Using the 
link product, the conditional probability distribution $p^{(d,N)}(x|\phi,\rho)$ thus takes the form
\begin{align}
     p^{(d,N)}(x|\phi,\rho):= \Tr{\left( \rho^T \otimes \left(J_\phi^{(d,N)} \right)^T\otimes M_x\right) \mathcal{K}},
\end{align}
where $\{M_x \}_x$ is the POVM and we again omit the dependence on $(d,N)$ from now on. The optimization problem for the minimal average cost can thus be expressed as
\begin{subequations}\label{eq:MinimalAverageCostRho}
\begin{alignat}{2}
    C_{\min}^{(d,N)}(\rho)= &\inf   \quad && \sum_{x=0}^{\tilde{M}-1} \int \dPhi \,  C\left(\phi -\hat{\phi}_x\right) \Tr{\left( \rho^T \otimes J_\phi^T\otimes M_x\right) \mathcal{K}} \\
        & \suchthat && \tilde{M}\geq 1\\
        & && \lbrace \hat{\phi}_x \rbrace_{0\leq x\leq \tilde{M}-1} \subset [0,2\pi) \\
        & && \sum_{x=0}^{\tilde{M}-1} M_x =\mathbb{1} \\
        & && M_x \geq 0   \\
        & && \mathcal{K} \text{ a comb representation of a MIO network}.\label{eq:MIONetworkConstraint}
\end{alignat}
\end{subequations}
To summarize, we optimize over an arbitrary MIO network as in Fig.~\ref{fig:PhaseEstimationProtocolsSM}, which we represent using the Choi representation of the network, schematically depicted in Fig.~\ref{fig:PhaseEstimationProtocolAsComb},
that processes our resource state $\rho$ and the encoding via the unitaries, the POVM with an unbounded number of measurement outcomes $\tilde{M}$, and all possible assignment rules for the phase estimates. 

\begin{figure}[ht]
    \centering
    \includegraphics[width=1\linewidth]{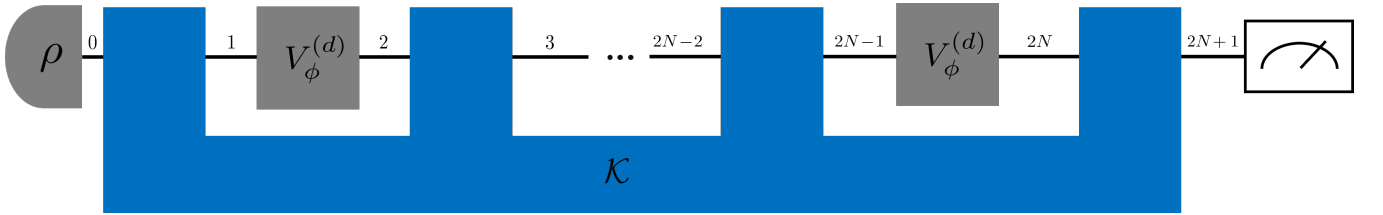}
    \caption{A quantum comb representation $\mathcal{K}$ of the phase estimation protocol comprised of the MIO network in Fig.~\ref{fig:PhaseEstimationProtocolsSM}.}
    \label{fig:PhaseEstimationProtocolAsComb}
\end{figure}

\section{Proof of Theorem~\ref{thm:OptAvgCost} from the main text}
In this section, we provide the proof of Theorem~\ref{thm:OptAvgCost} presented in the main text, which we restate here for readability.

\begin{thm}\label{thm:OptAvgCostSM}
   For any $(d,N)$, let $M=(d-1)N+1$ and let $Y^{(M)}\in \mathbb{C}^{M\times M}$ denote the Toeplitz matrix defined by 
   \begin{align}\label{eq:CostMatrixDefinitionSM}
       Y^{(M)}= \sum_{n,m=0}^{M-1}\int \dPhi C(\phi) e^{i\phi(n-m)}\ketbra{n}{m}.
   \end{align}
   Then, the minimal average cost is given by
   \begin{align}
       C_{\min}^{(d,N)}(\rho)&=  C_{\min}^{(M,1)}(\rho)= \min_{\mathcal{M} \in \MIO} \Tr{Y^{(M)} \mathcal{M}(\rho)},
   \end{align}
   where the input and output dimensions of the channel $\mathcal{M}$ are fixed by $\rho$ and $Y^{(M)}$, respectively.
\end{thm}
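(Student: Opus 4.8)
The plan is to sandwich $C_{\min}^{(d,N)}(\rho)$ between the right-hand side $C^\star := \min_{\mathcal{M}\in\MIO}\Tr{Y^{(M)}\mathcal{M}(\rho)}$ from above and below, using the MIO-compatible relaxation of Definition~\ref{def:DeltaConditionsNetworkSM} for the lower bound and the explicit protocol of Fig.~\ref{fig:SimpleCircuit} for the upper bound. Since every MIO network is MIO-compatible (Proposition~\ref{prop:MIOnetworkSubsetMIOCompatible}), replacing the constraint~\eqref{eq:MIONetworkConstraint} in~\eqref{eq:MinimalAverageCostRho} by the semidefinite MIO-compatibility conditions only decreases the infimum; call the relaxed value $\tilde C$, so $\tilde C \le C_{\min}^{(d,N)}(\rho)$. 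I will show $\tilde C \ge C^\star$ and $C_{\min}^{(d,N)}(\rho) \le C^\star$, which forces all three to coincide; the identity $C_{\min}^{(M,1)}(\rho) = C^\star$ is then the special case $d\to M$, $N\to 1$ (note $(M-1)\cdot 1+1=M$), so that $C_{\min}^{(d,N)}(\rho)=C_{\min}^{(M,1)}(\rho)$ as well.

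For the upper bound I would analyze the protocol of Fig.~\ref{fig:SimpleCircuit}. Writing $U_\phi^{(d,N)} = \sum_{k=0}^{M-1} e^{i\phi k}\,\Pi_k$, where $\Pi_k$ projects onto the span of the computational basis states $\ket{n}$ whose digit sum $H_d(n)$ equals $k$, the incoherent isometry $W:\mathbb{C}^M\to(\mathbb{C}^d)^{\otimes N}$ that sends $\ket{k}$ to a fixed basis state in the range of $\Pi_k$ satisfies $W^\dagger U_\phi^{(d,N)} W = V_\phi^{(M)}$, and since $W$ maps basis states to basis states (on the slot systems) both $W$ and $W^\dagger$ extend to MIO channels. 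Hence $N$ copies of $V_\phi^{(d)}$ together with MIO operations implement $V_\phi^{(M)}$. Probing $V_\phi^{(M)}$ with $\mathcal{M}(\rho)$, measuring in the Fourier basis $\{\ket{f_x}\}$ and assigning $\hat\phi_x=2\pi x/M$, the covariance identity $\langle f_x| V_{\phi+2\pi x/M}^{(M)} = \langle f_0| V_\phi^{(M)}$ collapses the $x$-sum and the $\phi$-integral, and the average cost becomes $\Tr{Y^{(M)}\mathcal{M}(\rho)}$ (up to a transpose absorbed by re-optimizing over $\mathcal{M}$, using that $Y^{(M)}$ is Toeplitz); the Fourier transform is not MIO but, as remarked, can be merged into the final measurement. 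Minimizing over $\mathcal{M}\in\MIO$ gives $C_{\min}^{(d,N)}(\rho)\le C^\star$.

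For the lower bound I would exploit the $U(1)$ symmetry à la \cite[Chap.~4.4]{Holevo2011}: both $C(\phi-\hat\phi_x)$ and the encoding $U_\phi^{(d,N)}$ are covariant under simultaneous shifts $\phi\mapsto\phi+\theta$, $\hat\phi_x\mapsto\hat\phi_x+\theta$, implemented on the comb by conjugation with the diagonal unitaries $V_\theta$, which commute with every dephasing $\Delta$. Averaging a feasible MIO-compatible comb together with its POVM over $\theta$ yields another feasible solution with the same average cost that is now covariant — here one must check that the average preserves the normalization~\eqref{eq:TPcondition} and the constraints $\Delta_{0,2,\dots,2j}J = \Delta_{0,2,\dots,2j}\Delta_{1,3,\dots,2j+1}J$, which follows from $[V_\theta,\Delta]=0$. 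Since $U_\phi^{(d,N)}$ carries only the frequencies $0,\dots,M-1$, a standard truncation lets me replace the covariant POVM by one with exactly $M$ outcomes and estimates $\hat\phi_x=2\pi x/M$, reducing the problem to a finite-dimensional SDP in the comb $J$ whose objective is linear in $\Delta$-sandwiched blocks of $J$. I would then write down its dual and exhibit a dual-feasible point assembled from an optimal MIO channel for $C^\star$ and the spectral data of $Y^{(M)}$; by weak duality this certifies $\tilde C\ge C^\star$.

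The step I expect to be the main obstacle is the lower bound: carrying out the symmetry reduction rigorously in the comb picture — verifying that the group average stays inside the MIO-compatible set, and that the covariant-POVM truncation to $M$ outcomes with $\hat\phi_x=2\pi x/M$ costs nothing — and then producing an explicit dual certificate that matches $C^\star$ \emph{exactly} rather than merely bounding it, which is what forces the relaxation to MIO-compatible supermaps to be tight. The bookkeeping of transposes and of the auxiliary/memory wires (needed to justify the claim that no memory system is required) also demands care, but it is routine rather than conceptual.
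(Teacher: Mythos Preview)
Your proposal is correct and follows essentially the same route as the paper: relax to MIO-compatible combs, reduce via the $U(1)$ covariance to a finite SDP with $M$ outcomes and $\hat\phi_x=2\pi x/M$, certify the lower bound $\tilde C\ge C^\star$ by exhibiting an explicit dual-feasible point built from the single-copy optimum, and match it with the protocol of Fig.~\ref{fig:SimpleCircuit}. The only notable differences are that your digit-sum embedding $W$ for implementing $V_\phi^{(M)}$ is simpler than the paper's Lemma~\ref{lem:PhaseShiftCurcuit} (which uses a base-$d$ construction to keep the auxiliary register small rather than $N$ qudits), and your transpose worry in the upper-bound computation is unnecessary --- the Fourier-basis calculation yields $\Tr{Y^{(M)}\mathcal{M}(\rho)}$ directly.
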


As outlined in the main text, we will prove this theorem by establishing a lower bound on the minimal average cost and subsequently constructing an explicit protocol that achieves this bound. To this end, the first step is to relax a constraint of the optimization problem in Eq.~\eqref{eq:MinimalAverageCostRho}, i.e., we optimize over MIO-compatible combs instead of combs corresponding to a network comprised of MIO channels. Recalling the discussion in section~\ref{sec:IncohCombs}, a lower bound to Eq.~\eqref{eq:MinimalAverageCostRho} is thus given by
\begin{subequations}\label{eq:MinimalAverageCostRhoRelaxed}
\begin{alignat}{2}
    C_{\min}^{(d,N)}(\rho)\geq \tilde{C}_{\min}^{(d,N)}(\rho)=&\inf   \quad && \sum_{x=0}^{\tilde{M}-1} \int \dPhi \,  C\left(\phi -\hat{\phi}_x\right) \Tr{\left( \rho^T \otimes J_\phi^T\otimes M_x\right) \mathcal{K}} \\
        &\suchthat && \tilde{M}\geq 1\\
        & && \lbrace \hat{\phi}_x \rbrace_{0\leq x\leq \tilde{M}-1} \subset [0,2\pi) \\
        & && \sum_{x=0}^{\tilde{M}-1} M_x =\mathbb{1} \\
        & && M_x \geq 0   \\
        & && \mathcal{K} \in \Comb{\mathcal{H}_0,\ldots,\mathcal{H}_{2N} \to \mathcal{H}_1,\ldots,\mathcal{H}_{2N+1}}\\
        & &&\Delta_{0,2,\ldots, 2j} \mathcal{K}= \Delta_{0,2,\ldots, 2j} \Delta_{1,3,\ldots, 2j+1}\mathcal{K}\quad  \forall  j: 0\leq j \leq N \label{eq:DeltaConditionsRelaxed}.
\end{alignat}
\end{subequations}
Next, we cast the above optimization problem as an optimization over subcombs, i.e., positive-semidefinite operators that sum to a comb. That this is possible will be obvious for readers familiar with the comb framework. 
\begin{lemma}\label{lem:Subcombs}
    The relaxed minimal average cost in Eq.~\eqref{eq:MinimalAverageCostRhoRelaxed} can be written as an optimization problem over a collection of subcombs $\{\mathcal{K}_x \}_{0\leq x\leq \tilde{M}-1}$, i.e.,
    \begin{subequations}\label{eq:SubCombOptimizationProblem}
            \begin{alignat}{2}
                \tilde{C}_{\min}^{(d,N)}(\rho)= &\inf   \quad && \sum_{x=0}^{\tilde{M}-1} \int \dPhi \,  C\left(\phi -\hat{\phi}_x\right) \Tr{\left(\rho^T \otimes J_\phi^T \right) \mathcal{K}_x} \label{eq:SubcombObjectiveFunction}\\
                & \suchthat && \tilde{M}\geq 1\label{eq:CombCIConstraint}\\
                & && \lbrace \hat{\phi}_x \rbrace_{0\leq x\leq \tilde{M}-1} \subset [0,2\pi) \\
                & && \mathcal{K}_x \geq 0   \\
                & && \sum_{x=0}^{\tilde{M}-1} \mathcal{K}_x \in \Comb{\mathcal{H}_0,\ldots,\mathcal{H}_{2N} \to \mathcal{H}_1,\ldots,\mathcal{H}_{2N-1}} \\
                & &&   \Delta_{0,\ldots, 2j} \mathcal{K}_x = \Delta_{0,\ldots 2j} \Delta_{1,\ldots, 2j+1}\mathcal{K}_x \quad \forall x, \forall j: 0\leq j\leq N-1 \label{eq:DeltaConditionsRelaxedN-1}.
        \end{alignat}
    \end{subequations}

\end{lemma}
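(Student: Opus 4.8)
The plan is to show that the two optimisation problems \eqref{eq:MinimalAverageCostRhoRelaxed} and \eqref{eq:SubCombOptimizationProblem} have the same optimal value, by translating feasible points back and forth between them without changing the objective. The underlying fact is elementary in the comb framework: a comb $\mathcal{K}$ on systems $0,\ldots,2N+1$ together with a POVM $\{M_x\}$ on its last output system $2N+1$ is the same datum as a family of subcombs on systems $0,\ldots,2N$, and conversely.

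For the first inequality, showing that $\tilde{C}_{\min}^{(d,N)}(\rho)$ is at least the optimal value of \eqref{eq:SubCombOptimizationProblem}, I would take a feasible $(\tilde{M},\{\hat{\phi}_x\},\{M_x\},\mathcal{K})$ and set $\mathcal{K}_x:=\partTr{2N+1}{(\id_{0,\ldots,2N}\otimes M_x)\,\mathcal{K}}$, i.e.\ $\mathcal{K}$ with $M_x$ traced out on system $2N+1$. Then $\mathcal{K}_x\ge 0$ (write $M_x=\sqrt{M_x}\,\sqrt{M_x}$ and use cyclicity of the partial trace over $2N+1$); the $x$-th term of the objective is literally unchanged, since $\Tr{(\rho^T\otimes J_\phi^T)\,\mathcal{K}_x}=\Tr{(\rho^T\otimes J_\phi^T\otimes M_x)\,\mathcal{K}}$; using $\sum_x M_x=\id$ and the comb normalisation \eqref{eq:TPcondition} one gets $\sum_x\mathcal{K}_x=\partTr{2N+1}{\mathcal{K}}=\id_{2N}\otimes J^{(N-1)}\in\Comb{\mathcal{H}_0,\ldots,\mathcal{H}_{2N}\to\mathcal{H}_1,\ldots,\mathcal{H}_{2N-1}}$; and the MIO-compatibility conditions \eqref{eq:DeltaConditionsRelaxed} for $0\le j\le N-1$ involve only systems $0,\ldots,2N-1$, hence commute with the partial trace over $2N+1$ against $M_x$ and descend to each $\mathcal{K}_x$, yielding \eqref{eq:DeltaConditionsRelaxedN-1}. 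The level-$N$ condition on $\mathcal{K}$ plays no role in this direction.

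For the reverse inequality I would take a feasible $(\tilde{M},\{\hat{\phi}_x\},\{\mathcal{K}_x\})$ of \eqref{eq:SubCombOptimizationProblem}, choose $\dim\mathcal{H}_{2N+1}=\tilde{M}$, the projective POVM $M_x=\ketbra{x}{x}_{2N+1}$, and set $\mathcal{K}:=\sum_x\mathcal{K}_x\otimes\ketbra{x}{x}_{2N+1}$. Then $\mathcal{K}\ge 0$; the objective is reproduced since $\Tr{M_y\ketbra{x}{x}}=\delta_{xy}$; and $\mathcal{K}$ is a bona fide comb because $\partTr{2N+1}{\mathcal{K}}=\sum_x\mathcal{K}_x$ is a comb with trivial final output, hence of the form $\id_{2N}\otimes\tilde{\mathcal{K}}$ with $\tilde{\mathcal{K}}$ a comb on $0,\ldots,2N-1$, which supplies the remaining normalisation conditions \eqref{eq:TPcondition}. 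The one point needing a line is the level-$N$ MIO-compatibility condition: since $\mathcal{K}$ is already diagonal on $2N+1$ one has $\Delta_{1,3,\ldots,2N+1}\mathcal{K}=\Delta_{1,3,\ldots,2N-1}\mathcal{K}$, so the $j=N$ instance of \eqref{eq:DeltaConditionsRelaxed} reduces to $\Delta_{0,2,\ldots,2N}\mathcal{K}_x=\Delta_{0,2,\ldots,2N}\Delta_{1,3,\ldots,2N-1}\mathcal{K}_x$ for every $x$, and this follows by applying $\Delta_{2N}$ to the $j=N-1$ instance of \eqref{eq:DeltaConditionsRelaxedN-1}. Together the two inequalities give the stated equality.

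I expect no conceptual obstacle, the statement being, as the paper notes, routine within the comb framework. The only care needed is bookkeeping: keeping the link-product transpose conventions consistent so the objectives coincide term by term, and invoking the standard normalisation of a comb with an input-only final slot (a quantum tester) in the product form $\id_{2N}\otimes(\text{comb on }0,\ldots,2N-1)$. The mild subtlety worth flagging explicitly is the index mismatch — \eqref{eq:DeltaConditionsRelaxed} runs up to $j=N$ whereas \eqref{eq:DeltaConditionsRelaxedN-1} only to $j=N-1$ — which the argument above shows to be harmless, the missing constraint being automatic once the final output system carries only classical data.
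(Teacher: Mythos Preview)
Your proposal is correct and follows essentially the same approach as the paper's proof: both directions use the identical constructions $\mathcal{K}_x=\partTr{2N+1}{(\id\otimes M_x)\mathcal{K}}$ and $\mathcal{K}=\sum_x\mathcal{K}_x\otimes\ketbra{x}{x}_{2N+1}$ with $M_x=\ketbra{x}{x}$. You in fact spell out the $j=N$ MIO-compatibility check in the reverse direction more carefully than the paper, which simply asserts that the constructed $\mathcal{K}$ ``clearly'' satisfies all the dephasing conditions.
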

\begin{proof}
    For any feasible comb $\mathcal{K}$ and POVM $\lbrace M_x \rbrace$ in Eq.~\eqref{eq:MinimalAverageCostRhoRelaxed}, define $\mathcal{K}_x=\partTr{2N+1}{(\id \otimes M_x) \mathcal{K}}$. Since $\mathcal{K}_x\geq 0$ (as $M_x,\mathcal{K} \geq 0$) and $\sum_x \mathcal{K}_x=\partTr{2N+1}{(\id \otimes \sum_x M_x) \mathcal{K}} = \partTr{2N+1}{\mathcal{K}}=\id\otimes \mathcal{K}^{(N-1)}$, this is a collection of subcombs $\{ \mathcal{K}_x\}_{0\leq x\leq \tilde{M}-1}$. Moreover, the coherence constraints in Eq.~\eqref{eq:DeltaConditionsRelaxedN-1} hold (note that the indices only run until the second to last slot in contrast to Eq.~\eqref{eq:DeltaConditionsRelaxed}) since for any $0\leq j\leq N-1$
    \begin{align}
        \Delta_{0,\ldots, 2j} \mathcal{K}_x &= \partTr{2N+1}{(\id \otimes M_x) \Delta_{0,\ldots, 2j} \mathcal{K}}\overset{\eqref{eq:DeltaConditionsRelaxed}}{=}\partTr{2N+1}{(\id \otimes M_x) \Delta_{0,\ldots, 2j} \Delta_{1,\ldots, 2Nj+1}\mathcal{K}} \nonumber \\
        &=\Delta_{0,\ldots 2j} \Delta_{1,\ldots, 2j+1}\mathcal{K}_x.
    \end{align}
    Conversely, for any feasible collection of subcombs in Eq.~\eqref{eq:SubCombOptimizationProblem}, let $\mathcal{K}=\sum_x \mathcal{K}_x \otimes \ketbra{x}{x}_{2N+1}$, which clearly defines a valid comb with $\Delta_{0,2,\ldots, 2j} \mathcal{K}= \Delta_{0,2,\ldots, 2j} \Delta_{1,3,\ldots, 2j+1}\mathcal{K}$  for all  $ 0\leq j \leq N $. For a measurement $M_x=\ketbra{x}{x}$, we thus have $\Tr{(.)\mathcal{K}_x}=\Tr{\left((.)\otimes M_x \right) \mathcal{K}}$, which completes the proof.
\end{proof}

We now proceed to show that Eq.~\eqref{eq:SubCombOptimizationProblem} can be reduced to a semidefinite optimization problem (SDP). To this end, we make use of the unitary
    \begin{align}\label{eq:Utilde}
    \tilde{U}_\phi&:= \id_0 \otimes U_\phi\overset{\eqref{eq:UnitaryExpandedForm}}{=}\id_0\otimes \bigotimes_{j=1}^N \left(V_\phi^{(d)}\right)_{2j-1}\otimes \id_{2j} = \id_0 \otimes \sum_{n=0}^{d^N-1} e^{i\phi H_d(n)} \ketbra{n}{n}_{\odd}\otimes \mathbb{1}_{\even}.
    \end{align}
The key step here is to show via symmetry arguments (available because of a uniform prior and a cost function that only depends on the difference between estimates and true value), that for a fixed number of copies $N$ of the unitary, it is always sufficient to consider a finite number $M$ of measurement outcomes $x$ with associated uniformly distributed phase estimates $\hat{\phi}_x=\tfrac{2\pi x}{M}$.

\begin{lemma}\label{lem:CovariantCombs}
    Let $M:= (d-1)N+1$ and 
    \begin{align}
        X^{(d,N)}:=  \int_{0}^{2\pi}\dPhi C(\phi) {J_\phi^{(d,N)}}^T= \sum_{n,m=0}^{d^N-1} \ketbra{nn}{mm}_{\odd,\even} \int_{0}^{2\pi}\dPhi C(\phi) e^{-i\phi\left(H_d(n)-H_d(m)\right)} .
    \end{align}
    The relaxed minimal average cost in Eq.~\eqref{eq:SubCombOptimizationProblem} can be expressed as the semidefinite optimization problem
    \begin{subequations}\label{eq:MinimalCostCovariant}
    \begin{alignat}{2}
        \tilde{C}_{\min}^{(d,N)}(\rho)= &\min   \quad &&  M \Tr{\left( \rho^T \otimes X^{(d,N)}\right)\mathcal{K}_0} \\
            & \suchthat && \mathcal{K}_0\geq 0 \\
            & && \sum_{x=0}^{M-1} \tilde{U}_{\frac{2\pi x}{M}}^\dagger \mathcal{K}_0 \tilde{U}_{\frac{2\pi x}{M}}\in \Comb{\mathcal{H}_0,\ldots,\mathcal{H}_{2N} \to \mathcal{H}_1,\ldots,\mathcal{H}_{2N-1}} \\
            & && \Delta_{0,2,\ldots, 2j} \mathcal{K}_0= \Delta_{0,2,\ldots, 2j} \Delta_{1,3,\ldots, 2j+1}\mathcal{K}_0 \quad  \forall  j: 0\leq j \leq N-1.
    \end{alignat}
    \end{subequations}
\end{lemma}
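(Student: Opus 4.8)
The plan is to exploit the covariance structure of the problem: both the uniform prior and the cost function depend only on $\phi-\hat\phi_x$, and the unitaries $\tilde U_\phi$ form a (projective) representation of the phase group. I would first rewrite the objective \eqref{eq:SubcombObjectiveFunction}. Since $J_\phi^T = J_{-\phi}$ and $J_\phi = \tilde U_\phi J_0 \tilde U_\phi^\dagger$ (up to the trivial factor on system $0$), a change of variables $\phi\mapsto\phi+\hat\phi_x$ in the integral turns the $x$-th term into $\int\dPhi\, C(\phi)\,\Tr{(\rho^T\otimes J_{-\phi-\hat\phi_x})\mathcal{K}_x} = \Tr{(\rho^T\otimes X^{(d,N)})\,\tilde U_{\hat\phi_x}^\dagger \mathcal{K}_x \tilde U_{\hat\phi_x}}$ after absorbing the shift into the comb. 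Defining $\mathcal{K}_0' := \sum_x \tilde U_{\hat\phi_x}^\dagger \mathcal{K}_x \tilde U_{\hat\phi_x}$ collapses the whole sum to the single trace $\Tr{(\rho^T\otimes X^{(d,N)})\mathcal{K}_0'}$. One then checks that $\mathcal{K}_0'\geq 0$, that the MIO-compatibility constraints \eqref{eq:DeltaConditionsRelaxedN-1} are preserved under conjugation by $\tilde U_{\hat\phi_x}$ (because $\tilde U_\phi$ is diagonal in the incoherent basis, hence commutes with all $\Delta$'s), and that the normalization $\sum_x \mathcal{K}_x$ being (the tensor of identity with) a comb translates into a covariance-averaged comb constraint on $\mathcal{K}_0'$.

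The crux is then to argue that it suffices to take \emph{finitely many} estimates, namely $M=(d-1)N+1$ of them equally spaced, $\hat\phi_x = 2\pi x/M$. The key observation is that $X^{(d,N)}$, and more generally every operator appearing in the constraints, lives effectively on the support of the ``digit-sum'' decomposition: $\tilde U_\phi = \id_0\otimes\sum_{n} e^{i\phi H_d(n)}\ketbra nn_{\odd}\otimes\id_{\even}$, and $H_d(n)$ takes values in $\{0,1,\ldots,(d-1)N\}$, i.e. exactly $M$ distinct values. Hence $\phi\mapsto \tilde U_\phi$ factors through a representation that is effectively supported on $M$ Fourier modes, and the continuous group average $\int\dPhi\,(\cdot)$ appearing implicitly in the comb normalization can be replaced by the discrete average $\tfrac1M\sum_{x=0}^{M-1}(\cdot)|_{\phi=2\pi x/M}$ without loss, since the two averages agree on all functions whose Fourier support lies in $\{-(d-1)N,\ldots,(d-1)N\}$. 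This is the standard ``quadrature is exact for band-limited functions'' argument; the analogous reduction in the unconstrained setting is \cite[Chap.~4.4]{Holevo2011}. I would phrase it by: given any feasible instance of \eqref{eq:SubCombOptimizationProblem}, group-average the subcombs over the discrete subgroup $\{2\pi x/M\}$, observe that this does not change the objective (by the band-limitedness of $X^{(d,N)}$ in the digit-sum sectors) and maps feasible points to feasible points of \eqref{eq:MinimalCostCovariant}, and conversely that any feasible $\mathcal{K}_0$ of \eqref{eq:MinimalCostCovariant} yields a feasible collection $\{\mathcal{K}_x := \tfrac1M \tilde U_{2\pi x/M}^\dagger \mathcal{K}_0 \tilde U_{2\pi x/M}\}$ with the same cost.

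The main obstacle I anticipate is the bookkeeping in the second direction: verifying that $\sum_x \tilde U_{2\pi x/M}^\dagger \mathcal{K}_0 \tilde U_{2\pi x/M}$ being a comb is \emph{equivalent} (not just necessary) to $\mathcal{K}_0$ being a valid object for \eqref{eq:MinimalCostCovariant}, and that the MIO-compatibility conditions only need to be imposed up to slot $N-1$ on $\mathcal{K}_0$ (the last slot being automatic from the comb normalization together with the final factor). One must be careful that the group averaging is performed only over the systems $\odd\cup\even$ that carry the unitaries, leaving system $0$ (and the yet-unmeasured output $2N+1$, which has already been traced out into the $\mathcal{K}_x$'s) untouched, and that the factor of $M$ in the objective is exactly the one produced by undoing the normalization of the averaged comb. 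A secondary technical point is confirming $X^{(d,N)}$ indeed has the stated matrix-element form, which is immediate from $J_\phi^T=J_{-\phi}$ and \eqref{eq:ChoiStateVs}, and relating it to the Toeplitz matrix $Y^{(M)}$ of \eqref{eq:CostMatrixDefinitionSM} — this last identification ($X^{(d,N)}$ restricted to each digit-sum sector reproduces $Y^{(M)}$) is what will be used in the subsequent step of the overall proof, but for this lemma one only needs the definition of $X^{(d,N)}$ as written.
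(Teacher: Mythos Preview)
Your plan is correct and uses the same covariance idea as the paper, but you compress what the paper does in three explicit stages into a single step. The paper first pads to $\tilde M\ge M$ outcomes, then performs two successive symmetrizations (one to replace arbitrary $\hat\phi_x$ by uniform $2\pi x/\tilde M$, a second to make the subcombs covariant, each shown to preserve both the objective and all constraints), and finally rescales $\tilde{\mathcal K}_0\mapsto\tfrac{\tilde M}{M}\tilde{\mathcal K}_0$ to drop from $\tilde M$ to $M$ outcomes using the digit-sum band-limitedness. Your one-shot definition $\mathcal K_0'=\sum_x\tilde U_{\hat\phi_x}^\dagger\mathcal K_x\tilde U_{\hat\phi_x}$ is more economical, and the verification of the comb constraint you flag as the main obstacle does go through: $\sum_{y=0}^{M-1}\tilde U_{2\pi y/M}^\dagger\mathcal K_0'\tilde U_{2\pi y/M}=M\,P(\sum_x\mathcal K_x)$ where $P$ is the projection onto the digit-sum block-diagonal (here the band-limitedness $|H_d(n)-H_d(m)|\le M-1$ is used), and $P$ preserves combs since it is a convex combination of $\tilde U$-conjugations, each of which preserves the comb conditions because $\tilde U_\phi$ acts only on odd systems.

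One genuine slip: your normalizations are off. In the converse direction, $\mathcal K_x:=\tfrac1M\tilde U_{2\pi x/M}^\dagger\mathcal K_0\tilde U_{2\pi x/M}$ gives $\sum_x\mathcal K_x=\tfrac1M(\text{comb})$, which is not a comb; drop the $\tfrac1M$. Correspondingly, in the forward direction you should set $\mathcal K_0:=\tfrac1M\mathcal K_0'$ so that the averaged constraint produces exactly $P(\sum_x\mathcal K_x)$ rather than $M$ times it. With these fixes the factor of $M$ in the objective appears exactly where the lemma states. Also, the MIO-compatibility up to slot $N-1$ is already what Lemma~\ref{lem:Subcombs} hands you, so there is nothing extra to check there.
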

\begin{proof}
    Starting from Lemma~\ref{lem:Subcombs}, we first note that, without loss of generality, we can restrict ourselves to optimize over measurements with $\tilde{M}\geq M=(d-1)N+1$ outcomes. If the minimum is achieved for fewer than $M$ outcomes, we can simply add outcomes that appear with probability zero (see also Ref.~\cite[Remark~16]{vanDam2007}). Next, we show that for a fixed $\tilde{M}$, the minimum of the optimization problem in Eq.~\eqref{eq:SubCombOptimizationProblem} is achieved on a set of subcombs and phase estimates $\lbrace (\mathcal{K}_x, \hat{\phi}_x) \rbrace_{0\leq x\leq \tilde{M}-1}$, where the subcombs are covariant, i.e., each 
    \begin{align}
        \mathcal{K}_x=  \tilde{U}_{\frac{2\pi x}{M}} \mathcal{K}_0 \tilde{U}_{\frac{2\pi x}{M}}^\dagger
    \end{align}
    for some $\mathcal{K}_0$, and each associated phase estimate is $\hat{\phi}_x  = \tfrac{2\pi x}{M}$. To finish the proof, we will then show that it is sufficient to consider $M=(d-1)N+1$ measurement outcomes. 
    
    Let $\lbrace \mathcal{L}_x\rbrace_{0\leq x\leq \tilde{M}-1}$ be an arbitrary collection of subcombs that is feasible in the optimization problem in Eq.~\eqref{eq:SubCombOptimizationProblem} and let 
     \begin{align}
      C(\mathcal{L}_x,\hat{\phi}_x) &=\int_{0}^{2\pi} \dPhi \, \sum_{x=0}^{\tilde{M}-1} \mathcal{C}(\phi-\hat{\phi}_x) \Tr{\left(\rho^T\otimes J_\phi^T \right) \mathcal{L}_x}.
    \end{align}
    This is the objective function evaluated on $\lbrace \mathcal{L}_x\rbrace_{0\leq x\leq \tilde{M}-1}$ and a specific choice of estimates. 

    We define another set of subcombs $\lbrace \mathcal{M}_x \rbrace_{0\leq x\leq \tilde{M}-1}$ by
    \begin{align}\label{eq:SubcombM}
     \mathcal{M}_x&:=\frac{1}{\tilde{M}}\sum_{r=0}^{\tilde{M}-1} \tilde{U}_{-\hat{\phi}_{x+r} +\tfrac{2\pi x}{\tilde{M}}}^\dagger\mathcal{L}_{x+r} \tilde{U}_{-\hat{\phi}_{x+r} +\tfrac{2\pi x}{\tilde{M}}},
    \end{align}
    where addition on indices is understood to be modulo $\tilde{M}$, i.e., $\mathcal{L}_{x+r}=\mathcal{L}_{(x+r)\,  \text{mod}  \tilde{M}}$ and $\hat{\phi}_{x+r}=\hat{\phi}_{(x+r)\,  \text{mod}  \tilde{M}}$.  We will show later in the proof that this defines feasible collection of subcombs.
    Using Eq.~\eqref{eq:ChoiStateVs} to rearrange phases and the periodicity of the cost function to express the integration variable as $\phi^\prime =\phi-\tfrac{2\pi x}{\tilde{M}}$, we obtain
    \begin{align}
        C(\mathcal{M}_x,\tfrac{2\pi x}{\tilde{M}})&\overset{\eqref{eq:ChoiStateVs}}{=} \frac{1}{\tilde{M}} \sum_{r=0}^{\tilde{M}-1} \int_{0}^{2\pi} \dPhi \, \sum_{x=0}^{\tilde{M}-1} C\left(\phi-\frac{2\pi x}{\tilde{M}}\right) \Tr{\left( \rho^T \otimes J_{-\frac{2\pi x }{\tilde{M}}+\phi +\hat{\phi}_{x+r} }^T \right)\mathcal{L}_{x+r} } \nonumber  \\
         &= \frac{1}{\tilde{M}} \sum_{r=0}^{\tilde{M}-1} \int_{0}^{2\pi} \dPhi' \, \sum_{x=0}^{\tilde{M}-1} C\left(\phi^\prime \right) \Tr{\left(\rho^T \otimes J_{\phi^\prime +\hat{\phi}_{x+r} }^T \right)\mathcal{L}_{x+r} }\nonumber  \\
         &\overset{y=x+r}{=} \frac{1}{\tilde{M}} \sum_{r=0}^{\tilde{M}-1} \int_{0}^{2\pi} \dPhi' \, \sum_{y=0}^{\tilde{M}-1} C\left(\phi^\prime \right) \Tr{\left( \rho^T \otimes J_{\phi^\prime +\hat{\phi}_{y} }^T \right)\mathcal{L} _{y} } \nonumber \\
         &=\int_{0}^{2\pi}\dPhi' \, \sum_{y=0}^{\tilde{M}-1} C\left(\phi^\prime \right) \Tr{\left(\rho^T \otimes J_{\phi^\prime +\hat{\phi}_{y} }^T \right)\mathcal{L}_{y} } \nonumber \\
         &= \int_{0}^{2\pi} \dPhi' \, \sum_{y=0}^{\tilde{M}-1} C\left(\phi^\prime-\hat{\phi}_{y} \right) \Tr{\left( \rho^T \otimes J_{\phi^\prime }^T \right)\mathcal{L}_{y} } \nonumber \\
         &=C(\mathcal{L}_x,\hat{\phi}_x).
    \end{align}
This shows that the objective function evaluated on the two pairs of subcombs and estimates is equal. Estimates $\hat{\phi}_x=\frac{2\pi x}{\tilde{M}}$ are thus optimal (for fixed $\tilde{M}$). Using the same method again, we construct another collection of subcombs $\lbrace \mathcal{K}_x \rbrace_{0\leq x\leq \tilde{M}-1}$ as
\begin{align}
    \mathcal{K}_x&:=\frac{1}{\tilde{M}}\sum_{r=0}^{\tilde{M}-1} \tilde{U}_{-\frac{2\pi r }{\tilde{M}}}^\dagger\mathcal{M}_{x+r}  \tilde{U}_{-\frac{2\pi r }{\tilde{M}}} \nonumber  \\
    &= \tilde{U}_{\frac{2\pi x}{\tilde{M}}}^\dagger  \left\lbrace\frac{1}{\tilde{M}} \sum_{y=0}^{\tilde{M}-1} \tilde{U}_{-\frac{2\pi y }{\tilde{M}}}^\dagger\mathcal{M}_{y} \tilde{U}_{-\frac{2\pi y}{\tilde{M}}} \right\rbrace \tilde{U}_{\frac{2\pi x}{\tilde{M}}} \nonumber \\
    &=: \tilde{U}_{\frac{2\pi x}{\tilde{M}}}^\dagger \mathcal{K}_0 \tilde{U}_{\frac{2\pi x}{\tilde{M}}}.
\end{align}
Using again Eq.~\eqref{eq:ChoiStateVs}, this leaves the objective function invariant too,
\begin{align}\label{eq:subcombK}
    C(\mathcal{K}_x,\tfrac{2\pi x}{\tilde{M}})=& \frac{1}{\tilde{M}} \sum_{r,x=0}^{\tilde{M}-1} \int_{0}^{2\pi}  \dPhi \,  C\left(\phi-\frac{2\pi x}{\tilde{M}}\right) \Tr{\left( \rho^T \otimes J_{\frac{2\pi r}{\tilde{M}}+\phi }^T \right)\mathcal{M}_{x+r} } \nonumber \\
    =& \frac{1}{\tilde{M}} \sum_{r,y=0}^{\tilde{M}-1} \int_{0}^{2\pi}  \dPhi \,  C\left(\phi-\frac{2\pi (y-r)}{\tilde{M}}\right) \Tr{\left(\rho^T \otimes J_{\phi+\frac{2\pi r}{\tilde{M}}}^T \right)\mathcal{M}_{y}}\nonumber \\
    =& \frac{1}{\tilde{M}} \sum_{r,y=0}^{\tilde{M}-1} \int_{0}^{2\pi}  \frac{\diff{\varphi}}{2\pi} \,  C\left(\varphi-\frac{2\pi y}{\tilde{M}}\right) \Tr{\left( \rho^T \otimes J_{\varphi}^T \right)\mathcal{M}_{y}}\nonumber \\
    =& C(\mathcal{M}_x,\tfrac{2\pi x}{\tilde{M}}).
\end{align}
where in the second line we made the substitution $y=x+r$, and in the second-to-last line we substituted $\varphi=\phi+\frac{2\pi r}{\tilde{M}}$. Moreover, with $X^{(d,N)}=\int \dPhi C(\phi) J_\phi^T$, it follows that 
\begin{align}
  C(\mathcal{K}_x,\tfrac{2\pi x}{\tilde{M}})&=  \sum_{x=0}^{\tilde{M}-1} \int_{0}^{2\pi} \dPhi C\left(\phi-\frac{2\pi x}{\tilde{M}}\right)  \Tr{\left( \rho^T \otimes J_\phi^T \right) \tilde{U}_{\frac{2\pi x}{\tilde{M}}}^\dagger  \mathcal{K}_0  \tilde{U}_{\frac{2\pi x}{\tilde{M}}}} \nonumber \\
  &=\sum_{x=0} ^{\tilde{M}-1}\int_{0}^{2\pi} \dPhi C\left(\phi-\frac{2\pi x}{\tilde{M}}\right) \Tr{\left( \rho^T \otimes U_{\frac{2\pi x}{\tilde{M}}} J_{\phi}^T U_{\frac{2\pi x}{\tilde{M}}}^\dagger \right)  \mathcal{K}_0 } \nonumber \\
  &\overset{\eqref{eq:ChoiStateVs}}{=} \sum_{x=0} ^{\tilde{M}-1}\int_{0}^{2\pi} \dPhi^\prime C(\phi^\prime)  \Tr{\left( \rho^T \otimes J_{\phi^\prime}^T \right)  \mathcal{K}_0 }\nonumber  \\
  &= \tilde{M} \Tr{\left( \rho^T \otimes X^{(d,N)} \right) \mathcal{K}_0}.
\end{align}
As promised, we now show that $\lbrace \mathcal{M}_x \rbrace$ and $\lbrace \mathcal{K}_x \rbrace$ are feasible subcombs (given that $\lbrace \mathcal{L}_x \rbrace$ is). By a direct calculation, it is easy to check that if $\mathcal{M}=\sum_x \mathcal{M}_x \in \Comb{\mathcal{H}_2,\ldots, \mathcal{H}_{2N} \to \mathcal{H}_1,\ldots, \mathcal{H}_{2N-1}}$, then
\begin{align}
    \mathcal{K}&=\sum_{x=0}^{\tilde{M}-1} \mathcal{K}_x= \frac{1}{\tilde{M}}\sum_{r=0}^{\tilde{M}-1} \tilde{U}_{-\frac{2\pi r}{\tilde{M}}}^\dagger \sum_{x=0}^{\tilde{M}-1} \mathcal{M}_{x+r}\tilde{U}_{-\frac{2\pi r}{\tilde{M}}} = \frac{1}{\tilde{M}}\sum_{r=0}^{\tilde{M}-1} \tilde{U}_{-\frac{2\pi r}{\tilde{M}}}^\dagger \mathcal{M} \tilde{U}_{-\frac{2\pi r}{\tilde{M}}} \nonumber \\
    &\in \Comb{\mathcal{H}_2,\ldots, \mathcal{H}_{2N} \to \mathcal{H}_1,\ldots, \mathcal{H}_{2N-1}}.
\end{align}
This follows immediately from the fact that $\tilde{U}$ factors and acts only non-trivially on systems enumerated by an odd index (see Eq.~\eqref{eq:Utilde}). Since the normalization conditions trace out odd systems, they are unaffected by such unitaries, and the collection of subcombs $\lbrace \mathcal{K}_x \rbrace$ is normalized if $\lbrace \mathcal{M}_x \rbrace$ was. Analogously, we can conclude that $\lbrace \mathcal{M}_x \rbrace$ defines a valid collection of subcombs if $\lbrace \mathcal{L}_x \rbrace$ does. By a direct calculation, we can see that for all $x$ and $0\leq j \leq N-1$
\begin{align}
     \Delta_{0,2,\ldots, 2j}\mathcal{M}_x &=  \frac{1}{\tilde{M}}\sum_{r=0}^{\tilde{M}-1} \tilde{U}_{-\hat{\phi}_{x+r} +\tfrac{2\pi x}{\tilde{M}}}^\dagger \Delta_{0,2,\ldots, 2j}\mathcal{L}_{x+r}\tilde{U}_{-\hat{\phi}_{x+r} +\tfrac{2\pi x}{\tilde{M}}} \nonumber \\
     &= \frac{1}{M}\sum_{r=0}^{\tilde{M}-1} \tilde{U}_{-\hat{\phi}_{x+r} +\tfrac{2\pi x}{\tilde{M}}}^\dagger  \Delta_{0,2,\ldots, 2j}\Delta_{1,3,\ldots, 2j+1}\mathcal{L}_{x+r}\tilde{U}_{-\hat{\phi}_{x+r} +\tfrac{2\pi x}{\tilde{M}}} \nonumber \\
     &=\frac{1}{\tilde{M}}\sum_{r=0}^{\tilde{M}-1}   \Delta_{0,2,\ldots, 2j}\Delta_{1,3,\ldots, 2j+1}\mathcal{L}_{x+r} \nonumber \\
     &=  \Delta_{0,2,\ldots, 2j}\Delta_{1,3,\ldots, 2j+1}\mathcal{M}_{x},
\end{align}
and analogously, the same holds for the $\mathcal{K}_x$. Since the $\mathcal{K}_x$ are covariant, enforcing this condition on $\mathcal{K}_0$ is sufficient.
The relaxed average cost is thus given by
\begin{subequations}\label{eq:SDPUnboundedM}
\begin{alignat}{2}
    \tilde{C}_{\min}^{(d,N)}= &\inf   \quad &&  \tilde{M} \Tr{\left( \rho^T \otimes X^{(d,N)}\right)\mathcal{K}_0} \\
        & \suchthat && \tilde{M}\geq M \\
         & && \mathcal{K}_0\geq 0 \\
        & && \sum_{x=0}^{\tilde{M}-1} \tilde{U}_{\frac{2\pi x}{\tilde{M}}}^\dagger \mathcal{K}_0 \tilde{U}_{\frac{2\pi x}{\tilde{M}}}\in \Comb{\mathcal{H}_0,\ldots,\mathcal{H}_{2N} \to \mathcal{H}_1,\ldots,\mathcal{H}_{2N-1}}\label{eq:combCov} \\
        & && \Delta_{0,2,\ldots, 2j} \mathcal{K}_0= \Delta_{0,2,\ldots, 2j} \Delta_{1,3,\ldots, 2j+1}\mathcal{K}_0 \quad  \forall  j: 0\leq j \leq N-1. \label{eq:DephasingConstraint}
\end{alignat}
\end{subequations}
Lastly, we must show that we can always restrict the number of measurement outcomes to $M=(d-1)N+1$. Let $\tilde{M}\geq M$ be any fixed number of measurement outcomes and let $\tilde{\mathcal{K}}_0$ be such that it is feasible in the above optimization problem in Eq.~\eqref{eq:SDPUnboundedM} (for the fixed $\tilde{M}$). Recall that for any $0\leq a \leq d^N-1$, we have $0\leq H_d(a)\leq (d-1)N$. Expanding $\tilde{\mathcal{K}}_0$ in the incoherent basis, i.e.,
\begin{align}
    \tilde{\mathcal{K}}_0= \sum_{n,m=0}^{d_0-1} \sum_{a,b,c,d=0}^{d^N-1}  \left(\tilde{K}_0\right)_{nmabcd}  \ketbra{n}{m}_0\otimes\ketbra{a}{b}_\text{odd} \otimes \ketbra{c}{d}_\text{even},
\end{align}
where $d_0=\text{dim}(\rho)$. The constraint in Eq.~\eqref{eq:combCov} reads
\begin{align}\label{eq:subCombNormalizationIndex} 
   &\sum_{x=0}^{\tilde{M}-1}\sum_{n,m=0}^{d_0-1} \sum_{a,b,c,d=0}^{d^N-1}  \left(\tilde{K}_0\right)_{nmabcd} e^{-i\tfrac{2\pi x}{\tilde{M}} [H_d(a)-H_d(b)]}  \ketbra{n}{m}_0\otimes\ketbra{a}{b}_\text{odd} \otimes \ketbra{c}{d}_\text{even} \nonumber \\
    &= \tilde{M} \sum_{n,m=0}^{d_0-1}\sum_{\substack{a,b,c,d=0 \\ H_d(a)=H_d(b)}}^{d^N-1} \left(\tilde{K}_0\right)_{nmabcd}  \ketbra{n}{m}_0\otimes\ketbra{a}{b}_\text{odd} \otimes \ketbra{c}{d}_\text{even}  \nonumber \\
    & \in  \Comb{\mathcal{H}_0,\ldots, \mathcal{H}_{2N} \to \mathcal{H}_1,\ldots, \mathcal{H}_{2N-1}}.
\end{align} 
Defining 
\begin{align}
    \mathcal{K}_0:= \frac{\tilde{M}}{M}& \tilde{\mathcal{K}}_0, 
\end{align}
a comparison with Eq.~\eqref{eq:subCombNormalizationIndex} reveals that the pair $M,\mathcal{K}_0$ satisfies the comb constraint too. Obviously, it satisfies the dephasing constraints in Eq.~\eqref{eq:DephasingConstraint} and
\begin{align}
    \tilde{M}\Tr{\left(\rho^T \otimes X^{(d,N)}\right)\tilde{\mathcal{K}}_0}=M\Tr{\left(\rho^T \otimes X^{(d,N)}\right)\mathcal{K}_0}.
\end{align}
In summary, we have shown that
\begin{subequations}
\begin{alignat}{2}
    \tilde{C}_{\min}^{(d,N)}= &\min   \quad &&  M \Tr{\left( \rho^T \otimes X^{(d,N)}\right)\mathcal{K}_0} \\
        & \suchthat && \mathcal{K}_0\geq 0 \\
        & && \sum_{x=0}^{M-1} \tilde{U}_{\frac{2\pi x}{M}}^\dagger \mathcal{K}_0 \tilde{U}_{\frac{2\pi x}{M}}\in \Comb{\mathcal{H}_0,\ldots,\mathcal{H}_{2N} \to \mathcal{H}_1,\ldots,\mathcal{H}_{2N-1}} \\
        & && \Delta_{0,2,\ldots, 2j} \mathcal{K}_0= \Delta_{0,2,\ldots, 2j} \Delta_{1,3,\ldots, 2j+1}\mathcal{K}_0 \quad  \forall  j: 0\leq j \leq N-1,
\end{alignat}
\end{subequations}
which finishes the proof.
\end{proof}

Note that the above proof depends crucially on the assumption of a uniform prior distribution of the phase since otherwise the constructed collection of subcombs in Eqs.~\eqref{eq:SubcombM} and~\eqref{eq:subcombK} will not leave the average cost invariant. We also used the assumption that the cost function only depends on the difference between the true phase and the estimate. Although we showed in the previous Lemma that the number of measurement outcomes can be limited to $M= (d-1)N+1$, we emphasize that considering a larger number of measurement outcomes than $M$ leads to the same minimal average cost. This means that we can use a measurement acting on a system composed of multiple qubits, which can ease the practical implementation of an optimal protocol using standard implementations of the quantum Fourier transform, see Ref.~\cite{Nielsen2010}. 

We are now ready to prove Theorem~\ref{thm:OptAvgCostSM} for the special case of $N=1$, i.e., the case that we have only access to a single copy of the unitary $V_\phi^{(d)}$. This will then allow us to reduce the multiple-copy case to the single-copy case.

\begin{figure}[ht]
    \centering
    \scalebox{0.5}{\includegraphics[width=1\linewidth]{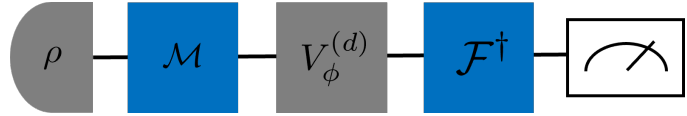}}
    \caption{Optimal phase estimation protocol for a single copy of the unitary $V_\phi^{(d)}$ of arbitrary dimension $d$. Here $\mathcal{F}$ denotes the quantum Fourier transform as defined in Eq.~\eqref{eq:QFT}, the measurement is in the incoherent basis, i.e., $M_j=\ketbra{j}{j}$, the associated phase estimates are $\hat{\phi}_j=\tfrac{2\pi j}{d}$, and $\mathcal{M} \in \MIO$ is an optimal pre-processing channel as detailed in Lemma~\ref{lem:ConstrainedOneUnitarySM}. }
    \label{fig:OneUnitaryOptimal}
\end{figure}

\begin{lemma}\label{lem:ConstrainedOneUnitarySM}
Let $Y^{(d)}$ be as in Eq.~\eqref{eq:CostMatrixDefinitionSM}, i.e.,
   \begin{align}
       Y^{(d)}= \sum_{n,m=0}^{d-1}\int \dPhi C(\phi) e^{i\phi(n-m)}\ketbra{n}{m}.
   \end{align}
Using a single copy of a unitary $V_\phi^{(d)}$, the optimal achievable average cost is given by
    \begin{align}\label{eq:OneUnitaryPrimmalSDP}
         C_{\min}^{(d,1)}(\rho)= \min_{\mathcal{M} \in \MIO} \Tr{Y^{(d)} \mathcal{M}(\rho)}.
    \end{align}
If $\mathcal{M}\in\MIO$ is an optimizer on the right-hand side, than $C_{\min}^{(d,1)}(\rho)$ is obtained by the protocol in Fig.~\ref{fig:OneUnitaryOptimal}. The measurement and phase estimates are universal for every cost function and there is no need for a memory channel.

\end{lemma}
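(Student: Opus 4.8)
The plan is to turn the infimum in Eq.~\eqref{eq:OneUnitaryPrimmalSDP} into a semidefinite program by first passing to the relaxation over MIO-compatible supermaps, identifying the resulting SDP (up to that relaxation) with $\min_{\mathcal{M}\in\MIO}\Tr{Y^{(d)}\mathcal{M}(\rho)}$, and then checking that the protocol of Fig.~\ref{fig:OneUnitaryOptimal} saturates the bound. I emphasize that a dilation-based argument in the spirit of Ref.~\cite{vanDam2007} is not available here, since MIO channels admit no free dilation, so the SDP/duality route is essential.

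First I would invoke Lemma~\ref{lem:CovariantCombs} with $N=1$, hence $M=d$: the relaxed minimal average cost $\tilde{C}_{\min}^{(d,1)}(\rho)$ equals the minimum of $d\,\Tr{(\rho^{T}\otimes X^{(d,1)})\mathcal{K}_{0}}$ over a single covariant subcomb $\mathcal{K}_{0}\geq 0$ on $\mathcal{H}_{0}\otimes\mathcal{H}_{1}\otimes\mathcal{H}_{2}$ subject to $\sum_{x=0}^{d-1}\tilde{U}_{2\pi x/d}^{\dagger}\mathcal{K}_{0}\tilde{U}_{2\pi x/d}\in\Comb{\mathcal{H}_{0},\mathcal{H}_{2}\to\mathcal{H}_{1}}$ and $\Delta_{0}\mathcal{K}_{0}=\Delta_{0}\Delta_{1}\mathcal{K}_{0}$. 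The simplification special to a single copy is that the covariance twirl collapses to a total dephasing: since $H_{d}(n)=n$ for $N=1$ and $\sum_{x=0}^{d-1}e^{-2\pi i x(n-m)/d}=d\,\delta_{nm}$ for $n,m\in\{0,\dots,d-1\}$, one has $\tfrac{1}{d}\sum_{x}\tilde{U}_{2\pi x/d}^{\dagger}(\cdot)\tilde{U}_{2\pi x/d}=\Delta_{1}$. Thus the comb constraint becomes the linear condition $d\,\Delta_{1}\mathcal{K}_{0}=\id_{2}\otimes J_{\mathcal{M}_{0}}$ for some channel $\mathcal{M}_{0}\colon 0\to 1$ with $\dim\mathcal{H}_{1}=d$ (equivalently $\partTr{1,2}{\mathcal{K}_{0}}=\id_{0}$ together with $d\,\Delta_{1}\mathcal{K}_{0}=\id_{2}\otimes\Delta_{1}\partTr{2}{\mathcal{K}_{0}}$), and together with $\Delta_{0}\mathcal{K}_{0}=\Delta_{0}\Delta_{1}\mathcal{K}_{0}$ these are the only remaining constraints. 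Moreover $X^{(d,1)}=\int\dPhi\,C(\phi)\,{J_{\phi}^{(d,1)}}^{T}$ is supported on $\mathrm{span}\{\ket{nn}_{1,2}\}$ and, under $\ket{n}\mapsto\ket{nn}$, agrees with $Y^{(d)}$ of Eq.~\eqref{eq:CostMatrixDefinitionSM} up to a transpose that must be tracked carefully; indeed $X^{(d,1)}=W\,\overline{Y^{(d)}}\,W^{\dagger}$ for the isometry $W\ket{n}=\ket{nn}_{1,2}$.

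For the lower bound I would pass to the dual. The right-hand side $\min_{\mathcal{M}\in\MIO}\Tr{Y^{(d)}\mathcal{M}(\rho)}$ is itself an SDP — minimize $\Tr{(\rho^{T}\otimes Y^{(d)})J_{\mathcal{M}}}$ over $J_{\mathcal{M}}\geq 0$ with $\partTr{1}{J_{\mathcal{M}}}=\id_{0}$ and $\Delta_{0}J_{\mathcal{M}}=\Delta_{0}\Delta_{1}J_{\mathcal{M}}$ — for which Slater's condition holds (the Choi state of the ``prepare maximally mixed'' channel is strictly feasible and MIO), so strong duality applies. From a dual-feasible point of that SDP I would build a dual-feasible point of the covariant SDP above, with the same objective value, using the reduction of the comb constraint to $d\,\Delta_{1}\mathcal{K}_{0}=\id_{2}\otimes J_{\mathcal{M}_{0}}$ and the identity $X^{(d,1)}=W\,\overline{Y^{(d)}}\,W^{\dagger}$; weak duality then yields $\tilde{C}_{\min}^{(d,1)}(\rho)\geq\min_{\mathcal{M}\in\MIO}\Tr{Y^{(d)}\mathcal{M}(\rho)}$. (A partly primal reformulation is possible: from feasible $\mathcal{K}_{0}$ the operator $\partTr{2}{\mathcal{K}_{0}}$ is already the Choi state of an MIO channel — positivity, trace preservation via $\partTr{1,2}{\mathcal{K}_{0}}=\id_{0}$, and the MIO property via $\partTr{2}{\Delta_{0}\mathcal{K}_{0}}=\partTr{2}{\Delta_{0}\Delta_{1}\mathcal{K}_{0}}$ — but matching its objective to that of $\mathcal{K}_{0}$ needs the full constraint set, which is cleanest to exploit dually.) Since every MIO network is MIO-compatible, $C_{\min}^{(d,1)}(\rho)\geq\tilde{C}_{\min}^{(d,1)}(\rho)$.

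To close the argument I would compute the average cost of the protocol in Fig.~\ref{fig:OneUnitaryOptimal} directly. With $\sigma=\mathcal{M}(\rho)$ the outcome probability is $p(j\mid\phi)=\bra{j}\mathcal{F}\,V_{\phi}^{(d)}\sigma (V_{\phi}^{(d)})^{\dagger}\mathcal{F}^{\dagger}\ket{j}$ and the estimates are $\hat{\phi}_{j}=2\pi j/d$; the shift $\phi\mapsto\phi-2\pi j/d$ combined with the geometric sum $\sum_{j=0}^{d-1}e^{2\pi i\ell j/d}=d\,[\ell\equiv 0]$ collapses the $j$-sum and gives $\sum_{j}\int\dPhi\,C(\phi-\hat{\phi}_{j})\,p(j\mid\phi)=\Tr{Y^{(d)}\mathcal{M}(\rho)}$. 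This is a genuine, non-relaxed MIO protocol: the quantum Fourier transform is not an MIO channel, but it is absorbed into the terminal measurement, whose outcomes are classical and hence never create coherence, so the network consists only of the MIO preprocessing channel $\mathcal{M}$. Therefore $C_{\min}^{(d,1)}(\rho)\leq\min_{\mathcal{M}\in\MIO}\Tr{Y^{(d)}\mathcal{M}(\rho)}$, and chaining the inequalities yields $C_{\min}^{(d,1)}(\rho)=\tilde{C}_{\min}^{(d,1)}(\rho)=\min_{\mathcal{M}\in\MIO}\Tr{Y^{(d)}\mathcal{M}(\rho)}$, with the optimum attained by the displayed protocol for any optimal $\mathcal{M}$. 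In this computation the Fourier measurement and the estimates $\hat{\phi}_{j}=2\pi j/d$ were fixed independently of $C$, only $\mathcal{M}$ (with input and output dimensions set by $\rho$ and $d$) carries the cost-function dependence, and no memory wire occurs, which gives the universality and no-memory claims. The step I expect to be the main obstacle is the lower bound: producing the correct dual witness from the right-hand-side optimizer and carrying the transpose/conjugation bookkeeping from the Choi representation through to an exact match of objective values.
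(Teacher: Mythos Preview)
Your overall architecture matches the paper's: invoke Lemma~\ref{lem:CovariantCombs} with $N=1$, collapse the twirl to $d\,\Delta_{1}$, derive a lower bound on $\tilde{C}_{\min}^{(d,1)}(\rho)$, and close with the explicit Fourier-measurement protocol. The achievability computation you sketch is exactly what the paper does.

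The substantive difference is in the lower-bound step. You propose to go through the dual (building a dual witness for the covariant SDP from an optimal dual point of the MIO-channel SDP), and you explicitly flag this as the main obstacle. The paper instead uses a direct \emph{primal} reduction that avoids duality entirely: from any feasible subcomb $\mathcal{K}$ it sets $T=\sum_{i}\ket{ii}_{1,2}\bra{i}_{1}$ and $\mathcal{M}:=d\,(\id_{0}\otimes T^{\dagger})\,\mathcal{K}\,(\id_{0}\otimes T)$. Because $X^{(d,1)}$ is supported precisely on $\mathrm{range}(T)$ (the diagonal $\ket{nn}_{1,2}$ subspace), this compression preserves the objective exactly, $d\,\Tr{(\rho^{T}\otimes X^{(d,1)})\mathcal{K}}=\Tr{(\rho^{T}\otimes Y^{(d)})\mathcal{M}}$, and one checks in a few lines that $\mathcal{M}$ is the Choi state of a MIO channel using the constraints $d\,\Delta_{1}\mathcal{K}=Z_{01}\otimes\id_{2}$ and $\Delta_{0}\mathcal{K}=\Delta_{0,1}\mathcal{K}$. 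This immediately gives $\tilde{C}_{\min}^{(d,1)}(\rho)\geq\min_{\mathcal{M}\in\MIO}\Tr{Y^{(d)}\mathcal{M}(\rho)}$ without any dual bookkeeping.

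Your ``partly primal'' remark via $\partTr{2}{\mathcal{K}_{0}}$ was in the right direction but with the wrong map: the partial trace does not match the objective because $X^{(d,1)}$ is not of the form $A_{1}\otimes\id_{2}$, whereas the compression $T^{\dagger}(\cdot)T$ onto the diagonal subspace does match it exactly. So the obstacle you anticipated disappears once one uses $T$ rather than $\partTr{2}{}$. Your dual route would also work (indeed the paper uses a dual construction later, in Theorem~\ref{thm:OptimalCostN-LowerBound}, to handle general $N$), but for the single-copy case it is considerably more laborious than the two-line primal compression.
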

\begin{proof}
We start from the SDP for the relaxed average cost in Eq.~\eqref{eq:MinimalCostCovariant}. For a single copy of a unitary ($N=1$ ) we can make use of the identity 
\begin{align}
    \sum_{x=0}^{d-1} \tilde{U}_{\frac{2\pi x}{d}} Z \tilde{U}_{\frac{2\pi x}{d}}^\dagger&= d\,\Delta_1 Z,
\end{align}
which follows from a straightforward calculation from Eq.~\eqref{eq:Utilde}. Therefore, the optimization problem for the relaxed minimal average cost is given by
\begin{subequations}\label{eq:SubCombSDPOneUnitary}
\begin{alignat}{2}
    \tilde{C}_{\min}^{(d,1)}(\rho)= &\min   \quad && d  \Tr{\left( \rho^T \otimes X^{(d,1)}\right) \mathcal{K}} \\
        & \suchthat && \mathcal{K}\geq 0 \\
        & && d\,\Delta_1 \mathcal{K}\in \Comb{\mathcal{H}_0,\mathcal{H}_2 \to \mathcal{H}_1}\\
        & && \Delta_{0} \mathcal{K}= \Delta_{0,1} \mathcal{K}.
\end{alignat}
\end{subequations}
Comparing this with Eq.~\eqref{eq:MinimalCostCovariant}, we remark that we denote the subcomb $\mathcal{K}_0$ as $\mathcal{K}$ here to avoid confusion with the system enumeration. We now relax the problem in Eq.~\eqref{eq:SubCombSDPOneUnitary} even further. For each feasible $\mathcal{K}$ define 
\begin{align}
    &T:=\sum_{i=0}^{d-1} \ket{ii}_{1,2} \bra{i}_{1}, \\
    &\mathcal{M}:=d (\id^0\otimes T^\dagger) \mathcal{K}(\id^0\otimes T).
\end{align}
This allows us to rewrite the objective function as
\begin{align}
    d  \Tr{\left( \rho^T \otimes X^{(d,1)}\right) \mathcal{K}} = \Tr{\left( \rho^T \otimes Y^{(d)}\right) \mathcal{M}}.
\end{align}
For any feasible $\mathcal{K}$, i.e., any $\mathcal{K}\geq 0$ with $\Delta_0 \mathcal{K}=\Delta_{0,1} \mathcal{K}$ and
$d \Delta_1\mathcal{K}=Z_{01}\otimes \id_2$ for some $Z\geq 0$ with $\partTr{1}{Z_{01}}=\id_0$, we obviously have $\mathcal{M}\geq 0$. Moreover, 
\begin{align}
    \partTr{1}{\mathcal{M}}&= d \sum_{i} (\id_0 \otimes \bra{ii}_{1,2}) \mathcal{K}(\id_0 \otimes \ket{ii}_{1,2}) \nonumber \\
    &= d \sum_{i} (\id_0 \otimes \bra{ii}_{1,2}) \Delta_{1}(\mathcal{K})(\id_0 \otimes \ket{ii}_{1,2}) \nonumber \\
    &=\partTr{1}{Z} =\id_0,
\end{align}
and 
\begin{align}
    \Delta_0 \mathcal{M} &= d \sum_{i,j} \ketbra{i}{j}_{1} \otimes (\id_0 \otimes \bra{ii}_{1,2}) \Delta_0(\mathcal{K})(\id_0 \otimes \ket{jj}_{1,2}) \nonumber \\
    &=d \sum_{i,j} \ketbra{i}{j}_{1} \otimes(\id_0 \otimes \bra{ii}_{1,2}) \Delta_{01}(\mathcal{K})(\id_0 \otimes \ket{jj}_{1,2}) \nonumber \\
    &=d \sum_{i} \ketbra{i}{i}_{1} \otimes(\id_0 \otimes \bra{ii}_{1,2}) \Delta_{01}(\mathcal{K})(\id_0 \otimes \ket{ii}_{1,2}) \nonumber \\
    &=  \Delta_{0,1} \mathcal{M}.
\end{align}
Thus, $\mathcal{M}$ is the Choi state of a MIO channel. Recall that the optimization problem in Eq.~\eqref{eq:MinimalCostCovariant} is itself a lower bound on the minimal average cost, and thus
\begin{align}\label{eq:MIOChannel}
     C_{\min}^{(d,1)}(\rho)\geq \tilde{C}_{\min}^{(d,1)}(\rho) \geq \min_{\mathcal{M} \in \MIO} \Tr{Y^{(d)} \mathcal{M}(\rho)}.
\end{align}
Next, we will show that this bound is attainable. Let $\mathcal{M}^\star$ be an optimal point of
the optimization problem on the right-hand side of Eq.~\eqref{eq:MIOChannel}. Using the protocol depicted in Fig.~\ref{fig:OneUnitaryOptimal} results in an outcome distribution
\begin{align}
    p^{(d,1)}(x|\phi,\rho)=\bra{x} \, \mathcal{F}^\dagger V_\phi^{(d)}\mathcal{M}^\star(\rho) {V_\phi^{(d)}}^\dagger\mathcal{F}\ket{x}, 
\end{align}
where $\mathcal{F}$ denotes the quantum Fourier transform on a system of dimension $d$, i.e.,  
\begin{align}\label{eq:QFT}
    \mathcal{F} \ket{n}= \frac{1}{\sqrt{d}} \sum_{k=0}^{d-1} e^{i 2\pi \tfrac{ kn}{d}} \ket{k}.
\end{align}
If we write $\mathcal{M}^\star(\rho)=\sum_{n,m} M_{nm} \ketbra{n}{m}$, then
\begin{align}
    p^{(d,1)}(x|\phi,\rho)&= \frac{1}{d} \sum_{n,m=0}^{d-1} M_{nm} e^{i(n-m)(\phi-\tfrac{2\pi x}{d})},
\end{align}
and
\begin{align}\label{eq:explitProtocol}
    C_{\min}^{(d,1)}(\rho) &\leq \sum_{x=0}^{d-1} \int \dPhi C(\phi-\tfrac{2\pi x}{d}) p^{(d,1)}(x|\phi,\rho) \nonumber \\
    &=d \int \dPhi' C(\phi^\prime) \frac{1}{d}\sum_{n,m} M_{nm} e^{i\phi (n-m) \phi^\prime} \nonumber \\
    &= \sum_{n,m} M_{nm} \bra{m} Y^{(d)}\ket{n} \nonumber \\
    &= \Tr{Y^{(d)} \mathcal{M}^\star(\rho)}.
\end{align}
Together with Eq.~\eqref{eq:MIOChannel}, this shows that $ C_{\min}^{(d,1)}(\rho) = \min_{\mathcal{M} \in \MIO} \Tr{Y^{(d)} \mathcal{M}(\rho)}$.
\end{proof}

Next, we use the previous Lemma to obtain a lower bound on the minimal achievable average cost with arbitrary $N$.

\begin{thm}\label{thm:OptimalCostN-LowerBound}
For any pair $(d,N)$, let $M=(d-1)N+1$ and let $Y^{(M)}$ be defined as in Eq.~\eqref{eq:CostMatrixDefinitionSM}. 
Then, the average cost is bounded by
    \begin{align}
        C_{\min}^{(d,N)}(\rho)\geq C_{\min}^{(M,1)}(\rho)= \min_{\mathcal{M} \in \MIO} \Tr{Y^{(M)} \mathcal{M}(\rho)}.
    \end{align}
\end{thm}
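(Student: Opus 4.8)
The plan is to reduce the $N$-copy problem to the single-copy one already solved in Lemma~\ref{lem:ConstrainedOneUnitarySM}, chaining the two relaxations set up above. First, the relaxation from MIO networks to MIO-compatible supermaps gives $C_{\min}^{(d,N)}(\rho)\ge\tilde C_{\min}^{(d,N)}(\rho)$, and by Lemma~\ref{lem:CovariantCombs} the latter equals the semidefinite program Eq.~\eqref{eq:MinimalCostCovariant}: minimise $M\,\Tr{(\rho^T\otimes X^{(d,N)})\mathcal K_0}$ over covariant subcombs $\mathcal K_0$ on systems $0,1,\dots,2N$ obeying the comb-normalisation (after the covariant symmetrisation $\sum_x\tilde U_{2\pi x/M}^\dagger(\cdot)\tilde U_{2\pi x/M}$) and the nested dephasing conditions for $0\le j\le N-1$. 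By Lemma~\ref{lem:ConstrainedOneUnitarySM} applied with $d\mapsto M$ and $N=1$, one has $C_{\min}^{(M,1)}(\rho)=\tilde C_{\min}^{(M,1)}(\rho)=\min_{\mathcal M\in\MIO}\Tr{Y^{(M)}\mathcal M(\rho)}$, and this is the value of the single-copy relaxed SDP Eq.~\eqref{eq:SubCombSDPOneUnitary}. It therefore suffices to exhibit, for every $\mathcal K_0$ feasible in Eq.~\eqref{eq:MinimalCostCovariant}, a feasible point $\mathcal K$ of Eq.~\eqref{eq:SubCombSDPOneUnitary} with objective no larger; minimising over $\mathcal K_0$ then gives $\tilde C_{\min}^{(d,N)}(\rho)\ge\tilde C_{\min}^{(M,1)}(\rho)=C_{\min}^{(M,1)}(\rho)$, which is the assertion.

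The map $\mathcal K_0\mapsto\mathcal K$ is built from the digit-sum structure of the cost operator. The relevant facts are that $X^{(d,N)}=\int\dPhi\,C(\phi)\,(J_\phi^{(d,N)})^T$ is supported on the maximally correlated subspace $\mathrm{span}\{\ket{nn}_{\odd,\even}\}$ of the $2N$ unitary-slot legs, that its matrix entries depend on the computational labels only through the digit sums $H_d(n)\in\{0,\dots,M-1\}$, and that the slot unitary $\bigotimes_jV_\phi^{(d)}$ has on that subspace the same spectrum $\{e^{i\phi s}\}_{s=0}^{M-1}$ as $V_\phi^{(M)}$, with eigenvector $\ket{\tilde v_s}=m_s^{-1/2}\!\sum_{n:H_d(n)=s}\ket{nn}_{\odd,\even}$ of eigenvalue $e^{i\phi s}$ ($m_s$ the multiplicity of $s$). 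One is thus led to the coarse-graining $\mathcal K=(\id_0\otimes G)\,\mathcal K_0\,(\id_0\otimes G^\dagger)$ with $G=\sum_{s=0}^{M-1}\ket{ss}_{1,2}\big(\sum_{n:H_d(n)=s}\bra{nn}_{\odd,\even}\big)$, which merges each $V_\phi^{(d)}$'s in/out leg pair and identifies digit sums with the two $M$-dimensional legs (relabelled $1,2$) of the single-copy problem. A short computation using the above facts gives $G^\dagger X^{(M,1)}G=X^{(d,N)}$, so $\Tr{(\rho^T\otimes X^{(M,1)})\mathcal K}=\Tr{(\rho^T\otimes X^{(d,N)})\mathcal K_0}$ and the objective is exactly preserved.

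What then has to be checked is that $\mathcal K$ is feasible for Eq.~\eqref{eq:SubCombSDPOneUnitary}: positivity is immediate; the comb-normalisation $M\Delta_1\mathcal K\in\Comb{\mathcal H_0,\mathcal H_2\to\mathcal H_1}$ should follow from that of $\mathcal K_0$ via the identity $\sum_{x=0}^{M-1}\tilde U_{2\pi x/M}^\dagger(\cdot)\tilde U_{2\pi x/M}=M\sum_{s}(\Pi_s)_{\odd}(\cdot)(\Pi_s)_{\odd}$ (digit-sum block dephasing of the odd legs), which $G$ turns into total dephasing on the compressed input leg; and the MIO condition $\Delta_0\mathcal K=\Delta_{0,1}\mathcal K$ should follow from the nested dephasing conditions of $\mathcal K_0$ together with the fact that each $V_\phi^{(d)}$ is an incoherent channel, so that incoherence of the input $0$ propagates through the whole network. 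Granting this, Lemma~\ref{lem:ConstrainedOneUnitarySM} turns $\mathcal K$ into an MIO channel $\mathcal M$ with $\Tr{Y^{(M)}\mathcal M(\rho)}=M\Tr{(\rho^T\otimes X^{(d,N)})\mathcal K_0}$, which closes the argument; the intermediate dephasing constraints ($1\le j\le N-1$) of Eq.~\eqref{eq:MinimalCostCovariant} are only consumed at this point and may alternatively be dropped, enlarging the feasible set without affecting the lower bound.

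The main obstacle is precisely this feasibility check. Verifying that the single coarse-graining $G$ simultaneously reproduces the cost operator, carries the covariant comb-normalisation of the $2N$-leg subcomb onto the single-copy one, and transports the nested dephasing conditions onto $\Delta_0\mathcal K=\Delta_{0,1}\mathcal K$ is delicate: the digit-sum map forgets which copy contributed which digit, so partial dephasings on individual legs do not transform in an obvious way, and one must use the full combination of the comb structure, the covariance, and all nested dephasing constraints, while tracking the digit-sum multiplicities $m_s$ (which, unlike the single-copy case where they are all one, cannot be removed by an isometry). Once this bookkeeping is in place, nothing further is needed beyond Lemmas~\ref{lem:CovariantCombs} and~\ref{lem:ConstrainedOneUnitarySM} and the relaxation $C_{\min}^{(d,N)}\ge\tilde C_{\min}^{(d,N)}$.
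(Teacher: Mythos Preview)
Your primal strategy is a genuine alternative to the paper's proof, which works entirely on the \emph{dual} side: the paper takes an optimal dual pair $(A^\star,B^\star)$ of the single-copy SDP~\eqref{eq:BigUnitaryDualY} and lifts it, via the digit-sum embedding $WT$ and a tower $B_1,\dots,B_{N+1}$, to a feasible dual point of the multi-copy SDP~\eqref{eq:DualSDPCost}. Both routes exploit the same structural fact---that $X^{(d,N)}$ depends on labels only through digit sums---but from opposite ends of the duality.

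However, your proposal has a concrete gap at the intermediate step. The compressed object $\mathcal K=(\id_0\otimes G)\mathcal K_0(\id_0\otimes G^\dagger)$ is \emph{never} feasible for the single-copy SDP~\eqref{eq:SubCombSDPOneUnitary}. By construction, $G$ lands in the maximally-correlated subspace $\mathrm{span}\{\ket{ss}_{1,2}\}$, so $M\Delta_1\mathcal K=\sum_s W_s\otimes\ketbra{s}{s}_1\otimes\ketbra{s}{s}_2$ for some operators $W_s$ on system~$0$. For this to lie in $\Comb{\mathcal H_0,\mathcal H_2\to\mathcal H_1}$, i.e., to factor as $Z_{01}\otimes\id_2$, the $\ketbra{t}{t}_2$-block would have to be the same $Z_{01}$ for every $t$; but here it equals $W_t\otimes\ketbra{t}{t}_1$, which is supported on a single rank-one projector in system~$1$ and varies with~$t$. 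This is exactly the multiplicity obstruction you flagged---$G$ is not an isometry---and it cannot be repaired by bookkeeping: the image of $G$ is simply too small to carry a tensor factor $\id_2$.

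The fix is to bypass $\mathcal K$ and pass directly to $\mathcal M:=M(\id_0\otimes R)\mathcal K_0(\id_0\otimes R^\dagger)$ with $R=T^\dagger G=\sum_s\ket{s}\sum_{n:H_d(n)=s}\bra{nn}_{\odd,\even}$, and check that $\mathcal M$ is a MIO Choi state. Trace preservation $\partTr{1}{\mathcal M}=\id_0$ then follows by recursively unwinding the comb tower of $M\mathcal P(\mathcal K_0)$: the factor $\id_{2N}$ forces $n_N=n'_N$, summing over it yields $\partTr{2N-1}{\mathcal C^{(N-1)}}=\id_{2N-2}\otimes\mathcal C^{(N-2)}$, whose $\id_{2N-2}$ forces $n_{N-1}=n'_{N-1}$, and so on down to $\partTr{1}{\mathcal C^{(0)}}=\id_0$. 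The MIO condition $\Delta_0\mathcal M=\Delta_{0,1}\mathcal M$ follows from the nested dephasing constraints: on the diagonal $n_\odd=n_\even$ (all that $R$ sees), $\Delta_{2k}$ coincides with $\Delta_{2k-1}$, so the conditions $\Delta_{0,2,\dots,2j}=\Delta_{0,1,\dots,2j+1}$ chain to give $\Delta_0=\Delta_{0,\odd}$ on the diagonal, which under $R$ becomes $\Delta_0=\Delta_{0,1}$. With this repair your primal argument goes through and gives a slightly more direct proof than the paper's dual construction.

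One further correction: your closing remark that the intermediate dephasing constraints ($1\le j\le N-1$) ``may alternatively be dropped'' is wrong. Dropping them enlarges the feasible set of the minimisation, hence \emph{lowers} $\tilde C_{\min}^{(d,N)}$; you would then need to show this smaller quantity still dominates $C_{\min}^{(M,1)}$, but without those constraints the resulting $\mathcal M$ need not be MIO. All $N$ dephasing conditions are genuinely consumed in the chain above.
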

\begin{proof}
For all $j\in\{0,\ldots, N-1\}$, let us define the maps 
\begin{align}\label{eq:DMaps}
    \mathcal{D}_j(\cdot)=\left(\Delta_{0,\ldots, 2j}-\Delta_{0,\ldots, 2j}\Delta_{1,\ldots, 2j+1}\right)(\cdot)
\end{align} 
The dual of the semidefinite program describing the relaxed average cost in Eq.~\eqref{eq:MinimalCostCovariant} is given by
\begin{subequations}\label{eq:DualSDPCost}
\begin{alignat}{2}
    \tilde{C}_{\min}^{(d,N)}(\rho)= &\max   \quad &&   \Tr{B_1} \\
        & \suchthat && B_i=B_i^\dagger \quad \forall i \\
        & && A_i=A_i^\dagger \quad \forall i \\
        & && B_j\otimes \id^{(2j-1)}-\partTr{2j}{B_{j+1}}=0\ \forall j=1,\ldots, N \label{eq:BConstraints}\\
        & && M\, \rho^T\otimes X^{(d,N)}-\sum_{x=0}^{M-1} \tilde{U}_{\frac{2\pi x}{M}} B_{N+1} \tilde{U}_{\frac{2\pi x}{M}}^\dagger-\sum_{j=0}^{N-1} \mathcal{D}_j(A_j)\geq 0.\label{eq:InqConstraint}
\end{alignat}
\end{subequations}
We start from Lemma~\ref{lem:ConstrainedOneUnitarySM} with a single unitary $V_\phi^{(M)}$ (note the dimension $M$ here). In the following, we will use the Roman numbers $\I,\II$ to denote the input and output systems of $V_\phi^{(M)}$ and numbers $1,2,\ldots, 2N$ to denote input and output systems of $V_\phi^{(d)}$ to avoid ambiguities. According to Lemma~\ref{lem:ConstrainedOneUnitarySM}, the dual of the (relaxed) average cost for a single unitary $V_\phi^{(M)}$ is given by
\begin{subequations}\label{eq:BigUnitaryDualY}
 \begin{alignat}{2}
    C_{\min}^{(M,1)}(\rho)= \tilde{C}_{\min}^{(M,1)}(\rho)=\min_{\mathcal{M} \in \MIO} \Tr{Y^{(M)}\mathcal{M}(\rho)} =& \max_{A,B}\quad  && \Tr{B} \\
        & \suchthat && A^\dagger=A\\
        & && B^\dagger=B \\
        & && \rho^T\otimes  Y^{(M)} - B\otimes \id - \mathcal{D}_{0}(A)\ge0,
\end{alignat}    
\end{subequations}
where the dual variables  $A \in \mathcal{L}(\mathcal{H}_0 \otimes \mathcal{H}_I)$, $B \in \mathcal{L}(\mathcal{H}_0)$, and $\tilde{\mathcal{D}}_{0}(A)=(\Delta_0-\Delta_{0,{\I}}) A$. Let $(A^\star, B^\star)$ denote an optimal feasible point of this problem. In the following, we will use this optimal feasible point to construct a feasible point of the optimization problem in Eq.~\eqref{eq:DualSDPCost}, and thus a lower bound on $C_{\min}^{(d,N)}(\rho)$

Recall that for $N$ qudit systems with $\ket{n}=\ket{n_1,\ldots, n_N}$ the digit sum is $H_d(n)=n_1+\ldots +n_N$. Each $n_k$ can take values $0\leq n_k \leq d-1$ and thus $0 \leq H_d(n)\leq (d-1)N$. This implies $H_d(n)$ takes at most $M=N(d-1)+1$ different values. Let $s_N(i)$ denote the set of all such digit-strings of length $N$ that have a digit sum (as defined around Eq.~\eqref{eq:UnitaryExpandedForm}) of $i$, i.e., for $\Vec{a}\in s_N(i)$ we have $H_d(\Vec{a})=i$. Define
\begin{align}
    &\Pi:=\sum_i \ketbra{ii}{ii}_{\I,\II},\label{eq:DefinitionPI} \\
    & W:=\sum_{i=0}^{M-1} \sum_{\Vec{j}\in s_N(i)}\ket{\Vec{j}}_{\odd} \ket{\Vec{j}}_{\even}\bra{ii}_{\I,\II} . \label{eq:DefinitionW}
\end{align}
To explicitly construct a feasible solution to Eq.~\eqref{eq:DualSDPCost} we define the following matrices. Let
\begin{align}\label{eq:BN+1}
    B_{N+1}=B^\star \otimes W\Pi W^\dagger=B^\star \otimes \sum_{i=0}^{M-1} \sum_{\Vec{j},\Vec{k}\in s_{N}(i)}\ket{\Vec{j}}_{\odd} \ket{\Vec{j}}_{\even} \bra{\Vec{k}}_{\odd} \bra{\Vec{k}}_{\even},
\end{align}
and for $2\leq l \leq N$ let
\begin{align}\label{eq:Bl}
    B_{l}= B^\star \otimes \sum_{i=0}^{(d-1)(l-1)} \sum_{\Vec{j},\Vec{k}\in s_{l-1}(i)}\ket{\Vec{j}}_{1,...,2l-3} \ket{\Vec{j}}_{2,...,2l-2} \bra{\Vec{k}}_{1,...,2l-3} \bra{\Vec{k}}_{2,...,2l-2}.
\end{align}
Moreover, for $T=\sum_{i=0}^{M-1}\ket{ii}_{\I,\II}\bra{i}_{\I}$, let 
\begin{align}\label{eq:DefTildeA}
    \tilde{A}:=M (\id^{(0)}\otimes WT) A^\star (\id^{(0)}\otimes WT).
\end{align}
We now proceed to show that the tuples $(B_1=B^\star,B_2, \ldots, B_{N+1})$ and $(\tilde{A},\ldots, \tilde{A})$, i.e., all $A_j=\tilde{A}$, are a feasible point for the optimization problem in Eq.~\eqref{eq:DualSDPCost}. We start by showing that $B_j\otimes \id^{(2j-1)}-\partTr{2j}{B_{j+1}}=0\ \forall j=1,\ldots, N$.  Let us denote a string of $l$ digits as $\vec{j}=j_1,\ldots,j_{l}$, then recall that $\ket{\Vec{j}}_{1,...,2l-1} \ket{\Vec{j}}_{2,...,2l}=\ket{j_1,\ldots,j_{l}}_{1,...,2l-1} \ket{j_1,\ldots,j_{l}}_{2,...,2l}$. Let $\vec{j}^\prime=j_1,\ldots,j_{l-1}$ denote a string of length $l-1$ that shares all its digits with the first $l-1$ digits of $\vec{j}$. If two strings $\vec{j},\vec{k}\in s_{l}(i)$ (of fixed length $l$) share the same digit sum, and their last digit coincides, i.e., $j_l=k_l$, then their truncated strings $\vec{j}^\prime, \vec{k}^\prime$ have the same digit sum too, and $\vec{j}^\prime, \vec{k}^\prime \in s_{l-1}(i-j_l)$  whenever $0\leq i-j_l\leq (d-1)(l-1)$. Moreover, for any $i<0$ and $i> (d-1)(l-1) $, $S_{l-1}(i)= \emptyset$. We can therefore write
\begin{align}\label{eq:BlGuess}
    \partTr{2l}{B_{l+1}}&= B^\star \otimes \sum_{i=0}^{(d-1)l} \sum_{\Vec{j},\Vec{k}\in s_{l}(i)}\ket{\Vec{j}^\prime}_{1,...,2l-3} \ket{\Vec{j}^\prime}_{2,...,2l-2} \bra{\Vec{k}^\prime }_{1,...,2l-3}  \bra{\Vec{k}^\prime}_{2,...,2l-2} \otimes \ketbra{j_{l}}{j_l}_{2l-1} \,\delta_{j_l,k_l} \nonumber \\
    &= B^\star \otimes \sum_{i=0}^{(d-1)l} \sum_{j_l=0}^{d-1} \sum_{\vec{j}^\prime,\vec{k}^\prime \in s_{l-1}(i-j_l)} \ket{\Vec{j}^\prime}_{1,...,2l-3} \ket{\Vec{j}^\prime}_{2,...,2l-2} \bra{\Vec{k}^\prime }_{1,...,2l-3}  \bra{\Vec{k}^\prime}_{2,...,2l-2} \otimes \ketbra{j_{l}}{j_l}_{2l-1} \nonumber \\
     &= B^\star \otimes\sum_{j_l=0}^{d-1} \ketbra{j_{l}}{j_l}_{2l-1} \otimes \sum_{i=-j_l}^{(d-1)l-j_l} \sum_{\vec{j}^\prime,\vec{k}^\prime \in s_{l-1}(i)} \ket{\Vec{j}^\prime}_{1,...,2l-3} \ket{\Vec{j}^\prime}_{2,...,2l-2} \bra{\Vec{k}^\prime }_{1,...,2l-3}  \bra{\Vec{k}^\prime}_{2,...,2l-2}  \nonumber \\
     &= B^\star \otimes\sum_{j_l=0}^{d-1}  \ketbra{j_{l}}{j_l}_{2l-1} \otimes \sum_{i=0}^{(d-1)(l-1)} \sum_{\vec{j}^\prime,\vec{k}^\prime \in s_{l-1}(i)} \ket{\Vec{j}^\prime}_{1,...,2l-3} \ket{\Vec{j}^\prime}_{2,...,2l-2} \bra{\Vec{k}^\prime }_{1,...,2l-3}  \bra{\Vec{k}^\prime}_{2,...,2l-2} \nonumber \\
     &= \id_{2l-1} \otimes B_l,
\end{align}
where in the second-to-last line we used that $s_{l-1}(i)$ is an empty set for some of the summands. In addition,
\begin{align}
    \partTr{2}{B_2}&= \partTr{2}{B^\star \otimes \sum_{j,k=0}^{d-1} \ketbra{jj}{kk}_{1,2}} =B^\star \otimes \id_1.
\end{align}
Therefore, the tuple $(B_1=B^\star,B_2, \ldots, B_{N+1})$ satisfies the constraint $B_j\otimes \id_{2j-1}-\partTr{2j}{B_{j+1}}=0\ \forall j=1,\ldots, N$. Next, we show that the construction satisfies the constraint 
\begin{align}
    M\, \rho^T\otimes X^{(d,N)}-\sum_{x=0}^{M-1} \tilde{U}_{\frac{2\pi x}{M}} B_{N+1} \tilde{U}_{\frac{2\pi x}{M}}^\dagger-\sum_{j=0}^{N-1} \mathcal{D}_j(A_j)\geq 0.
\end{align}
To this end, first notice that in Eq.~\eqref{eq:BigUnitaryDualY}, $A$ only appears in the term $\tilde{\mathcal{D}}_0(A) $. Without loss of generality, we can thus assume that $A^\star= \tfrac{1}{M} \sum_i \ketbra{i}{i}_0 \otimes A_{\I}^{(i)}$, where the $A_{\I}^{(i)}$ are off-diagonal matrices on system $\I$, such that $\tilde{\mathcal{D}}_0(A^\star)=(\Delta_0-\Delta_{0,\I})(A^\star)=A^\star$ (if $A^\star$ had any other non-vanishing elements, they would be irrelevant for the optimization problem). 
Denoting $\ket{\vec{k}}=\ket{k_1,\ldots,k_N}$, we can rewrite Eq.~\eqref{eq:DefTildeA} as
\begin{align}\label{eq:ATilde}
    \tilde{A}&= \sum_i \ketbra{i}{i}_0 \otimes \sum_{n \neq m} (A_i)_{nm} \sum_{\vec{k} \in s_N(n)} \sum_{\vec{l} \in s_N(m)} \ketbra{\vec{k},\vec{k}}{\vec{l},\vec{l}}_{\text{odd,even}} \nonumber \\
    &= \sum_i \ketbra{i}{i}_0 \otimes \sum_{n \neq m} (A_i)_{nm} \sum_{\vec{k} \in s_N(n)} \sum_{\vec{l} \in s_N(m)} \ketbra{k_1,\ldots, k_N}{l_1,\ldots, l_N}_{2,\ldots,2N} \otimes \ketbra{k_1,\ldots, k_N}{l_1,\ldots, l_N}_{1,\ldots,2N-1}.
\end{align}
It follows directly that
\begin{subequations}
    \begin{align}
  & \Delta_0 \tilde{A}=\tilde{A},\label{eq:DeltaRelationZero}\\
  &\Delta_{2j}\tilde{A}=\Delta_{2j-1}\tilde{A}  \qquad \forall \, j=1,\ldots, N, \label{eq:DeltaRelation} \\
  & \Delta_{0,1,2,3,\cdots,2N}\tilde{A}=0. \label{eq:FullyDephasedATIlde}
\end{align}
\end{subequations}
Remember that $A_0=A_1=\ldots=A_{N-1}=\tilde{A}$. Using the structure of $\tilde{A}$, and, in particular Eq.~\eqref{eq:DeltaRelation}, notice that for any $j=0,\ldots, N-1$ we have
\begin{align}\label{eq:DjDeltaIdentity}
    \mathcal{D}_j(A_j)&=\left(\Delta_{0,\ldots, 2j}-\Delta_{0,\ldots, 2j}\Delta_{1,\ldots, 2j+1}\right)\tilde{A} \overset{\eqref{eq:DeltaRelation}}{=} \left(\Delta_{0,\ldots, 2j} \Delta_{1,\ldots, 2j-1}-\Delta_{0,\ldots, 2j}\Delta_{1,\ldots, 2j+1}\right)\tilde{A} \nonumber\\
    &= \Delta_0 \Delta_{12} \ldots \Delta_{2j-1,2j} \left(\idChannel-\Delta_{2j+1}\right) \tilde{A}.
\end{align}
Using the telescoping sum,
\begin{align}\label{eq:IdentitySumA}
    \sum_{j=0}^{N-1}\mathcal{D}_j(\tilde{A})&= \mathcal{D}_0(\tilde{A}) + \sum_{j=1}^{N-1} \Delta_0 \Delta_{12} \ldots \Delta_{2j-1,2j} \tilde{A} - \sum_{j=1}^{N-1}\Delta_0 \Delta_{12} \ldots \Delta_{2j-1,2j} \Delta_{2j+1,2j+2} \tilde{A} \nonumber \\
    &= \Delta_0 (\idChannel-\Delta_{12}) \tilde{A} + \Delta_0 \Delta_{12} \tilde{A} -  \Delta_{0}\Delta_{1,2}\ldots \Delta_{2N-1,2N} \tilde{A} \nonumber \\
    &\overset{\eqref{eq:FullyDephasedATIlde}}{=}\Delta_0 \tilde{A} \nonumber\\
    &\overset{\eqref{eq:DeltaRelationZero}}{=}\tilde{A}.
\end{align}
Moreover, we have the following identity for $W$ (as defined in Eq.~\eqref{eq:DefinitionW}) and $U_{\phi}^{(d,N)}$ (as defined in Eq.~\eqref{eq:UnitaryExpandedForm})
\begin{align}\label{eq:CommutingWandU}
    U_{\phi}^{(d,N)} W =\sum_{k=0}^{M-1} e^{i\phi k} \sum_{\Vec{j}\in s_N(k)}\ket{\Vec{j}}_{\odd} \ket{\Vec{j}}_{\even}\bra{kk}_{\I,\II}= W U_{\phi}^{(M,1)},
\end{align}
where $U_{\phi}^{(M,1)}$ is analogously defined to Eq.~\eqref{eq:UnitaryExpandedForm}, but acting on systems $\I,\II$ (of dimension $M$ each) only, i.e.,
\begin{align}
    U_{\phi}^{(M,1)}= \sum_{k=0}^{M-1} e^{i\phi k} \ketbra{n}{n}_{\I} \otimes \mathbb{1}_{\II}.
\end{align}
Thus,
\begin{align} \label{eq:SumUnitaryB}
    \sum_{x=0}^{M-1} \tilde{U}_{\frac{2\pi x}{M}}^{(d,N)} B_{N+1} \left( \tilde{U}_{\frac{2\pi x}{M}}^{(d,N)} \right)^{\dagger} &= \sum_{x=0}^{M-1} \left(\id \otimes U_{\frac{2\pi x}{M}}^{(d,N)}\right) (B^\star \otimes W\Pi W^\dagger)\left(\id \otimes U_{\frac{2\pi x}{M}}^{(d,N)}\right)^\dagger \nonumber \\
    &=B^\star  \otimes \sum_{x=0}^{M-1}  U_{\frac{2\pi x}{M}}^{(d,N)}W\Pi W^\dagger \left(U_{\frac{2\pi x}{M}}^{(d,N)}\right)^{\dagger} \nonumber\\
    &\overset{\eqref{eq:CommutingWandU}}{=} B^\star \otimes W\sum_{x=0}^{M-1} U_{\frac{2\pi x}{M}}^{(M,1)} \Pi \left(U_{\frac{2\pi x}{M}}^{(M,1)}\right)^{\dagger} W^\dagger \nonumber\\
    &=B^\star \otimes M W \Pi W^\dagger,
\end{align}
where in the last line, we used that $\Pi=\sum_{i} \ketbra{ii}{ii}_{\I,\II}$, and as such all $U_{\frac{2\pi x}{M}}^{(M,1)}$ leave it invariant. Recall that for $T=\sum_{i=0}^{M-1} \ket{ii}_{\I,\II}\bra{i}_{\I}$, we have that $\Pi=TT^\dagger$ and $X^{(M,1)}=TY^{(M)}T^\dagger$, and thus, $X^{(d,N)}=WTY^{(M)}T^\dagger W^\dagger$.
Therefore, the inequality constraint in Eq.~\eqref{eq:InqConstraint} can be rewritten as
\begin{align}
    &\rho^T\otimes MX^{(d,N)}-\sum_{x=0}^{M-1} \tilde{U}_{\frac{2\pi x}{M}} B_{N+1} \tilde{U}_{\frac{2\pi x}{M}}^\dagger -\sum_{j=0}^{N-1} \mathcal{D}_j(A_j) \overset{\eqref{eq:IdentitySumA}}{=}\rho^T\otimes MX^{(d,N)}-\sum_{x=0}^{M-1} \tilde{U}_{\frac{2\pi x}{M}} B_{N+1} \tilde{U}_{\frac{2\pi x}{M}}^\dagger -\tilde{A} \nonumber \\
    &\overset{\eqref{eq:SumUnitaryB}}{=} M \, (\id\otimes WT)\left( \rho^T\otimes Y^{(M)}-B^\star \otimes \id \right) (\id\otimes WT)^\dagger -\tilde{A} \nonumber \\
    &\geq M \, (\id\otimes WT) \tilde{\mathcal{D}}_{0}(A^\star) (\id \otimes WT)^\dagger -\tilde{A} \nonumber \\
    &= M \, (\id\otimes WT)A^\star (\id \otimes WT)^\dagger -\tilde{A} \nonumber\\
    &\overset{\eqref{eq:DefTildeA}}{=}0,
\end{align}
where the inequality follows since $(A^\star, B^\star)$ is a feasible solution to Eq.~\eqref{eq:BigUnitaryDualY}. As such, all the constraints in Eq.~\eqref{eq:DualSDPCost} are satisfied by the variables we constructed. Therefore, the tuples $(B^\star,B_2, \ldots, B_{N+1})$ and $(\tilde{A},\ldots, \tilde{A})$ provide a feasible solution to Eq.~\eqref{eq:DualSDPCost}. Since the objective function of the dual problem in Eq.~\eqref{eq:DualSDPCost} can be computed from $B_1=B^\star$ only, we find that

\begin{align}
    \tilde{C}_{\min}^{(d,N)}(\rho)\geq \Tr{B^\star} \overset{\eqref{eq:BigUnitaryDualY}}{=} \min_{\mathcal{M} \in \MIO} \Tr{Y^{(M)} \mathcal{M}(\rho)}  = C_{\min}^{(M,1)}(\rho),
\end{align}
where $M=(d-1)N+1$.
\end{proof}

Lastly, we will have to show that the bound in Theorem~\ref{thm:OptimalCostN-LowerBound} is achievable. We do this by explicit construction. The lower bound on the average cost in Theorem~\ref{thm:OptimalCostN-LowerBound} corresponds to the average cost of a phase estimation protocol that probes a higher-dimensional unitary $V_\phi^{(M)}$. Our next step is thus to construct a quantum network that can implement such a unitary from the $N$ copies of $V_\phi^{(d)}$. We do this in the following Lemma, which is a generalization of the case with $d=2$ presented in Ref.~\cite{vanDam2007b}.

\begin{lemma}\label{lem:PhaseShiftCurcuit}
    There exists a network of MIO channels that converts $N$ copies of $V_\phi^{(d)}$ into a single copy of $V_\phi^{(M)}$, where $M=(d-1)N+1$. The network is shown in Fig.~\ref{fig:PhaseShiftCurcuitD}. 
\end{lemma}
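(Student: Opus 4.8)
The plan is to prove the lemma constructively: I will exhibit the network drawn in Fig.~\ref{fig:PhaseShiftCurcuitD} and then check two things — that every interlacing channel $\mathcal{M}_0,\dots,\mathcal{M}_N$ occurring in it lies in $\MIO$, and that when the $N$ slots are filled with copies of $V_\phi^{(d)}$ the induced supermap equals the unitary channel $\rho\mapsto V_\phi^{(M)}\rho\,(V_\phi^{(M)})^{\dagger}$ for every value of $\phi$. The construction generalizes the $d=2$ circuit of Ref.~\cite{vanDam2007b}.

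The combinatorial device is the elementary fact that every integer $0\le k\le M-1=(d-1)N$ admits a decomposition $k=a_1(k)+\dots+a_N(k)$ with $0\le a_j(k)\le d-1$, e.g. via the greedy rule $a_j(k)=\min\!\bigl(d-1,\max(0,\,k-(d-1)(j-1))\bigr)$ (for $d=2$ this gives $a_j(k)=1$ iff $j\le k$, recovering the known circuit). The network carries along an $M$-dimensional memory register holding $|k\rangle$. The preprocessing channel $\mathcal{M}_0$ implements the basis-to-basis isometry $|k\rangle\mapsto|k\rangle|a_1(k)\rangle$ and routes the $d$-dimensional second factor into the first slot; after $V_\phi^{(d)}$ acts there the state is $e^{i\phi a_1(k)}|k\rangle|a_1(k)\rangle$. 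Each subsequent $\mathcal{M}_j$ for $1\le j\le N-1$ re-addresses the slot register conditioned on the memory, $|k\rangle|a_j(k)\rangle\mapsto|k\rangle|a_{j+1}(k)\rangle$, extended to a permutation of the incoherent basis of the full $M\times d$ space; since permutations of the incoherent basis carry phases through unchanged, after the $N$-th copy of $V_\phi^{(d)}$ the accumulated phase is $e^{i\phi\sum_j a_j(k)}=e^{i\phi k}$ and the state is $e^{i\phi k}|k\rangle|a_N(k)\rangle$. Finally $\mathcal{M}_N$ maps $|k\rangle|a_N(k)\rangle\mapsto|k\rangle\otimes|0\rangle$ (again extended to a permutation of the incoherent basis) and discards the $d$-dimensional ancilla, leaving $e^{i\phi k}|k\rangle$, i.e. the action of $V_\phi^{(M)}$.

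To see that each $\mathcal{M}_j$ is in $\MIO$, recall that $\MIO$ is precisely the set of channels that send incoherent states to incoherent states; each $\mathcal{M}_j$ is a composition of a unitary permuting the incoherent basis, an isometry sending incoherent basis vectors to distinct incoherent basis vectors, and (for $\mathcal{M}_N$) a partial trace, and each of these manifestly maps incoherent states to incoherent states, so all $\mathcal{M}_j$ lie in $\MIO$. For the correctness claim I would simply propagate a generic pure input $\sum_k c_k|k\rangle$ through the network slot by slot as above; since the output is the pure-state map $|k\rangle\mapsto e^{i\phi k}|k\rangle$ for every $\phi$ and no information about $\phi$ survives in any discarded system, the induced channel is exactly $\mathcal{V}_\phi^{(M)}$. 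I expect the only delicate point to be the bookkeeping of the memory register: one must verify that after the last copy of $V_\phi^{(d)}$ the memory is cleanly disentangled from the slot, so that $\mathcal{M}_N$ can discard the ancilla without leaving residual correlations that would spoil unitarity of the output. The greedy decomposition is exactly what makes $a_N(k)$ a well-defined function of $k$, which is what permits the reversible re-addressing steps and the final uncomputation; once this is in place the remainder is routine. I would also note that the network uses exactly $N$ copies of $V_\phi^{(d)}$, as claimed.
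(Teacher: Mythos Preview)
Your construction is correct: the greedy decomposition $k=\sum_j a_j(k)$ with $a_j(k)\in\{0,\dots,d-1\}$, together with the memory-and-readdressing scheme, does implement $V_\phi^{(M)}$ from $N$ sequential uses of $V_\phi^{(d)}$, and each interlacing map is a composition of an incoherent-basis isometry/permutation and a partial trace, hence in $\MIO$. The injectivity check you flag (that $(k,a_j(k))\mapsto(k,a_{j+1}(k))$ extends to a permutation of the full $M\!\times\!d$ basis) goes through because the first coordinate already distinguishes all $M$ relevant pairs.

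However, your network is \emph{not} the one in Fig.~\ref{fig:PhaseShiftCurcuitD}, so you have established the existence clause of the lemma but not the specific claim about the depicted circuit. The paper's construction is different in spirit: it embeds the $M$-dimensional system into $l{+}1$ qudits via the base-$d$ expansion $|k\rangle\mapsto|k_0,\dots,k_l\rangle$, applies a basis permutation $U_\pi$, then applies $V_\phi^{(d)}$ in \emph{parallel} to the separate qudit wires --- $d^i$ times to wire $i$ for $i<l$ and the remaining $M'=N-\tfrac{d^l-1}{d-1}$ times to wire $l$ --- before undoing $U_\pi$ and disembedding. The nontrivial step there is showing that a suitable permutation $\pi$ exists so that the accrued phase $m_0+m_1d+\dots+m_{l-1}d^{l-1}+m_l M'$ equals $k$; this requires a small counting argument. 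Your sequential memory-register approach sidesteps that combinatorics entirely and is arguably more elementary, at the price of an explicit $M$-dimensional memory wire threaded through all $N$ slots. The paper's version, by contrast, mirrors the familiar parallel structure of textbook phase estimation and makes the connection to Fig.~\ref{fig:PhaseEstimationCircuitShor} more transparent.
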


\begin{proof}
The idea of the proof is to show that the circuit in Fig.~\ref{fig:PhaseShiftCurcuitD} satisfies the claims of the Lemma. To this end, we begin by describing its components.  First, define the integer
    \begin{align}
    &l:= \max \left\{ k\geq 0: \sum_{i=0}^{k-1} d^i= \frac{d^{k}-1}{d-1} \leq N  \right\} \label{eq:DefinitionLQudits}.
\end{align}
As such we have $\tfrac{d^l-1}{d-1} \leq N<  \tfrac{d^{l+1}-1}{d-1}$. This implies that 
the system on which $V_\phi^{(M)}$ acts, i.e., $\text{span}\{\ket{k}: 0\leq k\leq  (d-1)N \}$, has a lower dimension than the space consisting of $l+1$ qudits (of dimension $d$). Expressing any $0\leq k\leq  (d-1)N$ in base $d$ as $k=k_0+k_1d^1+\ldots+ k_l d^l$, we define an embedding isometry, which is the first step of the network in Fig.~\ref{fig:PhaseShiftCurcuitD}, as
\begin{align}
    W_e=\sum_{k=0}^{(d-1)N}\ket{k_0,k_1,\cdots,k_l}_{0,\ldots, l}\bra{k}_{\I}.
\end{align}

Next, we apply a permutation on the computational basis (not on the individual digits $k_i$ of $k$), 
\begin{align}
    U_\pi=\sum_{k=0}^{d^{l+1}-1} \ketbra{\pi(k)}{k}_{0,\ldots, l},
\end{align}
where $\pi$ denotes a permutation that we will fix later.
In the next step, for $i\in\{0,\cdots,l-1\}$, we apply $V^{(d)}_\phi$ $d^i$-times to the $i$-th qudit, see Fig.~\ref{fig:PhaseShiftCurcuitD}. By our choice of $l$, this uses less than $N$ copies of $V^{(d)}_\phi$. The remaining $M^\prime:=N-\tfrac{d^l-1}{d-1}$ copies are applied to qudit $l$. Note that this can always be done using a network consisting of SWAP gates and identity channels, which are in MIO. After an application of $U_{\pi}^\dagger$, we go back to the system on which $V_\phi^{(M)}$ acts, via the measurement implementing the inverse of $W_e$,
\begin{align}
    \mathcal{M}(\rho)=\Pi_1\rho\Pi_1 + \Pi_2\rho \Pi_2,
\end{align}
where $\Pi_1=\sum_{k=0}^{(d-1)N} \ketbra{k}{k_0,k_1,\cdots,k_l}$ and $\Pi_2=\sum_{k=(d-1)N+1}^{d^{l+1}-1} \ketbra{0}{k_0,k_1,\cdots,k_l}$. Verifying that this measurement is a MIO channel is straightforward.

\begin{figure}[ht]
    \centering
    \scalebox{0.6}{\includegraphics[width=1\linewidth]{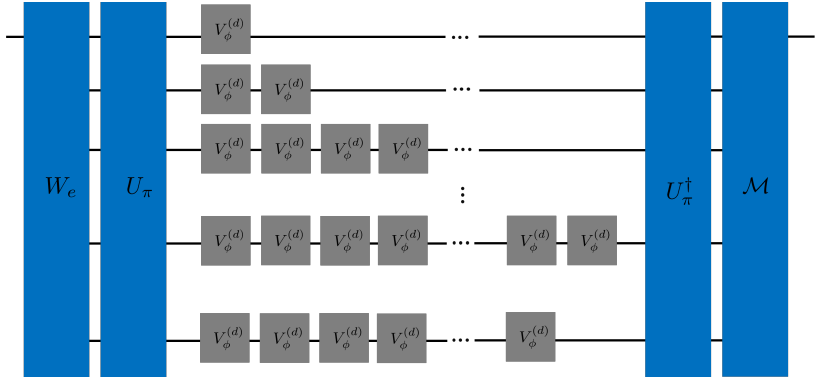}}
    \caption{Implementation of the unitary $V_\phi^{(M)}$ via a MIO network using  $N$ copies of $V_\phi^{(d)}$ in total. For more details, see the proof of Lemma~\ref{lem:PhaseShiftCurcuit}. }
    \label{fig:PhaseShiftCurcuitD}
\end{figure}

We will now show that there exists indeed a permutation $\pi$ such that the circuit we described above implements $V_\phi^{(M)}$. Let $m(k):=\pi(k)$ be expressed in base $d $ as $m(k):=m_0(k)+m_1(k) d^1+\ldots +m_{l-1}(k)d^{l-1}+m_l(k) d^l$. Any basis state $\ket{k}$ with $0\le k\le (d-1) N$ undergoes the following evolution
\begin{align}\label{eq:PhaseShiftCurcuitPhase}
    \ket{k}& \overset{W_e}{\mapsto} \ket{k_0,k_1,\ldots,k_l}\nonumber \\
    & \overset{U_\pi}{\mapsto} \ket{m_0(k),\ldots, m_{l-1}(k), m_l(k)} \nonumber\\
    &\mapsto e^{i\phi [m_0(k)+m_1(k) d^1+\ldots +m_{l-1}(k)d^{l-1}+m_l(k) M^\prime]}  \ket{m_0(k),\ldots, m_{l-1}(k), m_l(k)} \nonumber \\
    &\overset{U_\pi^\dagger}{\mapsto}e^{i\phi [m_0(k)+m_1(k) d^1+\ldots +m_{l-1}(k)d^{l-1}+m_l(k) M^\prime]} \ket{k_0,k_1,\ldots,k_l}\nonumber \\
    &\overset{\mathcal{M}}{\mapsto}e^{i\phi [m_0(k)+m_1(k) d^1+\ldots +m_{l-1}(k)d^{l-1}+m_l(k) M^\prime]} \ket{k}.
\end{align}
Let $H$ denote the set of integers that the multipliers for the phase in the exponent in Eq.~\eqref{eq:PhaseShiftCurcuitPhase}  attain, i.e.,
\begin{align}
    H&= \{ n\geq 0: n=m_0(k)+m_1(k) d^1+\ldots +m_{l-1}(k)d^{l-1}+m_l(k) M^\prime, 0\leq m_i(k) \leq d-1\} \nonumber \\
    &= \{ n: 0\leq n \leq (d-1) N \},
\end{align}
where last line follows directly, since, by construction, $0\leq m_0+m_1 d^1+\ldots +m_{l-1}d^{l-1} \leq d^l-1$ and $0\leq m_l M^\prime \leq (d-1) \left(N-\tfrac{d^l-1}{d-1}  \right)$. Hence, the circuit in Fig.~\ref{fig:PhaseShiftCurcuitD} produces exactly the same phases as $V_\phi^{(M)}$. 

It remains to ensure that they are applied to the correct basis states. To this end, we choose a $\pi$ that maps any $0\le k \le (d-1)N$ to a $\pi(k)=m(k)$ such that $m_0(k)+m_1(k) d^1+\ldots +m_{l-1}(k)d^{l-1}+m_l(k) M^\prime=k$. Such a permutation always exists: We simply match the total of $M$ basis states $\ket{k}$ with the correct phase $m_0(k)+m_1(k) d^1+\ldots +m_{l-1}(k)d^{l-1}+m_l(k) M^\prime=k \in H$. For an explicit construction of the permutation in the case of $d=2$, see Ref.~\cite{vanDam2007b}.\qedhere

\end{proof}

To conclude the proof of Theorem~\ref{thm:OptAvgCostSM}, we now combine Lemma~\ref{lem:ConstrainedOneUnitarySM}, Lemma~\ref{lem:PhaseShiftCurcuit}, and Theorem~\ref{thm:OptimalCostN-LowerBound}.

\begin{lem}
    The bounds in Theorem~\ref{thm:OptimalCostN-LowerBound} are achievable, i.e., the minimal achievable average cost for any pair $(d,N)$ and input state $\rho$ is given by
    \begin{align}
        C_{\min}^{(d,N)}(\rho) = \min_{\mathcal{M}\in \MIO} \Tr{Y^{(M)} \mathcal{M}(\rho)},
    \end{align}
    where $M=(d-1)N+1$ and
    \begin{align}
        Y^{(M)}=\sum_{n,m=0}^{M-1} \ketbra{n}{m} \int \dPhi C(\phi) e^{i\phi(n-m)}.
    \end{align}
\end{lem}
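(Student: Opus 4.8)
The plan is to prove the matching upper bound by writing down an explicit phase estimation protocol that is a MIO network in the sense of Definition~\ref{def:MIONetwork} and whose average cost equals $C_{\min}^{(M,1)}(\rho)$; together with the lower bound of Theorem~\ref{thm:OptimalCostN-LowerBound} this yields the asserted equality, and the protocol so obtained is precisely the one depicted in Fig.~\ref{fig:SimpleCircuit}. First I would observe that the minimum $\min_{\mathcal{M}\in\MIO}\Tr{Y^{(M)}\mathcal{M}(\rho)}$ is attained: once the input and output dimensions are fixed by $\rho$ and $Y^{(M)}$, the Choi states of MIO channels form a compact convex set (cut out by $J_\mathcal{M}\ge 0$, the trace-preservation conditions, and the linear constraint $\Delta_0 J_\mathcal{M}=\Delta_0\Delta_1 J_\mathcal{M}$) and the objective is linear, so an optimizer $\mathcal{M}^\star\in\MIO$ exists.

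Next I would assemble the protocol. By Lemma~\ref{lem:PhaseShiftCurcuit} there is a MIO network converting the $N$ copies of $V_\phi^{(d)}$ into a single copy of $V_\phi^{(M)}$. Precomposing its first channel with $\mathcal{M}^\star$ (padded by identity channels on the remaining input wires) again yields a MIO network, since $\MIO$ contains $\idChannel$ and is closed under sequential and parallel composition. After the implemented unitary $V_\phi^{(M)}$ acts on $\mathcal{M}^\star(\rho)$, I measure in the Fourier basis and, on outcome $x$, assign $\hat\phi_x=\tfrac{2\pi x}{M}$, exactly as in the single-copy protocol of Fig.~\ref{fig:OneUnitaryOptimal} analysed in Lemma~\ref{lem:ConstrainedOneUnitarySM}. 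The quantum Fourier transform $\mathcal{F}$ is not in $\MIO$, but it may be absorbed into the terminal measurement, and a measurement (a POVM feeding a classical register) can never create coherence; hence the full protocol satisfies the MIO-network constraint of Eq.~\eqref{eq:MIONetworkConstraint}.

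Finally, the outcome distribution is $p^{(M,1)}(x|\phi,\rho)=\bra{x}\mathcal{F}^\dagger V_\phi^{(M)}\mathcal{M}^\star(\rho){V_\phi^{(M)}}^\dagger\mathcal{F}\ket{x}$, and the calculation in Eq.~\eqref{eq:explitProtocol}, repeated verbatim with $d$ replaced by $M$, shows that the resulting average cost equals $\Tr{Y^{(M)}\mathcal{M}^\star(\rho)}=C_{\min}^{(M,1)}(\rho)$. Thus $C_{\min}^{(d,N)}(\rho)\le C_{\min}^{(M,1)}(\rho)$, which with Theorem~\ref{thm:OptimalCostN-LowerBound} closes the argument; because the lower bound was in fact derived for the larger class of MIO-compatible supermaps, the same value is obtained if $C_{\min}^{(d,N)}$ is defined via optimization over that class. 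The only delicate point is the bookkeeping needed to check that interlacing $\mathcal{M}^\star$, the network of Lemma~\ref{lem:PhaseShiftCurcuit}, and the Fourier measurement with the $N$ copies of $V_\phi^{(d)}$ in the correct causal order indeed realizes the claimed input--output transformation; the coherence-theoretic content is routine, so this is diagram-chasing rather than a conceptual hurdle.
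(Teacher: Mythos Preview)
Your proposal is correct and follows essentially the same route as the paper: construct the explicit protocol by precomposing the MIO network of Lemma~\ref{lem:PhaseShiftCurcuit} with an optimizer $\mathcal{M}^\star$, measure in the Fourier basis with estimates $\hat\phi_x=2\pi x/M$, invoke the computation of Eq.~\eqref{eq:explitProtocol} with $d$ replaced by $M$, and close the sandwich with Theorem~\ref{thm:OptimalCostN-LowerBound}. Your added remarks on compactness (existence of $\mathcal{M}^\star$) and on absorbing $\mathcal{F}$ into the measurement are not in the paper's proof of this Lemma but are correct and appear elsewhere in the paper.
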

\begin{proof}
    Starting from our initial generic phase estimation protocol depicted in Fig.~\ref{fig:PhaseEstimationProtocolsSM}, whose minimal average cost is given by Eq.~\eqref{eq:MinimalAverageCostRho}, we can always choose a specific protocol to obtain an upper bound. Let $\mathcal{M}^\star \in \MIO$ be such that
    \begin{align}
        \Tr{Y^{(M)} \mathcal{M}^\star(\rho)}=\min_{\mathcal{M}\in \MIO} \Tr{Y^{(M)} \mathcal{M}(\rho)},
    \end{align}
    where the dimensions of $\mathcal{M}$ are fixed by $\rho$ and $Y^{(M)}$. Choose a measurement $\{M_k\}_{0\leq k\leq (d-1)N}=\{\mathcal{F} \ketbra{k}{k} \mathcal{F}^\dagger \}_{0\leq k\leq (d-1)N}$, where $\mathcal{F}$ is the quantum Fourier transformation of dimension $M$ as defined in Eq.~\eqref{eq:QFT}. Using the network from Lemma~\ref{lem:PhaseShiftCurcuit}, allows us to implement the unitary $V_\phi^{(M)}$, which we probe with the state $\mathcal{M}^\star(\rho)$ and subsequently measure with $\{M_k\}$ to produce measurement outcomes with distribution $p^{(d,N)}(x|\phi, \rho)=\bra{x} \, \mathcal{F}^\dagger V_\phi^{(M)}\mathcal{M}^\star(\rho) {V_\phi^{(M)}}^\dagger\mathcal{F}\ket{x}$. Let the phase estimates be given by $\{ \hat{\phi}_x=\tfrac{2\pi x}{M} \}$. Then the average cost in Eq.~\eqref{eq:MinimalAverageCostRho} is upper bounded by
    \begin{align}
        C_{\min}^{(d,N)}(\rho) \leq \sum_{x=0}^{M-1} \int \dPhi C(\phi-\tfrac{2\pi x}{M}) p^{(d,N)}(x|\phi,\rho) =  \Tr{Y^{(M)} \mathcal{M}^\star(\rho)},
    \end{align}
    where the last equality has been shown in Lemma~\ref{lem:ConstrainedOneUnitarySM} in Eq.~\eqref{eq:explitProtocol} (if one sets $d=M$). Together with Theorem~\ref{thm:OptimalCostN-LowerBound}, this yields
    \begin{align}
          \min_{\mathcal{M}\in \MIO} \Tr{Y^{(M)} \mathcal{M}(\rho)} \leq \tilde{C}_{\min}^{(d,N)}(\rho)\leq C_{\min}^{(d,N)}(\rho) \leq   \Tr{Y^{(M)} \mathcal{M}^\star(\rho)}=\min_{\mathcal{M}\in \MIO} \Tr{Y^{(M)} \mathcal{M}(\rho)},
    \end{align}
    which completes the proof.
\end{proof}

\section{Optimal phase estimation without coherence constraints}\label{sec:OptimalPhaseEstimationNoConstraints}
As discussed in the main text, our results include the results of Ref.~\cite{vanDam2007} on optimal phase estimation protocols without any coherence constraints as a special case. In their scenario, they consider the optimization problem in Eq.~\eqref{eq:MinimalAverageCostRho} but drop the constraint that the network is composed of MIO channels. Equivalently, if we supply a sufficient amount of coherence (for example, many copies of maximally coherent states), each MIO channel can, together with one of the maximally coherent states, implement an arbitrary channel, and we recover the scenario of Ref.~\cite{vanDam2007}. In particular, one does not need an unbounded amount of coherence to achieve this. As a direct consequence of Theorem~\ref{thm:OptAvgCostSM}, the most amount of coherence (for any cost function) needed to achieve optimal phase estimation is given by a maximally coherent state of dimension $\ket{\Psi_M^+}=\tfrac{1}{\sqrt{M}} \sum_{i=0}^{M-1} \ket{i}$, since this allows to prepare any state, see Ref.~\cite{Baumgratz2014}, thus
\begin{align}\label{eq:unconstraint}
    C_{\min}^{(M,1)}(\Psi_M^+)= \min_{\mathcal{M} \in \MIO} \Tr{Y^{(M)} \mathcal{M}(\Psi_M^+)}= \min_{\tau\geq 0, \Tr{\tau}=1} \Tr{Y^{(M)} \tau} =\lambda_{\min}\left( Y^{(M)}\right).
\end{align}
As we demonstrate in Section~\ref{sec:HolevoCostFunction}, supplying a maximally coherent state to achieve the optimal average cost is not necessarily required.

\section{Optimal phase estimation under MIO-preserving superchannels}\label{sec:PhaseEstimationMIOpreserving}

As mentioned in Section~\ref{sec:MIOpreserving}, we want to briefly discuss the consequences of considering MIO-preserving superchannels instead of completely MIO-preserving superchannels in their application to phase estimation. In particular, consider a single-qubit unitary $V_\phi^{(2)}$ and the Holevo cost function $C(\phi)=4\sin^2(\phi/2)$ (periodized variance, see also Ref.~\cite{Holevo2011}). According to Section~\ref{sec:OptimalPhaseEstimationNoConstraints}, the optimal input state is given by $\ket{+}$ and produces a minimal average cost of $C_{\min}^{(2,1)}=1$. Revisiting the optimization problem for the average cost posed in Eq.~\eqref{eq:MinimalAverageCostRhoRelaxed}, we now define the minimal average cost under MIO-preserving superchannels as
\begin{subequations}
\begin{alignat}{2}
    \hat{C}_{\min}^{(2,1)}(\rho)=&\inf   \quad && \sum_{x=0}^{\tilde{M}-1} \int \dPhi \,  C\left(\phi -\hat{\phi}_x\right) \Tr{\left( \rho^T \otimes J_\phi^T\otimes M_x\right) J_{\mathcal{S}_1}} \\
        & \suchthat && \tilde{M}\geq 1\\
        & && \lbrace \hat{\phi}_x \rbrace_{0\leq x\leq \tilde{M}-1} \subset [0,2\pi) \\
        & && \sum_{x=0}^{\tilde{M}-1} M_x =\mathbb{1} \\
        & && M_x \geq 0   \\
        & && J_{\mathcal{S}_1}\quad \text{the Choi state of a MIO-preserving superchannel}.
\end{alignat}
\end{subequations}
However, by considering MIO-preserving but not completely MIO-preserving superchannels we trivialize the problem of phase estimation in our setting, in the sense that we can always perform ideal phase estimation regardless of the supplied coherence in $\rho$. To see why this is the case, consider the MIO-preserving superchannel defined in Eq.~\eqref{eq:SuperChannelMIOpNOTcMIOP}. Additionally, we can map the outcomes encoded on system $3$ to a qubit system and obtain (the also MIO-preserving superchannel)
\begin{align}
    J_{\mathcal{S}_1}= \frac{1}{2} \id_0 \otimes \left( \ketbra{\Phi^+}{\Phi^+}_{1,2}+\ketbra{\Psi^+}{\Psi^+}_{1,2} \right) \otimes \ketbra{0}{0}_3 + \left(\ketbra{\Phi^-}{\Phi^-}_{1,2} +\ketbra{\Psi^-}{\Psi^-}_{1,2} \right)\otimes \ketbra{1}{1}_3.
\end{align}
Now choose phase estimates $\hat{\phi}_x= \pi  x$ and $M_x=\ketbra{x}{x}$ to obtain
\begin{align}
    &\hat{C}_{\min}^{(2,1)}(\rho) \leq    \sum_{x=0}^{1} \int \dPhi \,  C\left(\phi -\pi x\right) \Tr{\left( \rho^T \otimes J_\phi^T\otimes M_x\right) J_{\mathcal{S}_1}} \nonumber \\
    &= \frac{1}{2}\int \dPhi \,  C\left(\phi\right) \Tr{J_{\phi}^T \left( \ketbra{\Phi^+}{\Phi^+}_{1,2}+\ketbra{\Psi^+}{\Psi^+}_{1,2} \right)}+\frac{1}{2}\int \dPhi \,  C\left(\phi\right) \Tr{J_{\phi+\pi x}^T \left( \ketbra{\Phi^-}{\Phi^-}_{1,2}+\ketbra{\Psi^-}{\Psi^-}_{1,2} \right)} \nonumber \\
    &= \Tr{ \left(\int \dPhi J_\phi^T \right) \ketbra{\Phi^+}{\Phi^+}_{1,2} }= \Tr{T Y^{(2)} T^\dagger \ketbra{\Phi^+}{\Phi^+}_{1,2}}  \nonumber\\
    &= \Tr{Y^{(2)} \ketbra{+}{+}} =1, 
\end{align}
where we used that $T=\sum_i \ketbra{ii}{i}$ in the second-to-last line. This upper bound coincides with the lower bound, which we obtain by considering no coherence constraints at all (see Section~\ref{sec:OptimalPhaseEstimationNoConstraints}). Thus, we find that $\hat{C}_{\min}^{(2,1)}(\rho)=1$ irrespective of $\rho.$ This highlights the fact that we do not necessarily require local coherence to do optimal phase estimation as long as we encode the phase to an entangled state (and perform a global measurement) as in this example.

It is rather surprising that, despite never ``pulling" something out of the superchannel in this application to phase estimation, we encounter a stark difference between MIO-preserving and completely MIO-preserving superchannels. As long as we do not constrain the internal workings of such an algorithm, we can generate, manipulate, and detect all possible quantum resources internally, encode the final outcome in a classical system, and return it, which is exactly what happens if we consider the MIO-preserving superchannel in the example above. In contrast, imposing that the superchannel has to be completely MIO-preserving, we naturally constrain the superchannel in a way that allows us to describe the application to phase estimation in a meaningful manner. More generally, since any quantum algorithm in computing can be written in a way that has a classical input and a classical output only, we believe that this is required to investigate the role of quantum resources in computation in a more general setting.

\section{Task-tailored coherence monotones}
As introduced in the main text, for an arbitrary but fixed cost function $C$ we define the functional
\begin{align}\label{eq:AdvantageSM}
    \mathcal{A}^{(M)}(\rho)&:= \max_{\mathcal{M}\in \MIO} \Tr{\left( \Delta Y^{(M)}-Y^{(M)}\right)\mathcal{M}(\rho)} =  C_0-C_{\min}^{(M,1)}(\rho),
\end{align}
where $Y^{(M)}$ is defined as in Eq.~\eqref{eq:CostMatrixDefinitionSM}. In this section, we show that every bit of coherence is a resource for phase estimation and provide a bound on the minimal achievable average cost.

\begin{thm}\label{thm:AdvantageCoherenceSM}
    The functionals $\mathcal{A}^{(M)}$ are convex coherence monotones, i.e., $\mathcal{A}^{(M)}(\mathcal{N}(\rho))\leq \mathcal{A}^{(M)}(\rho)$  for all $\mathcal{N} \in \MIO$ and all states $\rho$. For any cost matrix $Y^{(M)}$ with $Y^{(M)}\neq \Delta Y^{(M)} $, the monotones are faithful, i.e.,  $\mathcal{A}^{(M)}(\rho)\geq 0$ with equality iff $\rho \in \I$.
\end{thm}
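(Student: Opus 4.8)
The plan is to dispatch the three claims in turn, using throughout the representation $\mathcal{A}^{(M)}(\rho)=\max_{\mathcal{M}\in\MIO}\Tr{(\Delta Y^{(M)}-Y^{(M)})\mathcal{M}(\rho)}=C_0-C_{\min}^{(M,1)}(\rho)$ recorded in Eq.~\eqref{eq:AdvantageSM}, together with the facts that $\MIO$ is closed under composition and that $\MIO$ channels preserve $\I$. Monotonicity is then immediate: for $\mathcal{N}\in\MIO$ and any $\mathcal{M}\in\MIO$ the composite $\mathcal{M}\circ\mathcal{N}$ is again a $\MIO$ channel whose input dimension is fixed by $\rho$ and whose output dimension is $M$, hence a feasible point in the optimization defining $\mathcal{A}^{(M)}(\rho)$; taking the maximum over $\mathcal{M}$ gives $\mathcal{A}^{(M)}(\mathcal{N}(\rho))\le\mathcal{A}^{(M)}(\rho)$. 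Convexity is also immediate: for each fixed $\mathcal{M}$ the functional $\rho\mapsto\Tr{(\Delta Y^{(M)}-Y^{(M)})\mathcal{M}(\rho)}$ is linear, so $\mathcal{A}^{(M)}$ is a pointwise maximum of affine functions and therefore convex, i.e.\ $\mathcal{A}^{(M)}(\sum_i p_i\rho_i)\le\sum_i p_i\,\mathcal{A}^{(M)}(\rho_i)$.

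Next I would establish nonnegativity and the easy direction of faithfulness. For any $\sigma\in\I$ and any $\mathcal{M}\in\MIO$, the output $\mathcal{M}(\sigma)$ is incoherent; since the diagonal of the Toeplitz matrix $Y^{(M)}$ equals $C_0$, this gives $\Tr{Y^{(M)}\mathcal{M}(\sigma)}=C_0\Tr{\mathcal{M}(\sigma)}=C_0$, so $C_{\min}^{(M,1)}(\sigma)=C_0$ and hence $\mathcal{A}^{(M)}(\sigma)=0$. Combining this with the monotonicity just proved and with $\Delta\in\MIO$ (which maps any $\rho$ to $\Delta(\rho)\in\I$) yields $\mathcal{A}^{(M)}(\rho)\ge\mathcal{A}^{(M)}(\Delta(\rho))=0$ for every state $\rho$.

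The substantive step is the remaining implication: if $\rho\notin\I$ then $\mathcal{A}^{(M)}(\rho)>0$. Using the Toeplitz structure $Y^{(M)}_{nm}=\int\dPhi C(\phi)e^{i\phi(n-m)}$, the hypothesis $Y^{(M)}\neq\Delta Y^{(M)}$ means that $y_k:=\int\dPhi C(\phi)e^{i\phi k}\neq 0$ for some integer $k$ with $1\le k\le M-1$. Since $\rho\notin\I$ there are input basis indices $i\neq j$ with $\rho_{ij}\neq 0$. I would then define a channel $\mathcal{M}$ with output space $\mathbb{C}^M$ through the Kraus operators $K_0=\ketbra{k}{i}+e^{i\theta}\ketbra{0}{j}$ and $K_n=\ketbra{k}{n}$ for every other input index $n$, with a phase $\theta$ to be fixed; a short check (using $k\neq 0$) shows $\sum_x K_x^\dagger K_x=\id$ and that diagonal inputs go to diagonal outputs, so $\mathcal{M}\in\MIO$. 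Its output $\mathcal{M}(\rho)$ is supported on $\mathrm{span}\{\ket{k},\ket{0}\}$, has unit trace, and satisfies $\bra{k}\mathcal{M}(\rho)\ket{0}=e^{-i\theta}\rho_{ij}$; since $Y^{(M)}_{00}=Y^{(M)}_{kk}=C_0$, a direct computation gives $\Tr{Y^{(M)}\mathcal{M}(\rho)}=C_0+2\,\real\!\bigl(e^{i\theta}\,y_k\,\overline{\rho_{ij}}\bigr)$. Choosing $\theta$ so that $e^{i\theta}y_k\overline{\rho_{ij}}=-|y_k|\,|\rho_{ij}|$, one obtains $C_{\min}^{(M,1)}(\rho)\le C_0-2|y_k|\,|\rho_{ij}|<C_0$, hence $\mathcal{A}^{(M)}(\rho)\ge 2|y_k|\,|\rho_{ij}|>0$, which completes the argument.

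I expect the only real obstacle to be this last construction, namely (i) locating a nonvanishing off-diagonal entry of $Y^{(M)}$ and steering the coherence of an arbitrary $\rho\notin\I$ into the corresponding two-dimensional output subspace while staying inside $\MIO$, and (ii) using the freedom in the phase $\theta$ to make the off-diagonal contribution strictly decrease the average cost. Everything else is bookkeeping with the identity $\mathcal{A}^{(M)}=C_0-C_{\min}^{(M,1)}$, closure of $\MIO$ under composition, and the fact that $\MIO$ preserves $\I$.
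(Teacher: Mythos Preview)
Your proof is correct and follows essentially the same strategy as the paper: monotonicity and convexity from closure of $\MIO$ under composition and linearity, nonnegativity and the easy direction of faithfulness from $\Delta\in\MIO$ and $\MIO$ preserving $\I$, and the substantive direction by steering a nonzero off-diagonal element of $\rho$ onto a nonzero off-diagonal entry of $Y^{(M)}$ with an appropriate phase. The only difference is in the explicit $\MIO$ channel used for this last step: the paper composes three incoherent maps (a pinching onto the $\{j,k\}$-block, a basis permutation sending $j\mapsto 0$ and $k\mapsto l$, and a diagonal phase unitary), whereas you build a single channel via the Kraus operators $K_0=\ketbra{k}{i}+e^{i\theta}\ketbra{0}{j}$ and $K_n=\ketbra{k}{n}$; both yield the same lower bound $\mathcal{A}^{(M)}(\rho)\ge 2|y_k|\,|\rho_{ij}|$, and your single-channel construction has the minor advantage of handling the input dimension of $\rho$ and the output dimension $M$ without any implicit embedding.
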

\begin{proof}
    Let $\mathcal{N} \in \MIO$. Since MIO is closed under concatenations of MIO channels, we have monotonicity under MIO channels of the advantage $\mathcal{A}^{(M)}$ as
    \begin{align}
        \mathcal{A}^{(M)}(\mathcal{N}\rho)=\max_{\mathcal{M}\in \MIO} \Tr{\left( \Delta Y^{(M)}-Y^{(M)}\right)\mathcal{M}\mathcal{N}(\rho)} \leq  \max_{\mathcal{M}\in \MIO} \Tr{\left( \Delta Y^{(M)}-Y^{(M)}\right)\mathcal{M}(\rho)}= \mathcal{A}^{(M)}(\rho).
    \end{align}
    These monotones are  convex since
    \begin{align}\label{eq:convex}
        \mathcal{A}^{(M)}(s\sigma+t\tau)&=\max_{\mathcal{M}\in\MIO} \Tr{\left( \Delta Y^{(M)}-Y^{(M)}\right) \mathcal{M}(s\sigma+t\tau)} \nonumber\\
        &= \max_{\mathcal{M}\in\MIO} \left(s \Tr{\left( \Delta Y^{(M)}-Y^{(M)}\right) \mathcal{M}(\sigma)}+t \Tr{\left( \Delta Y^{(M)}-Y^{(M)}\right) \mathcal{M}(\tau)} \right)\nonumber\\
        &\leq \max_{\mathcal{M}\in\MIO} s \Tr{\left( \Delta Y^{(M)}-Y^{(M)}\right) \mathcal{M}(\sigma)}+\max_{\mathcal{M}\in\MIO} t \Tr{\left( \Delta Y^{(M)}-Y^{(M)}\right) \mathcal{M}(\tau)}\nonumber \\
        &=s \mathcal{A}^{(M)}(\sigma)+t \mathcal{A}^{(M)}(\tau).
    \end{align}
Non-negativity of the monotones follows since $\mathcal{M}=\Delta\in \MIO$, thus
\begin{align}
	\mathcal{A}^{(M)}(\rho)=\max_{\mathcal{M}\in\MIO} \Tr{\left( \Delta Y^{(M)}-Y^{(M)}\right) \mathcal{M}(\rho)}\ge \Tr{\left( \Delta Y^{(M)}-Y^{(M)}\right) \Delta \rho}=0.
\end{align}
If $\sigma\in\text{I}$, then
\begin{align}
	\mathcal{A}^{(M)}(\sigma)=\max_{\mathcal{M}\in\MIO} \Tr{\left( \Delta Y^{(M)}-Y^{(M)}\right) \mathcal{M}(\sigma)}= \max_{\mathcal{M}\in\MIO} \Tr{\left( \Delta Y^{(M)}-Y^{(M)}\right) \Delta \mathcal{M}(\sigma)}=0.
\end{align}
If $\rho\notin \text{I}$ and $\Delta Y^{(M)} \neq Y^{(M)} $, we now show that $\mathcal{A}^{(M)}(\rho)>0$. From these two assumptions follows that there exists a pair $j\ne k$ such that $\rho_{j,k}=|\rho_{j,k}|e^{-i\theta}\ne0$ and an $l>0$ such that $Y_{0,l}^{(M)}=|Y_{0,l}^{(M)}|e^{-i\varphi}\ne0$ (it is sufficient to consider elements in the first row since $Y^{(M)}$ is a Toeplitz matrix). In particular, we can choose the integers $j,k,l$ such that
\begin{subequations}
\begin{align}
    &|\rho_{j,k}|= \max_{j\neq k} |\rho_{j,k}|, \label{eq:RhoMaxElements} \\
    &|Y_{0,l}^{(M)}| =  \max_{0<l\le M-1} |Y_{0,l}^{(M)}|.
\end{align}
\end{subequations}
Now let us define a channel $\mathcal{N}$ by Kraus operators
\begin{subequations}
    \begin{align}
		K=&\ketbra{j}{j}+\ketbra{k}{k},  \\
		L_i=&\ketbra{i}{i} \quad \forall i\notin  \{j,k\}.
	\end{align}
\end{subequations}
Then 
	\begin{align}
		K^\dagger K+\sum_{i\notin\{j,k\}}L_i^\dagger L_i=\mathbb{1}.
	\end{align}
This defines a MIO channel. Without loss of generality let $j\ne l$ and $k\ne 0$. Let $P_\pi$ be the unitary corresponding to the permutation $\pi$ that exchanges the zeroth and the $j$-th element and the $k$-the and $l$-th and leaves all other elements unaffected. Moreover, let 
	\begin{align}
		U=-e^{i(\theta+\varphi)} \ketbra{0}{0}+\sum_{i>0}\ketbra{i}{i}.
	\end{align}
    Both unitaries $P_\pi$ and $U$ are incoherent. Therefore, since the concatenation of free channels is free,
    \begin{align}
        \mathcal{A}^{(M)}(\rho)&= \max_{\mathcal{M}\in \MIO} \Tr{\left( \Delta Y^{(M)}-Y^{(M)}\right)\mathcal{M}(\rho)} \nonumber\\
        &\geq \Tr{\left( \Delta Y^{(M)}-Y^{(M)}\right) \mathcal{U}\mathcal{P}_\pi \mathcal{N}(\rho)} \nonumber\\
        &\overset{\eqref{eq:RhoMaxElements}}{=} \Tr{\left( \Delta Y^{(M)}-Y^{(M)}\right) \mathcal{U}\mathcal{P}_\pi \left(\rho_{j,k}\ketbra{j}{k}+\rho_{k,j}\ketbra{k}{j}\right)} \nonumber\\
        &= -\Tr{Y^{(M)} \mathcal{U}  \left(\rho_{j,k}\ketbra{0}{l}+\rho_{k,j}\ketbra{l}{0}\right)}\nonumber \\
        &= \Tr{Y^{(M)} e^{i(\theta+\varphi)}  \left(\rho_{j,k}\ketbra{0}{l}+\rho_{k,j}\ketbra{l}{0}\right)} \nonumber\\
        &= 2|\rho_{j,k}| |Y_{0,l}^{(M)}| \nonumber\\
        &= 2 \max_{0< l\le M-1} \left|Y_{0,l}^{(M)}\right| \max_{i\ne j} |\rho_{i,j}| \label{eq:BoundAdvantageElements}\nonumber\\
        &>0,
    \end{align}
    where the second-to-last line follows from our choice of matrix elements.
\end{proof}

Unless the cost function is constant and thus trivial, there always exists a sufficiently large $M$ such that the corresponding cost matrix $Y^{(M)}$ is not diagonal. Hence, for any cost function, there exists a minimal $M$ (equivalently a pair $(d,N)$) such that every bit of coherence is useful for phase estimation and, thus, for algorithms that use it as a subroutine. The operational advantage coherence provides is directly quantified by Eq.~\eqref{eq:AdvantageSM}.

Next, we provide a lower bound on the minimal achievable average cost, which separates into a coherence-dependent and a problem-specific part.

\begin{prop}\label{prop:lowerBoundWeight}
Let $D$ denote the set of all quantum states. The generalized robustness~\cite{Napoli2016} and weight~\cite{Bu2018} of coherence are given by 
\begin{align}
    &W(\rho)= \min_{\tau, \sigma} \left\lbrace w\geq 0 : \rho=w\tau+(1-w)\sigma, \sigma \in \I, \tau \in D  \right\rbrace\, , \\
    &C_R(\rho)= \min_{\tau, \sigma} \left\lbrace r\geq 0 : \rho +r \tau= (1+r) \sigma, \in \I, \tau \in D  \right\rbrace.
\end{align}
Then, the average cost is lower bounded by 
    \begin{align}
        C_{\min}^{(M,1)}(\rho)\geq \lambda_{\min}\left(Y^{(M)}\right)+\left(C_0-\lambda_{\min}\left(Y^{(M)}\right) \right)\left(1-W(\rho)\right).
    \end{align}
    For $M=2$, and $\rho$ a qubit state, the average cost is exactly given by 
    \begin{align}
         C_{\min}^{(2,1)}(\rho)= C_0 - \left(C_0-\lambda_{\min}\left(Y_C^{(d)}\right)\right) C_{R}(\rho).
    \end{align}
\end{prop}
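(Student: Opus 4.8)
The plan is to reduce both statements to the SDP characterization $C_{\min}^{(M,1)}(\rho)=\min_{\mathcal{M}\in\MIO}\Tr{Y^{(M)}\mathcal{M}(\rho)}$ from Theorem~\ref{thm:OptAvgCostSM}, together with two elementary facts about the Toeplitz matrix~\eqref{eq:CostMatrixDefinitionSM}: every diagonal entry equals $C_0$, so $\Delta(Y^{(M)})=C_0\,\id$ and $\Tr{Y^{(M)}\sigma}=C_0$ for all $\sigma\in\I$, which forces $\lambda_{\min}(Y^{(M)})\le C_0$; and $Y^{(M)}\ge 0$, since $v^\dagger Y^{(M)}v=\int\dPhi C(\phi)\,\big|\sum_n \bar v_n e^{i\phi n}\big|^2\ge 0$ for all $v$.

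\textbf{Weight bound.} First I would fix an optimal weight decomposition $\rho=w\,\tau+(1-w)\,\sigma$ with $w=W(\rho)$, $\sigma\in\I$, $\tau$ a state. For any $\mathcal{M}\in\MIO$ we have $\mathcal{M}(\sigma)\in\I$, hence $\Tr{Y^{(M)}\mathcal{M}(\sigma)}=C_0$, while $\Tr{Y^{(M)}\mathcal{M}(\tau)}\ge\lambda_{\min}(Y^{(M)})$ because $\mathcal{M}(\tau)$ is a state. By linearity,
\begin{align}
\Tr{Y^{(M)}\mathcal{M}(\rho)}\ \ge\ w\,\lambda_{\min}(Y^{(M)})+(1-w)\,C_0 ,
\end{align}
and minimizing over $\mathcal{M}$ gives $C_{\min}^{(M,1)}(\rho)\ge W(\rho)\,\lambda_{\min}(Y^{(M)})+(1-W(\rho))\,C_0$, which is the asserted inequality. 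Since $C_0\ge\lambda_{\min}(Y^{(M)})$, the right-hand side decreases in $w$, which explains why the minimal mixing weight $W(\rho)$ is the right quantity (equivalently, this is the statement $\mathcal{A}^{(M)}(\rho)\le(C_0-\lambda_{\min}(Y^{(M)}))W(\rho)$).

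\textbf{Exact $M=2$ identity.} For $M=2$, $Y^{(2)}$ has all diagonal entries $C_0$ and off-diagonal entry $y=\int\dPhi C(\phi)e^{-i\phi}$, so $Y^{(2)}-\Delta(Y^{(2)})$ has spectrum $\{\pm|y|\}$ and $\lambda_{\min}(Y^{(2)})=C_0-|y|$; in particular $C_0-\lambda_{\min}(Y^{(2)})=|y|=\lambda_{\max}\!\big(Y^{(2)}-\Delta(Y^{(2)})\big)$. Using $C_{\min}^{(2,1)}(\rho)=C_0-\mathcal{A}^{(2)}(\rho)$ from Eq.~\eqref{eq:AdvantageSM}, it suffices to prove $\mathcal{A}^{(2)}(\rho)=|y|\,C_R(\rho)$. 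For the upper bound, take an optimal robustness decomposition $\rho+C_R(\rho)\,\tau=(1+C_R(\rho))\,\sigma$, $\sigma\in\I$, $\tau$ a state; applying any $\mathcal{M}\in\MIO$ and using that $\mathcal{M}(\sigma)$ is diagonal while $\Delta(Y^{(2)})-Y^{(2)}$ has vanishing diagonal gives
\begin{align}
\Tr{\big(\Delta(Y^{(2)})-Y^{(2)}\big)\mathcal{M}(\rho)}=C_R(\rho)\,\Tr{\big(Y^{(2)}-\Delta(Y^{(2)})\big)\mathcal{M}(\tau)}\ \le\ C_R(\rho)\,|y| ,
\end{align}
since $\mathcal{M}(\tau)$ is a state. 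For the matching lower bound I would use the qubit identity $C_R(\rho)=2|\rho_{01}|$, which is immediate from the definition (making $\rho+r\tau$ diagonal forces $|\tau_{01}|=|\rho_{01}|/r\le\tfrac12$), and evaluate $\mathcal{A}^{(2)}$ on the incoherent unitary $U=\diag(1,e^{i\alpha})$: one finds $\Tr{\big(\Delta(Y^{(2)})-Y^{(2)}\big)U\rho U^\dagger}=-2\,\real\!\big(\bar y\,\rho_{01}e^{-i\alpha}\big)$, and choosing $\alpha$ to align the phases yields $\mathcal{A}^{(2)}(\rho)\ge 2|y|\,|\rho_{01}|=|y|\,C_R(\rho)$. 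Combining the two bounds gives $\mathcal{A}^{(2)}(\rho)=|y|\,C_R(\rho)$, hence $C_{\min}^{(2,1)}(\rho)=C_0-\big(C_0-\lambda_{\min}(Y^{(2)})\big)C_R(\rho)$.

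\textbf{Main obstacle.} The delicate point, and the reason the closed form does not survive for $M>2$, is that the upper-bound step run for general $M$ only yields $\mathcal{A}^{(M)}(\rho)\le C_R(\rho)\,\lambda_{\max}\!\big(Y^{(M)}-\Delta(Y^{(M)})\big)$, and only for $M=2$ does $\lambda_{\max}\!\big(Y^{(M)}-\Delta(Y^{(M)})\big)$ coincide with $C_0-\lambda_{\min}(Y^{(M)})$ (the spectrum of $Y^{(2)}$ being symmetric about $C_0$), while simultaneously the optimal MIO map in $\mathcal{A}^{(2)}$ can be taken to be an incoherent \emph{unitary}, which is exactly what makes the lower bound built from $\max_{i\ne j}|\rho_{ij}|$ (cf.\ the proof of Theorem~\ref{thm:AdvantageCoherenceSM}) match. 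For $M>2$ neither the robustness nor the weight bound is tight in general, so only the inequality of the first part survives.
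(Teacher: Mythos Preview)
Your proof is correct and takes a more elementary route than the paper on both parts.

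For the weight bound, the paper pulls the optimization back through the adjoint $\mathcal{M}^\dagger$, relaxes $Z=\mathcal{M}^\dagger(Y^{(M)}-\lambda_{\min}\id)$ to the constraints $Z\ge 0$, $\Delta Z=(C_0-\lambda_{\min})\id$, and then identifies the resulting problem with the dual SDP of the weight of coherence (Eq.~\eqref{eq:dualSDPWight}). You instead work directly with an optimal primal decomposition $\rho=w\tau+(1-w)\sigma$ and use only that $\mathcal{M}(\sigma)\in\I$ and $\Tr{Y^{(M)}\,\cdot\,}\ge\lambda_{\min}$ on states. This is shorter and avoids the dual SDP entirely; the paper's route, on the other hand, makes the appearance of the weight as a \emph{relaxation} of the MIO constraint more explicit.

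For the $M=2$ identity the two proofs are again structurally different. The paper obtains the lower bound on $\mathcal{A}^{(2)}$ by invoking the explicit MIO construction (Kraus projector plus permutation plus diagonal phase) from the faithfulness proof of Theorem~\ref{thm:AdvantageCoherenceSM}, together with the Gershgorin circle theorem to relate $|Y_{0,1}^{(2)}|$ to $C_0-\lambda_{\min}$; its upper bound goes via $\sum_{k\ne l}|Y_{k,l}(\mathcal{M}\rho)_{l,k}|$, the identity $C_{l_1}=C_R$ for qubits, and MIO-monotonicity of $C_R$. You compute the $2\times 2$ spectrum directly (so Gershgorin is unnecessary), get the upper bound from an optimal \emph{robustness} decomposition and $\lambda_{\max}(Y^{(2)}-\Delta Y^{(2)})=|y|$, and get the matching lower bound with a single diagonal unitary. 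Your ``main obstacle'' paragraph correctly pinpoints why neither argument extends to $M>2$: the robustness route only gives $\mathcal{A}^{(M)}\le C_R(\rho)\,\lambda_{\max}(Y^{(M)}-\Delta Y^{(M)})$, which no longer coincides with $(C_0-\lambda_{\min})C_R(\rho)$.
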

\begin{proof}

We start with
    \begin{align}
        C_{\min}^{(M,1)}(\rho)&= \min_{\mathcal{M} \in \MIO} \Tr{Y\mathcal{M}(\rho)} \nonumber\\
        &= \min_{\mathcal{M} \in \MIO} \Tr{ \left(\lambda_{\min}\left(Y^{(M)}\right)\id+Y^{(M)}-\lambda_{\min}\left(Y^{(M)}\right)\id\right)\mathcal{M}(\rho)}\nonumber\\
        &=\lambda_{\min}\left(Y^{(M)}\right)+ \min_{\mathcal{M}} \left\lbrace \Tr{Z \rho} : Z=\mathcal{M}^\dagger\left(Y^{(M)}-\lambda_{\min}\left(Y^{(M)}\right)\id\right),\, \mathcal{M}\in \MIO\right\rbrace \nonumber\\
        &\geq \lambda_{\min}\left(Y^{(M)}\right)+ \min_{Z} \left\lbrace \Tr{Z\rho} : Z \geq 0, \Delta Z= \left(C_0-\lambda_{\min}\left(Y^{(M)}\right)\right) \id\right\rbrace \label{eq:RelaxedWeightBound}
    \end{align}
where we relaxed the optimization in the last line. This follows from using that $Z\geq 0$ since $Y^{(M)}-\lambda_{\min}\left(Y^{(M)}\right) \geq 0$. Moreover, invoking the fact that $\mathcal{M} \in \MIO$ and thus
$\Delta Z=\Delta\mathcal{M}^\dagger(Y^{(M)}-\lambda_{\min}\left(Y^{(M)}\right) \id)=\Delta\mathcal{M}^\dagger \Delta(Y^{(M)}-\lambda_{\min}\left(Y^{(M)}\right) \id)=(C_0-\lambda_{\min}\left(Y^{(M)}\right)) \id$ leads to the second constraint. Next, note that the weight of coherence can be computed using the SDP~\cite{Bu2018} 
\begin{align}\label{eq:dualSDPWight}
    W(\rho)&= \max \left\lbrace \Tr{(-Z)\rho}+1 | Z\geq 0, \Delta(Z)=\id  \right\rbrace
\end{align}
which can be rewritten as
\begin{align}
    W(\rho)&= \max \left\lbrace \Tr{(-Z)\rho}+1 : Z\geq 0, \Delta(Z)=\id  \right\rbrace =1 - \min \left\lbrace \Tr{Z\rho} : Z\geq 0, \Delta(Z)=\id  \right\rbrace.
\end{align}
Rescaling the optimization problem in Eq.~\eqref{eq:RelaxedWeightBound} yields 
    \begin{align}
        C_{\min}^{(M,1)}(\rho)\geq \lambda_{\min}\left(Y^{(M)}\right)+\left(C_0-\lambda_{\min}\left(Y^{(M)}\right) \right)\left(1-W(\rho)\right),
    \end{align}
which finishes the first part of the proof. 

For the second part, note that according to the Gershgorin circle theorem, there exists an $i\in\{0,\cdots,M-1\}$ such that
\begin{align}\label{eq:Gershgorin}
    C_0-\lambda_{\min}\left(Y_C^{(M)}\right) \le& \sum_{ j\ne i} \left|Y_{i,j}^{(M)}\right|  \nonumber\\
    \le& (M-1) \max_{j\ne i} \left|Y_{i,j}^{(M)}\right|  \nonumber\\
    \le& (M-1) \max_{0< l\le M-1}  \left|Y_{0,l}^{(M)}\right|.
\end{align}
Let $d=\dim(\rho)$, then using the $l_1$-norm of coherence~\cite{Baumgratz2014} yields
\begin{align}
    C_{l_1}(\rho)=\sum_{i\ne j} |\rho_{i,j}| \le d(d-1) \max_{i\ne j } |\rho_{i,j}|,
\end{align}
which is achieved for $d=2$. Therefore, using Eq.~\eqref{eq:BoundAdvantageElements}, we have
\begin{align}\label{eq:AdvantageLowerBoundElements}
    \mathcal{A}^{(M)}(\rho) \overset{\eqref{eq:BoundAdvantageElements}}{\geq}2 \max_{0\le l\le M-1} \left|Y_{0,l}^{(M)} \right| \max_{i\ne j} |\rho_{i,j}| \ge \left(C_0-\lambda_{\min}\left(Y_C^{(M)}\right)\right)\frac{2C_{l_1}(\rho)}{d(d-1)(M-1)}.
\end{align}

Note that $\left|Y_{0,1}^{(2)}\right| = \left(C_0-\lambda_{\min}\left(Y_C^{(2)}\right)\right)$ and that the robustness of coherence and the $l_1$-norm of coherence coincide for qubits, see Ref.~\cite[Theorem 4]{Piani2016}.
For $M=d=2$, we thus have that
\begin{align}\label{eq:AdvantageUpperBound2}
   \mathcal{A}^{(2)}(\rho)=& \max_{\mathcal{M}\in \MIO} \Tr{\left( \Delta Y^{(2)}-Y^{(2)}\right)\mathcal{M}(\rho)}\nonumber\\
    \leq& \max_{\mathcal{M}\in \MIO} \sum_{k\neq l} |Y_{k,l}^{(2)} (\mathcal{M}(\rho))_{l,k}| \nonumber\\
    =&  2 \left|Y_{0,1}^{(2)}\right| \max_{\mathcal{M}\in \MIO}  |(\mathcal{M}(\rho))_{1,0}| \nonumber\\
    =&  \left|Y_{0,1}^{(2)}\right|\max_{\mathcal{M}\in \MIO}  C_{l_1}(\mathcal{M}(\rho)) \nonumber\\
    =&  \left|Y_{0,1}^{(2)}\right|\max_{\mathcal{M}\in \MIO}  C_{R}(\mathcal{M}(\rho)) \nonumber\\
    =&\left(C_0-\lambda_{\min}\left(Y_C^{(2)}\right)\right) C_{R}(\rho).
\end{align}
Combining Eq.~\eqref{eq:AdvantageLowerBoundElements} and Eq.~\eqref{eq:AdvantageUpperBound2} for $d=M=2$ concludes the proof.
\end{proof}
In particular, we note that the bound in Proposition~\ref{prop:lowerBoundWeight} is sharp. To this end, let $\rho= p\ketbra{\Psi^+}{\Psi^+} +(1-p) \frac{\id}{M}$, where $\Psi^+$ denotes the maximally coherent state of dimension $M$. Clearly, we have $W(\rho)=p$, and 
\begin{align}
        C_{\min}^{(M,1)}(\rho)= p \lambda_{\min}\left(Y^{(M)}\right)+(1-p)C_0 = \lambda_{\min}\left(Y^{(M)}\right) W(\rho) +C_0 \left((1-W(\rho)\right),
\end{align}
which is exactly the bound in Proposition~\ref{prop:lowerBoundWeight}. Note that the weight of coherence has the peculiar property that for every incoherent pure state the weight is zero, whilst for any other pure state the weight is one, which trivializes the bound in Proposition~\ref{prop:lowerBoundWeight} for all pure states. Lastly, let us consider the asymptotic behavior of the lower bound in Proposition~\ref{prop:lowerBoundWeight}. In particular, we consider a fixed resource state $\rho$ with fixed dimension and take the limit of $M\to \infty$ (or equivalently the limit of infinitely many copies of the unitary). Assuming that the cost function has $ \essinf_\phi C(\phi)=0$ (which is the case for common cost functions) the minimal eigenvalue of the Toeplitz matrix $Y^{(M)}$ asymptotically goes to zero, see Ref.~\cite{Serra1996}. Thus, using the bound from Proposition~\ref{prop:lowerBoundWeight} we find that
\begin{align}
    \lim_{M\to \infty} C_{\min}^{(M,1)}(\rho) &\geq   \lim_{M\to \infty} \left[\lambda_{\min}\left(Y^{(M)}\right)+\left(C_0-\lambda_{\min}\left(Y^{(M)}\right) \right)\left(1-W(\rho)\right) \right] \nonumber \\
    &= C_0\left(1-W(\rho)\right).
\end{align}
Thus, the weight of coherence provides a fundamental lower bound to the minimal average cost in the limit of infinitely many copies of the unitary $V_\phi^{(d)}$. In contrast, in the unconstrained case, it follows from Eq.~\eqref{eq:unconstraint} that the minimal average cost approaches zero in the limit $M\to \infty.$

\section{Example: Holevo cost function}\label{sec:HolevoCostFunction}
In this section, we consider the specific example of the Holevo cost function $C(\phi)=4\sin^2(\phi/2)$, also known as periodized variance (see also Ref.~\cite{Holevo2011}). In Fig.~\ref{fig:WeightBoundTightness}, we present a comparison between the bound in Proposition~\ref{prop:lowerBoundWeight} and the weight of coherence for random states.
\begin{figure}[ht]
\begin{subfigure}{0.5\textwidth}
    \centering
    \includegraphics[width=1\linewidth]{WeightBound.png}
    \caption{Comparison of the bound in Proposition~\ref{prop:lowerBoundWeight} (solid line) to the exact values of the minimal average cost of random states of dimension $d=5$ (blue dots) for the cost function $C(\phi)=4\sin^2(\phi/2)$ and $M=5$. }
    \label{fig:WeightBoundTightness}
\end{subfigure}%
\begin{subfigure}{0.5\textwidth}
    \centering
    \includegraphics[width=1\linewidth]{RobustnessPlot.png}
    \caption{Comparison between the exact minimal average cost for the cost function $C(\phi)=4\sin^2(\phi/2)$ and $M=5$ and the normalized robustness, i.e., $C_R(\rho)/(d-1)$, for random states of dimension $d=5$.}
    \label{fig:Robustness}
\end{subfigure}
\end{figure}
The proportionality of the minimal average cost to the (normalized) robustness of coherence breaks down in higher dimensions as depicted in Fig.~\ref{fig:Robustness}. We leave the question whether there exists a similar bound to Proposition~\ref{prop:lowerBoundWeight} in terms of the robustness (for specific classes of cost functions) to future work. Lastly, we want to emphasize that a maximally coherent state is not necessarily required to achieve the minimal average cost. Consider the Holevo cost function, then the cost matrix is $Y^{(M)}=2\id -  \sum_{k=0}^{M-1} \left( \ketbra{k}{k+1}+\ketbra{k+1}{k} \right)$. As shown for example in Ref.~\cite{vanDam2007b}, the minimal eigenvalue (and thus the coherence-unconstrained minimal average cost) and its corresponding eigenstate $\ket{\nu}$ are given by
\begin{subequations}
    \begin{align}
    &\lambda_{\min}\left(Y^{(M)}\right)=2-2\cos\left(\frac{\pi}{d+1}\right)=4 \sin^2\left( \frac{\pi}{2(d+1)}\right), \\
    &\ket{\nu}= \sqrt{\frac{2}{d+1}}\sum_{j=0}^{d-1} \sin\left( \pi \frac{j+1}{d+1}\right).
\end{align}
\end{subequations}
Clearly $\ket{\nu}$ is not a maximally coherent state.

\section{Extended discussion}
Firstly, we want to discuss two assumptions on the phase estimation problem we made in Section~\ref{sec:optimalPhaseEstimation}. 
Like previous works~\cite{vanDam2007,vanDam2007b} do, we only consider cost functions of the form $C(\phi-\hat{\phi}_x)$. For many practical applications, such as in computation or in metrology, this is sufficient. However, in general, one could consider cost functions of the form $C(\phi,\hat{\phi}_x)$ that do not depend only on the difference between the two variables. Moreover, we assumed a uniform prior distribution of the phase. Our methods rely heavily on these assumptions. More technically, we use that the prior distribution $p(\phi)$ and cost function $C(\phi,\hat{\phi}_x)$ can be expressed as $p(\phi)C(\phi,\hat{\phi}_x)=f(\phi-\hat{\phi}_x)$ for some function $f$. It is unclear whether the minimal average cost with respect to arbitrary cost functions $C(\phi,\hat{\phi}_x)$ and non-uniform priors $p(\phi)$ can still be computed by a semidefinite program if this is no longer the case. It would be interesting to further explore this question.

As promised in the main text, we now discuss the role of dynamic coherence, i.e., non-free channels or even non-free supermaps, in phase estimation and why we choose to focus on static resources. To this end, let us first consider a scenario where we are given a single non-free channel $\mathcal{N}$ together with multiple copies of the black-box unitary $V_\phi^{(d)}$. Just like we did for static coherence (a coherent input state $\rho$), we insert the copies of the black-box unitary and the coherent channel $\mathcal{N}$ into a free supermap to optimally infer information on the phase. Since quantum channels cannot be stored, the placement of the channels in the (causally-ordered) supermap is crucial here: If we were to place the channel that can generate coherence after the unitaries, the coherence it generates cannot be used to encode any information. By extension, the optimal placement for the coherent channel is therefore before all the unitaries. Thus, the only change compared to the scenario in which we consider static coherence is that rather than considering a single incoherent channel $\mathcal{M}$ in Theorem~\ref{thm:OptAvgCostSM}, which optimizes $C_{\min}(\rho) = \min_{\mathcal{M} \in \MIO} \Tr{\mathcal{M}(\rho) Y^{(M)}}$, we must consider an incoherent superchannel $\mathcal{S}_1$ (with trivial input system $0$) processing the supplied coherent channel, thereby optimizing $\Tr{\mathcal{S}_1[\mathcal{N}]Y^{(M)}}$.

More generally, if we supply multiple non-free channels (or even resourceful supermaps), the optimal placement, and thus their usage, is not clear. The intuition that one simply uses all the available dynamical resources at the beginning of the protocol to generate as much coherence as possible and then proceeds to probe the black-box unitaries is not correct. For instance, if we are given a superchannel that has a non-zero coherence generation capacity at the first tooth but is resource-destroying in the second tooth (e.g., it returns the fully dephased input), it is better to use the coherence that can be generated in the beginning to encode information about the phase by probing one copy of the black-box unitary with it and to leave the second tooth of the superchannel as it is.

Lastly, we want to briefly discuss the role of (multi-partite) entanglement in the optimal phase estimation protocol that achieves the minimal average cost in Theorem~\ref{thm:OptAvgCostSM}. Recalling the implementation of the unitary $V_\phi^{(M)}$ (from $N$ copies of $V_\phi^{(d)}$) in Lemma~\ref{lem:PhaseShiftCurcuit}, and in particular the embedding isometry $W_e$, we see that this isometry transforms (local) coherence into multi-partite entanglement. The role of entanglement in phase (and more generally parameter estimation) has been studied extensively, particularly in the context of metrology~\cite{Wineland1996,Braunstein1994,Huelga1997,Augusiak2016}, and has been identified as a resource. We want to emphasize that our results are not in conflict with this since coherence and entanglement can be interconverted, which is exactly what the optimal phase estimation protocol here does.

\end{document}